\documentclass[
times,
aps,
twocolumn,
showpacs,
superscriptaddress,
prx,
floatfix]{revtex4-1}

\usepackage{graphicx}
\usepackage{array}
\usepackage{color}
\usepackage{upgreek}
\usepackage{float}
\usepackage{enumerate}
\usepackage{enumitem}
\usepackage{lipsum}
\usepackage{booktabs}
\usepackage{url}
\usepackage{braket}

\usepackage{amsthm}
\usepackage{bm}
\usepackage[T1]{fontenc}
\usepackage{scrextend}
\setlength\parskip{0.85em}
\usepackage{hyperref}
\usepackage[dvipsnames]{xcolor}
\definecolor{C1}{RGB}{52, 89, 149}
\definecolor{C2}{RGB}{251, 77, 61}
\definecolor{C3}{RGB}{3, 206, 164}
\definecolor{C4}{RGB}{202, 21, 81}
\hypersetup{colorlinks=true, linkcolor=C2, citecolor=C2, urlcolor=C2}
\usepackage{dsfont}
\usepackage{epstopdf}
\usepackage{tikz}
\usepackage{mathrsfs}
\usepackage[caption=false]{subfig}
\usepackage{fdsymbol}
\usepackage{thm-restate}

\usepackage{accents}


\newtheorem{mydef}{Definition}

\theoremstyle{remark}

\newcommand*{\ups}{\Upsilon}
\newcommand*{\nn}{\nonumber}

\newcommand*{\id}{\mathds{1}}

\newcommand*{\mc}{\mathcal}
\newcommand*{\dg}{\dagger}
\newcommand*{\ex}{\mathrm{e}}

\newcommand*{\pur}{\tr[\ups_{B    R_1}^2]}
\DeclareMathOperator{\tr}{tr}


\begin{document}
\title[]{An Operational Metric for Quantum Chaos and the Corresponding Spatiotemporal Entanglement Structure} 

\author{Neil Dowling}
\email[]{neil.dowling@monash.edu}
\address{School of Physics \& Astronomy, Monash University, Clayton, VIC 3800, Australia}

\author{Kavan Modi}
\email[]{kavan.modi@monash.edu}
\address{School of Physics \& Astronomy, Monash University, Clayton, VIC 3800, Australia}
\affiliation{Centre for Quantum Technology, Transport for New South Wales, Sydney, NSW 2000, Australia}

\date{\today}
\pacs{}

\begin{abstract}
    Chaotic systems are highly sensitive to a small perturbation, and are ubiquitous throughout biological sciences, physical sciences and even social sciences. Taking this as the underlying principle, we construct an operational notion for quantum chaos. 
    Namely, we demand that the future state of a many-body, isolated quantum system is sensitive to past multitime operations on a small subpart of that system. By `sensitive', we mean that the resultant states from two different perturbations cannot easily be transformed into each other. That is, the pertinent quantity is the complexity of the effect of the perturbation within the final state.
    From this intuitive metric, which we call the Butterfly Flutter Fidelity, we use the language of multitime quantum processes to identify a series of operational conditions on chaos, in particular the scaling of the spatiotemporal entanglement. Our criteria already contain the routine notions, as well as the well-known diagnostics for quantum chaos. 
    This includes the Peres-Loschmidt Echo, Dynamical Entropy, Tripartite Mutual Information, and Local-Operator Entanglement. We hence present a unified framework for these existing diagnostics within a single structure. We also go on to quantify how several mechanisms lead to quantum chaos, such as evolution generated from random circuits. Our work paves the way to systematically study many-body dynamical phenomena like Many-Body Localization, measurement-induced phase transitions, and Floquet dynamics. 
\end{abstract}

\keywords{Suggested keywords}

\maketitle

\section{Introduction}
Chaos as a principle is rather direct; a butterfly flutters its wings, which leads to an effect much bigger than itself. In other words, something small leading to a very big effect. This effect arises in a vast array of fields, from economics and ecology to meteorology and astronomy, spanning disciplines and spatiotemporal scales.

Chaos at the microscale, on the other hand, is an exception. Quantum chaos is not well understood and lacks a universally accepted classification. There is a vast web of, often inconsistent, quantum chaos diagnostics in the literature~\cite{Kudler-Flam2020}, which leads to a muddy picture of what this concept actually means. In contrast, classically chaos is a relatively complete framework. If one perturbs the initial conditions of a chaotic dynamical system, they see an exponential deviation of trajectories in phase space, quantified by a Lyapunov exponent. Trying to naively extend this to quantum Hilbert space immediately falls short of a meaningful notion of chaos, as the unitarity of isolated quantum dynamics leads to a preservation of fidelity with time. How then, can there possibly be non-linear effects resulting from the linearity of Schr\"odinger's equation? We will see that the structure of entanglement holds the key to this conundrum.

Yet, much effort has been made to understand quantum chaos primarily as the cause of classical chaos~\cite{BerryTabor1977,Lindblad1986chaos,reichl2021transition,Haake2018-cs}, to identify the properties that an underlying quantum system requires in order to exhibit chaos in its semiclassical limit. An example of this is the empirical connection between random matrix theory and the Hamiltonians of classically chaotic systems~\cite{BerryTabor1977}. Recently, with experimental access to complex many-body quantum systems with no meaningful classical limit, and given progress in related problems such as the black hole information paradox~\cite{Hayden_Preskill_2007,Shenker_Stanford_2014} and the quantum foundations of statistical mechanics~\cite{Popescu2006,Gogolin,Rigol2016}, quantum chaos as a research program has seen renewed interest across a range of research communities. In this context, a complete structure of quantum chaos, independent of any classical limit, is highly desirable but remains absent. 

In this work we approach quantum chaos from an operational, and theory agnostic,
principle: \emph{Chaos is a deterministic phenomenon, where the future state has a strong sensitivity to a local perturbation in the past.} For quantum processes the key ingredient will turn out to be spatiotemporal entanglement. To get there, we first identify the underlying definition of chaos as a starting point, and build a quantum butterfly flutter process from this fundamental principle. With this, we construct a genuinely quantum measure for chaos, based solely on this statement, which we term the \emph{Butterfly Flutter Fidelity}. This relies on the intuition that it is the complexity induced by a perturbation in the resultant future pure state, rather than just orthogonality, that dictates a chaotic effect. We adapt this principle into the theory of quantum processes and exploit their multitime structures. Namely, we use a tool from quantum information theory -- process-state duality -- to determine a hierarchy of necessary conditions on meaningful notions of chaos in many-body systems. These conditions culminate into the novel, strong metric of the Butterfly Flutter Fidelity.

\begin{figure*}
  \includegraphics[width=0.99\textwidth]{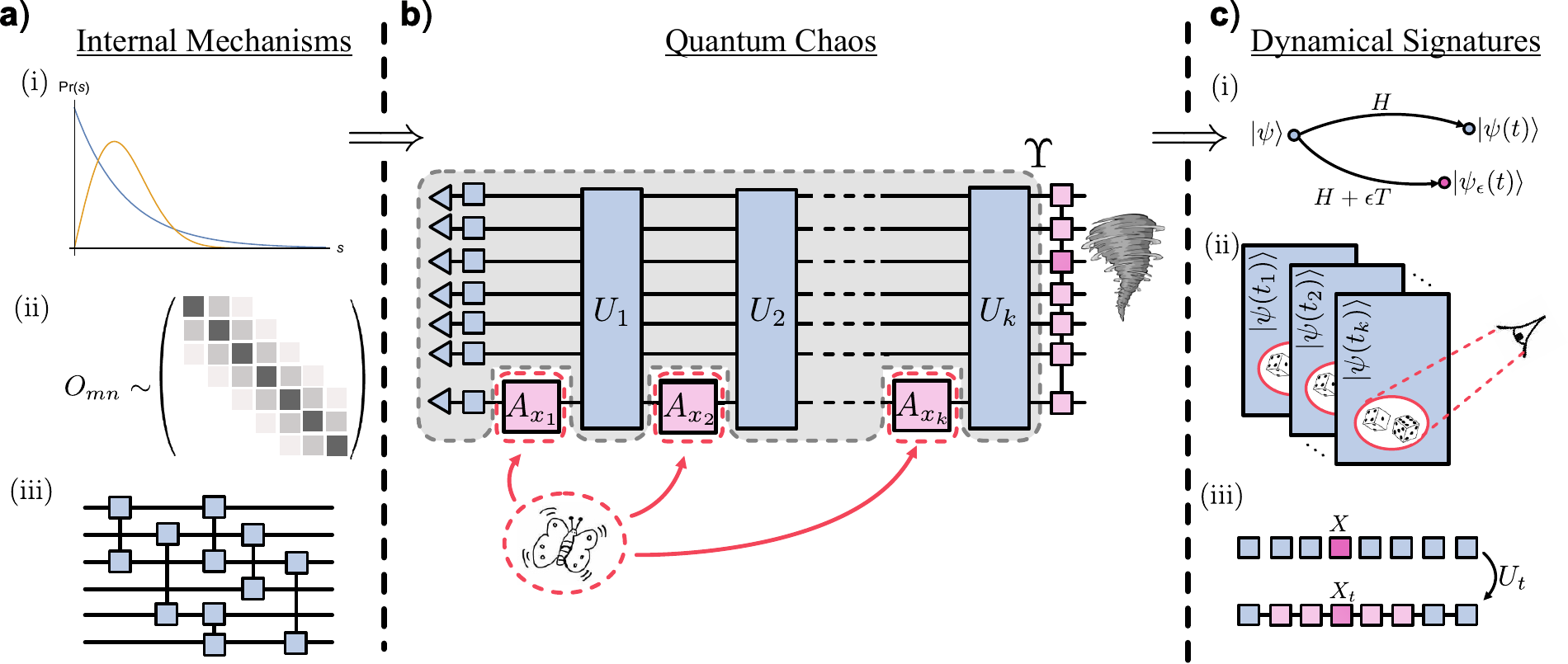}
  \caption{\footnotesize{A schematic of the causes, structure, and effects of quantum chaos. \textbf{a)} Internal mechanisms of chaos are the intrinsic properties of the dynamics that lead to chaotic effects. For example, properties of the Hamiltonian such as (i) level spacing statistics and (ii) the Eigenstate Thermalization Hypothesis (ETH), or properties of the quantum circuit describing the dynamics such as (iii) whether it forms a unitary design. \textbf{b)} In this work we will identify general quantum butterfly flutter protocol, and from this argue that chaos reduces to a hierarchy of conditions on the process describing the dynamics, including the volume-law spatiotemporal entanglement structure. This principle forms the stepping stone between causal mechanisms of chaos and observable diagnostics of chaos. We remark that we only conjecture that level spacing statistics and ETH (panels a.i and a.ii) lead to quantum chaos as formalized in this paper, and that these relationships form an interesting open question. \textbf{c)} Operational diagnostics for quantum chaos. Some popular probes include (i) The Peres-Loschmidt Echo, also known as Fidelity Decay or Loschmidt echo, which is the measure of the deviation between states, for evolution under a perturbed compared to an unperturbed Hamiltonian~\cite{Peres1984stab,Emerson2002}; (ii) The Dynamical Entropy, which quantifies how much information one gains asymptotically from repeatably measuring a subpart of a quantum system~\cite{Lindblad1986chaos,Pechukas1982,Slomczynski1994-ns,Alicki1994-dc}; and (iii) Local Operator Entanglement, which measures the complexity of the state representation of a time evolved Heisenberg operator~\cite{Prosen2007,Prosen2007a,Kos2020}. Another example which we analyze in this work (not shown) is the Tripartite Mutual Information, which measures entanglement properties of a state representation of a local input space of a channel together with a bipartition of the output space~\cite{Hosur2016}.  }
   }\label{fig:internal-external}
\end{figure*}

Fig.~\ref{fig:internal-external} breaks up the problem of quantum chaos into three broad components, laying out a review of the landscape of this multidisciplinary field and contextualizing our results. Panel (a) represents the mechanisms by which quantum chaos arises. Our contribution, depicted in Panel (b), is to identify a strong, operational criterion for quantum chaos through sensitivity in a future state, to the spatiotemporal quantum entanglement of the corresponding process. We propose that this intuitive metric bridges the gap between the mechanisms in Panel (a) and the signatures for chaos depicted in Panel (c). We provide explicit connections between several elements of these panels in this work, whose details we outline below.

Specifically, affirming the validity of our approach, we show that our criterion is stronger than and encompasses existing dynamical signatures of chaos. We show this explicitly for the Peres-Loschmidt Echo, Dynamical entropy, Tripartite Mutual Information, and Local-Operator Entanglement.\textsuperscript{\footnote{These are a selection of some of the most popularly accepted chaos probes, but this is by no means an exhaustive list. For example, some interesting alternative measures are entanglement spectrum statistics~\cite{Chamon2014-cb}, the Spectral Form Factor~\cite{Haake2018-cs}, and quantum coherence measures~\cite{Anand2021-yi}. Also see Refs.~\cite{HunterJones2018ChaosAR,Kudler-Flam2020} and references therein.}} That is, we identify the underlying structure leading to characteristic chaotic behavior of each of these popular chaos diagnostics. We offer a clear hierarchy of conditions of a chaotic effect, due to a `butterfly flutter', unifying a range of (apparently) inconsistent diagnostics.

Next, we show that there are several known mechanisms for quantum processes that lead to quantum chaos. In particular, we show that both Haar random unitary dynamics and random circuit dynamics -- which lead to approximate $t-$design states -- are highly likely to generate processes which satisfy our operational criterion for quantum chaos. Our results also open the possibility of systematically studying other internal mechanisms thought to generate quantum chaos, e.g. Wigner-Dyson statistics~\cite{BerryTabor1977}, or the Eigenstate Thermalization Hypothesis (ETH)~\cite{Deutsch1991, Srednicki,Rigol2008}.

Finally, Our approach is different from previous works that have usually relied on averages over Haar and/or thermal ensembles to draw connections between some previous signatures for quantum chaos~\cite{Yan2020,Roberts2017-en,Leone2021-ov}. We work solely within a deterministic, pure-state setting, identifying a series of conditions which stem from a sensitivity to past, local operations, without any need to average over operators or dynamics. Moreover, other metrics for quantum chaos also start from the notion of a kind of a butterfly effect, such as the out-of-time order correlator (OTOC)~\cite{Roberts2015shocks}. However, our sense of this intuitive idea is different, and does not necessarily suffer the same shortfalls as e.g. the OTOC which decays quickly even for some integrable systems~\cite{Pilatowsky2020,Xu2020,dowling2023scrambling}.

\begin{figure}[t]
  \includegraphics[width=0.49\textwidth]{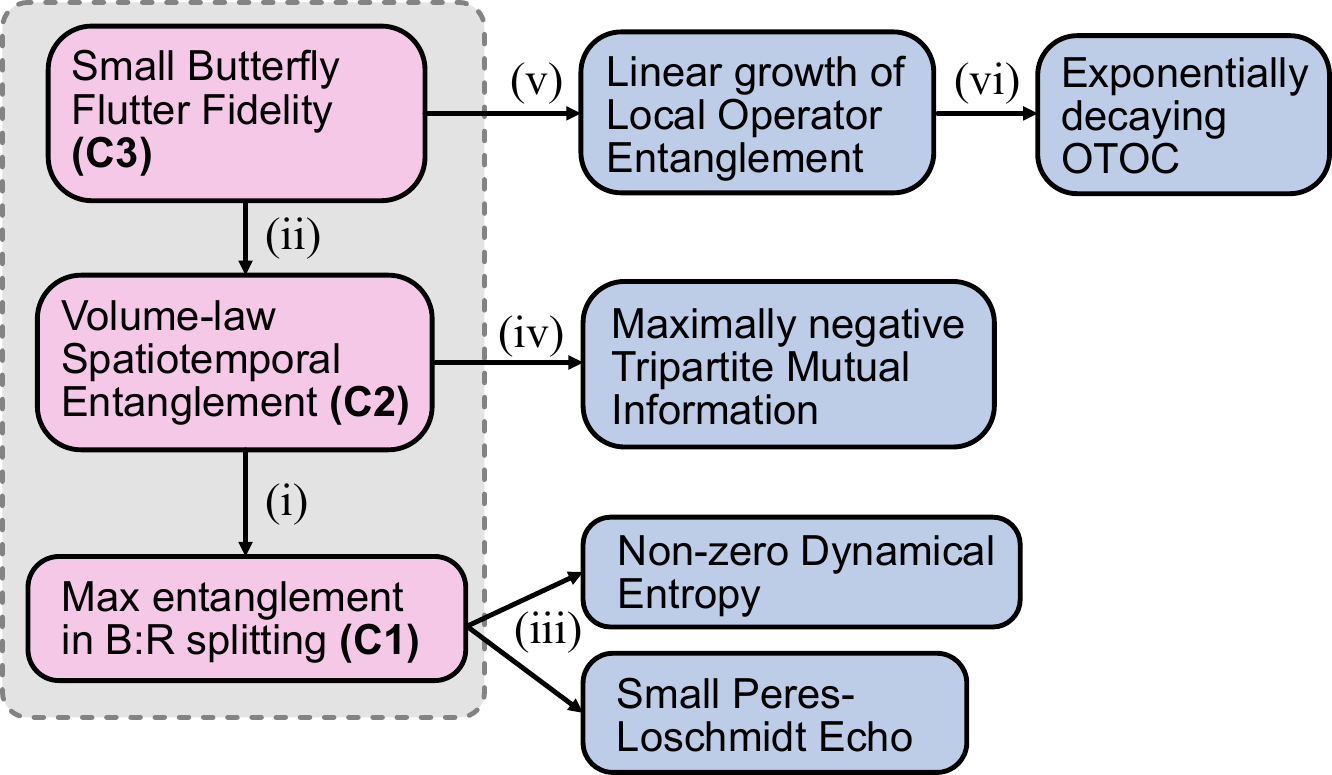}
  \caption{\footnotesize{A summary of the results of this work, where directed arrows mean implication. The shaded region with pink boxes is the hierarchy of conditions on quantum chaos as a sensitivity to perturbation proposed in this work, \ref{c1}-\ref{c3}. (i) Volume-law spatiotemporal entanglement of $\ket{\ups}$ is strictly stronger than maximum entanglement in the single bipartition $B:R$. (ii) A small Butterfly Flutter Fidelity (Def.~\ref{def:BFF}) necessarily implies the volume-law spatiotemporal entanglement of $\ket{\ups}$ (Prop.~\ref{thm:zeta-ST}), with equivalence when the initial state is area-law (Prop.~\ref{thm:state_chaos} ). (iii) The (trotterized) Peres-Loschmidt Echo constitutes the particular case of an asymptotically many-time, weak unitary butterfly flutter (Section~\ref{sec:LE_comb}), while an extensive Dynamical entropy is equivalent to an extensive entanglement scaling in the splitting $B:R$ (Prop.~\ref{thm:BR_ent} and Section~\ref{sec:Sdy}). (iv) For a single-time butterfly flutter, volume-law spatiotemporal entanglement  directly implies a (near) maximally negative Tripartite mutual information of the corresponding channel (Prop.~\ref{thm:tmi}). (v) For a single-time butterfly flutter, if the Butterfly Flutter Fidelity is small for any initial state, then for an operator entanglement complexity measure the Local Operator Entanglement grows linearly with time (Thm.~\ref{thm:LOE_BFF}). (vi) If the Local Operator Entanglement grows linearly with time, then general OTOCs necessarily decay exponentially~\cite{dowling2023scrambling}. }}
\label{fig:flow}
\end{figure}

\subsection*{Summary of Main Result}
We first give an informal explanation of the main innovation of this work. We use a simplified formalism and setup in order to convey the main ideas, with a more detailed exposition to be given later. 

Consider an isolated quantum system where a sequence of $k$ unitaries $A_{x_i}$ are applied on a local subspace $S$, such that the global system is defined on the Hilbert space $\mc{H}_S \otimes \mc{H}_E$. Later we will call this sequence a \emph{butterfly flutter}, and allow it to consist of an arbitrary sequence of rank-one instruments (Def.~\ref{def:butterfly}). The outgoing state after this protocol is 
\begin{equation} \label{eq:BFout}
    \ket{\ups_{R|\vec{x}}} = {A}_{x_k} {U}_k \cdots {A}_{x_2} {U}_2 {A}_{x_1} {U}_1\ket{\psi_{SE}}
\end{equation}
where $U_i$ represents global unitary evolution, either Floquet or according to a Hamiltonian for time $t_i$, and where $A_{x_i} \equiv (A_{x_i})_S \otimes \id_E$. The other choices of notation will become apparent in the following sections. 

Now we similarly introduce a strictly different set of $k$ unitaries, labeled by the list $\vec{y}$. We take these unitaries to be orthogonal to the first choice, in the Hilbert-Schmidt sense such that $\tr[A_{x_i}^\dg A_{y_i}   ]= 0$ for all $i \in [1,k]$. Note that we impose no such constraint on operations for different times, $A_{x_i}$ compared to $A_{x_j}$ with $i\neq j$. We later loosen this condition such that these can be collectively, approximately orthogonal operations. The outgoing state is defined analogously to Eq.~\eqref{eq:BFout}, with the same global dynamics $U_i$ and subsystem decomposition $\mc{H}_S \otimes \mc{H}_E$, but different unitary `perturbations'. We can then ask, how much do these two resultant states, $\ket{\ups_{R|\vec{x}}}$ and $\ket{\ups_{R|\vec{y}}}$, differ? 

This question is a direct translation to quantum mechanics, of the principle of chaos as a sensitivity perturbation. The task is to define exactly what we mean by this sensitivity. As mentioned in the introduction, fidelity is preserved under unitary evolution. Further, as we discuss in Section~\ref{sec:c1} and Appendix~\ref{sec:LB}, the fidelity cannot be the full story: most dynamics irrespective of integrability will lead to a small fidelity $| \langle \ups_{R|\vec{x}}  | \ups_{R|\vec{y}} \rangle|^2$. We will show that this orthogonality translates into an entropic condition on the underlying process for this perturbation protocol, namely that a genuinely chaotic system should necessarily have a volumetrically scaling spatiotemporal entanglement.

We instead strengthen this by defining a new metric to compare these states, which we call the \emph{Butterfly Flutter Fidelity} (Def.~\ref{def:BFF}). This compares how different the two final states are in a complexity sense, and measures the fidelity after what we call a correction unitary $V$
\begin{equation} \label{eq:state_chaos_prelim}
    \zeta :=  \underset{V \in \mc{R}}{\mathrm{sup}}  \left(| \langle \ups_{R|\vec{x}} | V | \ups_{R|\vec{y}} \rangle|^2 \right).
\end{equation}
This quantity is depicted graphically in Fig.~\ref{fig:zeta_cji} (a). 
Here, $\mc{R}$ is a restricted set of unitaries on $\mc{H}_S \otimes \mc{H}_E$, which for now can be considered to be the set of simple (low-depth) circuits. Intuitively, this measure~\eqref{eq:state_chaos_prelim} determines whether the orthogonality between $\ups_{R|\vec{x}}$ and $\ups_{R|\vec{y}}$ is complex or not. That is, is the sensitivity stemming from past perturbations (local unitaries) easily correctable? Based on our operational criteria for quantum chaos, we argue that the dynamics are chaotic if this not easily correctable -- when $\zeta \approx 0$ for an appropriately defined set of corrections $\mc{R}$, and for any choice of butterfly flutters. This notion of chaos then allows us to identify a connection with entanglement properties of the underlying process describing the `butterfly flutter' protocol. 

For example, one could choose two butterfly flutters as a sequence of $k$ Pauli $X$ gates on a single qubit of a many-body system at $k$ times, and the other to be a series of identity maps (do nothing). With free, global evolution occurring between each gate, the Butterfly Flutter Fidelity Eq.~\eqref{eq:state_chaos_prelim} would then indicate that the dynamics is chaotic if the fidelity between the final states is small, $\zeta \approx 0$, even after trying to align the two final states using any small depth, local circuit $V$. This quantity is the main focus of this work.

The rest of the paper is structured as the following: We present a review of the appropriate tools with which we need to analyze the Butterfly Flutter Fidelity Eq.~\eqref{eq:state_chaos_prelim} in Section~\ref{sec:processes}. This predominantly includes the theory of multitime quantum process~\cite{processtensor,processtensor2,milz2020quantum}, allowing us to describe all possible perturbations and resultant effects within a single quantum state. Then in Section~\ref{sec:mainresult} we present a set of increasingly stronger, necessary conditions on a dynamical process for which $\zeta \approx 0$ in Eq.~\eqref{eq:state_chaos_prelim}. These conditions are all motivated from the principle of chaos as a sensitivity to perturbation, and start with a minimal sense of what a large effect could be, stemming from a past, local perturbation. This main results section then culminates in the Butterfly Flutter Fidelity, as the strongest condition in this hierarchy. We conclude this section by comparing Butterfly Flutter Fidelity to the classical ideas of chaos, and detailing how one could in-principle measure it in experiment.

In Section~\ref{sec:prev_chaos}, we support the proposed conditions by showing how a range of previous dynamical signatures of chaos agree with them, as depicted in Fig.~\ref{fig:internal-external}. We summarize these connections in Fig.~\ref{fig:flow}, which serves as a summary of this work and the related work of Ref.~\cite{dowling2023scrambling}. Finally, in Section~\ref{sec:typ_process} we discuss mechanisms of chaos that lead to the operational effects we propose. In particular, we prove that random dynamics -- both fully Haar random and those generated by unitary designs -- typically lead to chaos.

\section{Tools: Multitime Quantum Processes and Spatiotemporal Entanglement} \label{sec:processes}
Many of the results of this work rely on the application of ideas from entanglement theory to multitime quantum processes, in order to interpret the overarching problem of chaos in isolated many-body systems. We here give only an overview of the relevant facets of this topic, and refer the reader Appendix~\ref{ap:processes} for more information, and to Refs.~\cite{OperationalQDynamics,milz2020quantum} for a more complete introduction to the process tensor framework.

Consider a finite dimensional quantum system. A quantum process is a quantum dynamical system under the effect of multitime interventions on some accessible local space $\mc{H}_S$. These interventions are described by instruments, which trace non-increasing quantum maps. The dynamics between interventions can then be dilated to a system-environment $\mc{H}_S \otimes \mc{H}_E$, such that the total isolated state on $\mc{H}_S \otimes \mc{H}_E$ evolves unitarily on this extended space. A $k-$step process tensor is the mathematical description of a such a process, encoding all possible spatiotemporal correlations in a single object; analogous to how a density matrix encodes single-time measurements.

 In this work we will generally consider rank-one instruments, such as unitary matrices and projective measurements (including the outgoing state). In this case, we are able to write down the full pure state on $\mc{H}_S \otimes \mc{H}_E$ at the end of this process, 
 \begin{equation} \label{eq:pure_process}
    \ket{\psi_{SE}^\prime} = {U}_k {A}_{x_k} {U}_{k-1} \dots {U}_1 {A}_{x_1} \ket{\psi_{SE}},
\end{equation}
where we have rewritten this as the conditional state of a subpart of process $\ket{\ups}$, and will explain exactly what this means below. $A_{x_i}$ can be arbitrary norm non-increasing operators, with $\sum_{x_i} A_{x_i}^\dg A_{x_i} = \id $. That is, anything that maps pure states to (possibly subnormalized) pure states. This includes e.g. unitary operators or projective measurements. We stress that $A_{x_i}$ are considered to act locally on $\mc{H}_S$, such that $A_{x_i} \equiv A_{x_i}^{(S)} \otimes \id^{(E)}$. 
As everything is pure here, there is no need to consider superoperators or density matrices, and left multiplication by matrices is a sufficient description (see Appendix~\ref{ap:processes} for the mixed-state extension of this). $\ket{\ups_{R|\vec{x}}}$ could be a sub-normalized pure state, for example if the instruments chosen to be a series of projective measurements 
\begin{equation} \label{eq:CondState}
    \ket{\psi_{SE}^\prime}  = \sqrt{p_{\vec{x}}} \ket{\ups_{R|\vec{x}}} .
\end{equation}
Here, $A_{x_i} = \ket{x_i} \bra{ x_i}$, $p_{\vec{x}}$ is the probability of observing this outcome, and where we have neglected the (unobservable) global phase. We will usually consider the (normalized) conditional state $\ket{\ups_{R|\vec{x}}}$ when investigating chaotic effects, as we will be concerned with the resultant state rather than the probability that it is produced. 

Rather than choosing a particular instrument $A_{x_i}$ for each intervention, one can instead feed in half of a maximally entangled state from an ancilla space, as shown in Fig.~\ref{fig:CJI}. This results in the pure state $\ket{\ups}$, encoding both any interventions on the multitime space in the past which we call $\mc{H}_B$, and the final pure state on the global, isolated system, on the space $\mc{H}_R$. This is the generalized Choi–Jamio\l kowski Isomorphism (CJI)~\cite{processtensor,processtensor2}, shown in Fig.~\ref{fig:CJI}. Alternatively to this ancilla-based construction, the pure process tensor can be defined succinctly as
\begin{equation} \label{eq:PurePT}
    \ket{\ups} := \ket{\mathtt{U}_k} * \dots * \ket{\mathtt{U}_1} * \ket{\psi_{SE} (t_1)},
\end{equation}
where $*$ is the Link product, corresponding to composition of maps within the Choi representation~\cite{Chiribella2008}, and is essentially a matrix product on the $\mc{H}_E$ space, and a tensor product on the $\mc{H}_S$ space. A ket of a rank-one instrument $A$ corresponds to the single-time Choi state  
\begin{equation} \label{eq:1CJI}
    \ket{A} := (A  \otimes \id)   \ket{\phi^+},
\end{equation}
by the usual single-time CJI: channel-state duality~\cite{OperationalQDynamics}.

Here, we have gathered the multitime Hilbert space where the full multitime instruments act on a space with the single label,
\begin{align} \label{eq:Bdef}
    \mc{H}_B \equiv \mc{H}^\mathrm{io}_{S(t_{k-1})} \dots \otimes \mc{H}^\mathrm{io}_{S(t_1)} \otimes \mc{H}^\mathrm{io}_{S(t_0)}
\end{align}
called the `butterfly' space $\mc{H}_B$, where $\mc{H}^\mathrm{io}_{S(t_j)} \equiv \mc{H}^\mathrm{i}_{S(t_j)} \otimes \mc{H}^\mathrm{o}_{S(t_j)}$. $\mc{H}^\mathrm{i}$ represents the input space to the process, while $\mc{H}^\mathrm{o}$ represents the output. The `remainder' space $\mc{H}_R$ -- the full final state on the system plus environment at the end of the protocol, where the `butterfly' does not act -- is
\begin{align}\label{eq:Rdef}
    \mc{H}_R \equiv \mc{H}^\mathrm{o}_{S(t_k)} \otimes \mc{H}^\mathrm{o}_{E(t_k)}.
\end{align}
All of these are clearly labeled in Fig.~\ref{fig:CJI}. It will become apparent in the following section why we name these spaces as such. 
\begin{figure}[t]
\centering
\includegraphics[width=0.48\textwidth]{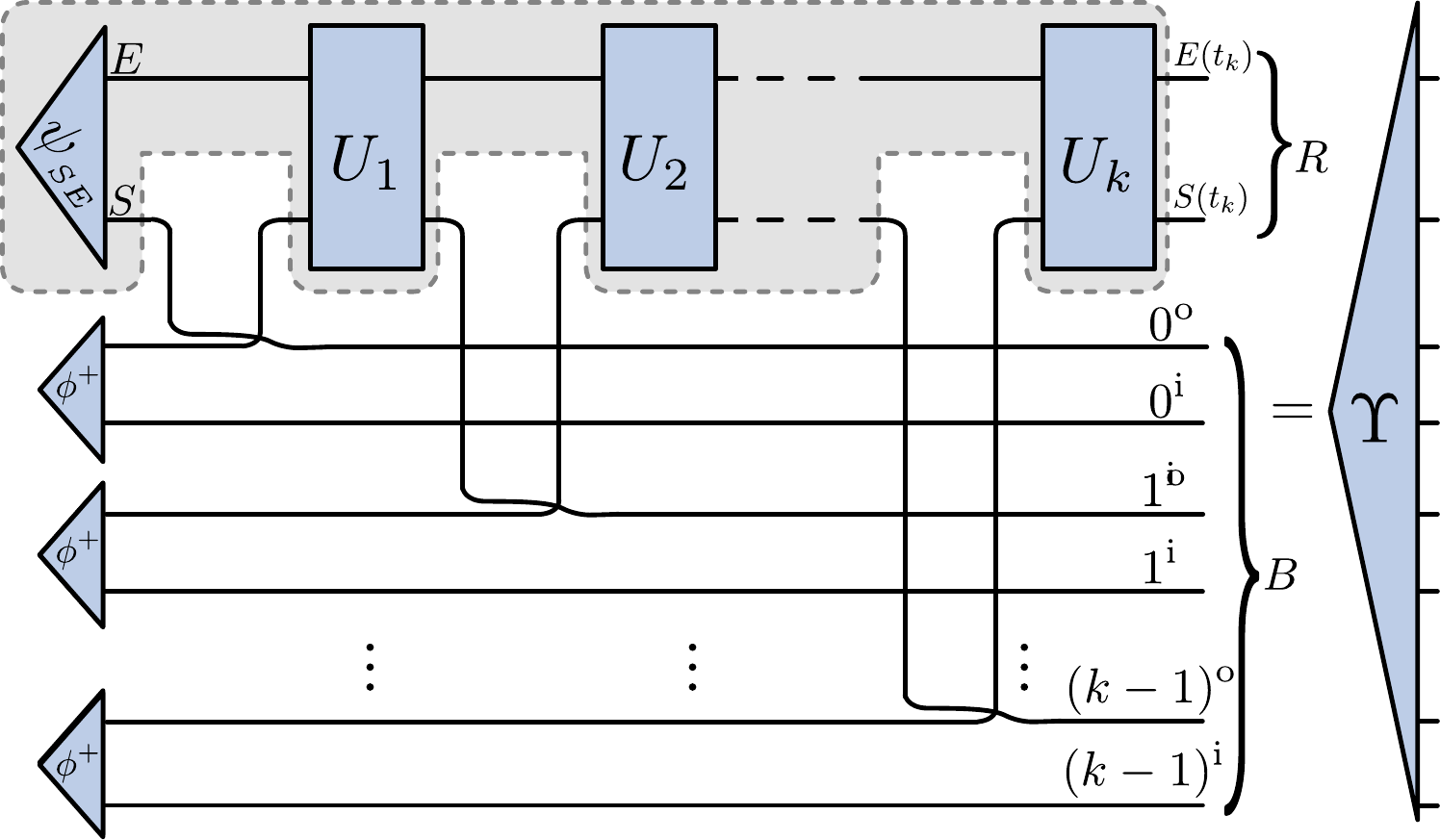}
\caption{\footnotesize{Tensor network diagram of the protocol producing the Choi state of a pure process tensor through the generalized Choi-Jamio\l{}kowski isomorphism~\cite{processtensor,watrous_2018}. This means that input indices are put on equal footing with output indices, through appending a maximally entangled ancilla system $\ket{\phi^+}$ at each time, and inserting half of this state into the process. The final output state of this protocol encodes all multitime spatiotemporal correlations: a pure process tensor. A multitime expectation value can then be computed in this representation by finding the Hilbert Schmidt inner product between this (normalized) Choi state and the (supernormalized) Choi state of a multitime instrument. The system $\mc{H}_S$ denotes the singletime space where instruments act, and the environment $\mc{H}_E$ the dilated space such that all dynamics are unitary. Here the independent Hilbert spaces are labeled such that $(\ell)^\mathtt{i}$ ($(\ell)^\mathtt{o}$) is the input (output) system space $\mc{H}_S$ at time $t_\ell$, showing that the final output $\ket{\Upsilon}$ corresponds to a $(2k+2)-$body pure quantum state. }} \label{fig:CJI}
\end{figure}

From Eq.~\eqref{eq:PurePT} we can determine the outgoing (possibly sub-normalized) state Eq.~\eqref{eq:CondState} from projections on this state,
\begin{equation}
    \ket{\psi^\prime_{SE}} = \braket{\vec{x}|\ups}.
\end{equation}
For independent instruments at each intervention time, we have that
\begin{equation} \label{eq:choi_instruments}
    \ket{\vec{x}} := \ket{x_k} \otimes \dots \otimes \ket{x_1}, 
\end{equation}
where each single-time state is constructed as in Eq.~\eqref{eq:1CJI}. Alternatively, one could trace over the final state on $\mc{H}_R$, and the reduced state on $\mc{H}_B$, $\ups_B$, is the process tensor~\cite{processtensor,processtensor2,milz2020quantum}, as we describe in Appendix~\ref{ap:processes}.

The key point here is that through the CJI we have reduced all possible correlations of a dynamical multitime experiment to a single quantum state, $\ket{\ups}$. This means that all the machinery from many-body physics is available to describe multitime effects. A subtle difference from the single-time case is that the normalization of these Choi states do not exactly correspond to the normalization of states and projections. Instruments are taken to be supernormalized, while processes have unit normalization and so constitute valid quantum states
\begin{align} \label{eq:normalization}
    &\braket{\ups|\ups}=1, \quad \text{and} \quad
    \braket{\vec{x}|\vec{x}} \leq d_S^{2k},
\end{align}
where the inequality is saturated for deterministic instruments: CPTP maps. This normalization ensures that one gets well defined probabilities in Eq.~\eqref{eq:CondState}.


Therefore, dynamical properties of a process such as: non-Markovianity~\cite{processtensor2, Costa2016,Markovorder1, Taranto2019FiniteMarkov,milz2020quantum}, temporal correlation function equilibration~\cite{Dowling2021,finitetime}, whether its measurement statistics can be described by a classical stochastic process~\cite{strasberg2019,Milz2020prx,Strasberg2022}, multipartite entanglement in time~\cite{Milz2021GME}, and other many-time properties~\cite{White2021}, can all be clearly defined in terms of properties of the quantum state $\ket{\ups}$. However, the spatiotemporal entanglement structure of this multitime object is largely unexplored, and we will show that this has vast implications for understanding quantum chaotic versus regular dynamics. 

Any pure quantum state $\ket{\psi}_{AB}$ on $\mc{H}_A \otimes \mc{H}_B$ can be decomposed across any bipartition $A:B$ via the Schmidt decomposition,
\begin{equation} \label{eq:schmidt}
    \ket{\psi}_{AB} = \sum_{i=1}^\chi \lambda_i \ket{\alpha_i}_{A} \ket{\beta_i}_{B},
\end{equation}
where $\braket{\alpha_i|\alpha_j} =\delta_{ij}=\braket{\beta_i|\beta_j}$. $\chi$ is called the \emph{bond dimension} or \emph{Schmidt rank}, dictating intuitively how much of the subsystems are entangled with each other. The bond dimension is equal to one if and only if the state is separable across $A:B$.

Using this decomposition \eqref{eq:schmidt}, one can successively increase the size of the subsystem $\mc{H}_A$, and determine how the bond dimension scales. We will deal with one spatial dimension systems when discussing spatiotemporal entanglement in this work, as characteristic entanglement scaling depends on the underlying geometry~\cite{Eisert_Cramer_Plenio_2010}. Our results should generalize in a straightforward way to higher spatial dimensions. If $\chi$ is bounded by $\min \{d_A,D\}$ for a constant $D<(d_A d_B)/2$ for any $\mc{H}_A$ with dimension $d_A$ up to half the total Hilbert space dimension, this is called area-law scaling. In this case for example, a one dimensional spin chain state may be written efficiently as Matrix Product State (MPS)~\cite{Fannes1992,Schuch2008MPS,Eisert_Cramer_Plenio_2010,Bridgeman_2017}. Despite being introduced in order to efficiently simulate the ground state of certain Hamiltonians, it was soon realized that a fundamental property of a state written as an MPS is revealed in the scaling of the bond dimension~\cite{Vidal20003}. On the other hand, if the bond dimension (approximately) scales extensively with the subsystem size, this is volume-law scaling. This directly implies a characteristically scaling entanglement entropy, 
\begin{equation}
    S(\rho_A ) \propto \log(d_A),
\end{equation}
where $S(\rho_A )$ is the von Neumann entropy of the reduced state $\rho_A$. Area-law can be defined formally as a bounded entanglement with scaling subsystem size, for all R\'enyi entropies~\cite{Schuch2008MPS}.  
Such scaling will synonymously be called \emph{entanglement structure} or \emph{entanglement scaling} throughout this work. We will show that this property within the pure process tensor $\ket{\ups}$ is intrinsically linked to the chaoticity of a quantum process.

We will now delve into our main result, interpreting the dynamical meaning behind spatiotemporal entanglement structure of quantum processes.

\section{Main Result: The Butterfly Flutter Fidelity and Spatiotemporal Entanglement Structure} \label{sec:mainresult}
\begin{figure*} 
  \includegraphics[width=0.97\textwidth]{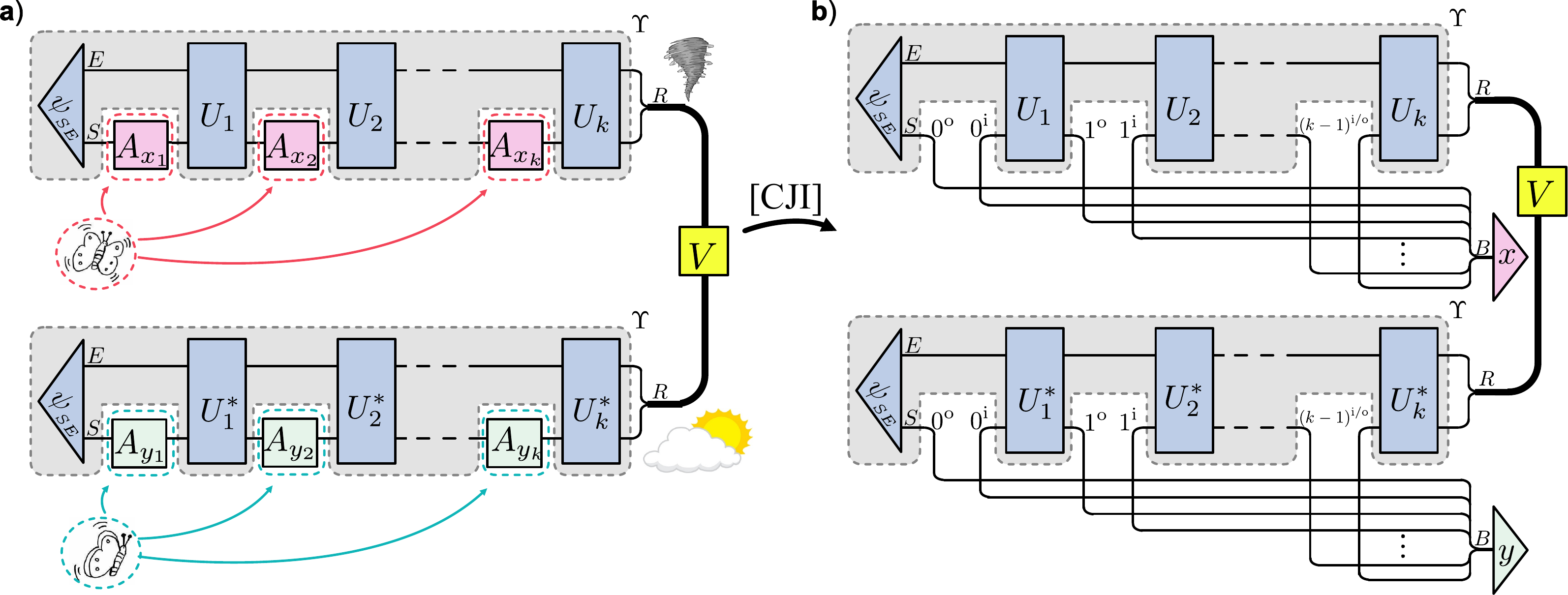}
  \caption{\footnotesize{Two equivalent representations of the Butterfly Flutter Fidelity, Def.~\ref{def:BFF}. \textbf{a)} The process representation of the Butterfly Flutter Fidelity. Two orthogonal sequences of instruments $\{ {A}_{x_i}\}$ and $\{{A}_{y_i}\}$ act at $k$ times on the system Hilbert space denoted by $\mc{H}_S$, of a time evolving state $\ket{\psi_{SE}}$. The final pure states on $\mc{H}_R=\mc{H}_S \otimes \mc{H}_E$ can be compared, with a simple correction unitary $V$ (partially) aligning the states, which enforces that the butterfly flutter's effect is complex on the final states. This is Eq.~\eqref{eq:state_chaos} in the text. \textbf{b)} Using the CJI, as described in Section~\ref{sec:processes} and Fig.~\ref{fig:CJI}, the process corresponding to the butterfly protocol can be mapped one-to-one to a quantum state $\ket{\ups} \in \mc{H}_B \otimes \mc{H}_R$. Then the Butterfly Flutter Fidelity \eqref{eq:state_chaos} corresponds to projecting onto the butterfly space $\mc{H}_B$ with two orthogonal projections $\bra{\vec{x}}$ and $\bra{\vec{y}}$, and comparing the resulting conditional states on $\mc{H}_R$. This allows us to interpret strong and complex effects due to the butterfly flutter, in terms of the entanglement properties of $\ket{\ups}$: a strong effect from entanglement in the bipartition $B:R$ (Proposition~\ref{thm:BR_ent}), and a complex effect from volume-law entanglement in the full state (Proposition~\ref{thm:state_chaos}).}}\label{fig:zeta_cji}
\end{figure*}
In the previous section we have defined a pure process tensor $\ket{\ups}$ which encodes an experiment where a local part $\mc{H}_S$ of a many-body quantum system is interacted with across $k$ times, together with the outgoing pure state on the space $\mc{H}_R$. The multitime intervention, which we call a butterfly flutter and is defined explicitly in Def.~\ref{def:butterfly}, is taken to act on the collective multitime `butterfly' space $\mc{H}_B$, Eq.~\eqref{eq:Bdef}. This formalism will allow us to identify necessary conditions stemming from the principle of quantum chaos as a sensitivity to perturbation, in terms of the properties of the state $\ket{\ups}$. Fig.~\ref{fig:zeta_cji} offers a graphical representation of the state-process duality which serves as a key tool of our analysis.  

We will now identify a series of conditions on the process $\ket{\ups}$, each stronger than the previous, such that if a process satisfies $\ref{c3}$ then it also necessarily satisfies $\ref{c2}$ and hence also $\ref{c1}$. We make an intuitive argument based on chaos as a sensitivity to perturbation, to argue for each condition. We will show in Section~\ref{sec:prev_chaos} that they are each related to previous signatures of chaos; see also Fig.~\ref{fig:internal-external}. 
\begin{enumerate}[label={\textbf{{(C\arabic*)}}}]
    \item \textit{(Perturbation orthogonalizes future state)} The final state on $R$ should be strongly sensitive to butterfly flutters on $B$.
    \item \textit{(Scrambling as volume-law entanglement)} Butterfly flutters on $B$ should affect a large portion of the final state on $R$.
    \item \textit{(Complexity of sensitivity)} Different butterfly flutters on $B$ should lead to different enough states on $R$, in a complexity sense. 
\end{enumerate}
From each of these, we will identify the properties of $\ket{\ups}$ that these conditions lead to. 
Of course, as written above these conditions are informal statements. We will spend the rest of this section making this precise, and restate this list at the end in full technical detail. 

\subsection{Sensitivity to Perturbation (C1)} \label{sec:c1}
Given a sequence of small interventions on a many-body system, what is the minimal effect on the final pure state such that it is sufficiently perturbed? As a minimal condition, we argue that a perturbation should orthogonalize this final state, in the usual sense of fidelity. We will show that this leads to a simple entropic condition on the pure process state $\ket{\ups}$. 

More technically, we first define explicitly what we mean by a perturbation which probes chaos. 
\begin{mydef} \label{def:butterfly}
    A butterfly flutter is multitime linear map with some outcome label $\vec{x}$, defined by $k$ rank$-1$ instruments $\{{A}_{x_1}, A_{x_2},\dots,A_{x_k} \}$, which maps a $k$ time pure process $\ket{\ups} \in \mc{H}_B \otimes \mc{H}_R$ to a normalized state,
    \begin{equation}
        \frac{\braket{\vec{x} | \ups }}{\sqrt{\braket{\ups|\vec{x}}\braket{\vec{x}|\ups}}}= \ket{\ups_{R|\vec{x}}}.
    \end{equation}
    Here, $\ket{\vec{x}} \in \mc{H}_B$ is the Choi state of the multitime instrument which defines the butterfly flutter, as in Eq.~\eqref{eq:choi_instruments}, and the (conditional) output state $\ket{\ups_{R|\vec{x}}}$ is defined below in Eq.~\eqref{eq:psiE}.
\end{mydef}
Note that butterfly flutters are distinct from the multitime instruments discussed in Section~\ref{sec:processes} only in that we take the normalized output from its action. This is important as we do not wish to consider the probability of a butterfly to occur, only its effect. $\ket{\ups_{R|\vec{x}}}$ is just the conditional pure state on the global $\mc{H}_S \otimes \mc{H}_E$ space. 

We can compare the two final conditional (pure) states after two distinct butterfly flutter protocols labeled by $\vec{x}$ and $\vec{y}$
\begin{equation} \label{eq:process_chaos}
    \mc{D}(\ket{\ups_{R | \vec{x}}} , \ket{\ups_{R | \vec{y}} } ).
\end{equation}
Here, $\mc{D}$ is some metric on pure quantum states, naturally taken to be the fidelity, and the label $\vec{w}=(w_1,\dots,w_k ) \in \{\vec{x}, \vec{y} \}$ denotes instruments acting at $k$ times, such that 
\begin{align} \label{eq:psiE}
    \ket{\ups_{R | \vec{w}} } := \frac{{A}_{w_k} {U}_k \cdots {A}_{w_2} {U}_2 {A}_{w_1} {U}_1\ket{\psi_{SE}}}{\sqrt{\bra{\psi_{SE}}{U}_1^\dg \cdots {A}_{w_k}^\dg {A}_{w_k} \cdots {U}_1\ket{\psi_{SE}}}}.
\end{align}
This is a bipartite quantum state after a butterfly flutter protocol, which may include a sequence of measurements and preparations on some local system labeled $\mc{H}_S$, recording the outcomes as $\vec{w}$. Alternatively, ${A}_{w_i}$ could be a unitary on some subspace, or any other quantum operation which could even be correlated across multiple times. Note that if two butterflies only consist of unitary maps, then the normalization in the denominator of Eq.~\eqref{eq:psiE} is simply equal to one. In the interest of identifying the general form of any quantum butterfly effect, we allow the perturbation to be any pure multitime instrument. 

Condition $\ref{c1}$ then means that 
\begin{equation} \label{eq:orthogg}
    |\braket{\ups_{R | \vec{x}} |\ups_{R | \vec{y}}}|^2 \approx 0,
\end{equation}
for any two orthogonal butterfly flutters $\ket{\vec{x}}$ and $\ket{\vec{y}}$. Our construction of dynamical quantum chaos then reduces to a static property of a process: given two non-deterministic projections on some small subsystem, how do the leftover states compare? $\ref{c1}$ states that for a chaotic process, butterflies need to have a large effect as in Eq.~\eqref{eq:orthogg}. 

We now ask what property of the many-time state $\ket{\ups}$ leads to the behavior Eq.~\eqref{eq:orthogg}? We summarize in Fig.~\ref{fig:zeta_cji} the butterfly flutter fidelity \eqref{eq:state_chaos} in the equivalent Choi and operator representations. We have done the conceptual heavy lifting in the setup of this problem, and so the following result is rather direct.
\begin{restatable}{prop}{BRent} \label{thm:BR_ent}
    If for any two orthogonal butterflies, one obtains (approximately) orthogonal final states on $\mc{H}_R$ if and only if $\ket{\ups}$ is (approximately) maximally entangled across the bipartition $B:R$. 
\end{restatable}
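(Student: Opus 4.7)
The plan is to reduce the conditional-state overlap to a single matrix element of the reduced state $\ups_B := \tr_R[\ket{\ups}\!\bra{\ups}]$, and then exhibit enough orthogonal butterfly pairs to pin $\ups_B$ down to the maximally mixed state. Starting from the Schmidt decomposition of $\ket{\ups}$ across $\mc{H}_B \otimes \mc{H}_R$, a direct calculation gives
\begin{equation}
    \braket{\ups_{R|\vec{x}}|\ups_{R|\vec{y}}} = \frac{\bra{\vec{y}}\ups_B\ket{\vec{x}}}{\sqrt{\bra{\vec{x}}\ups_B\ket{\vec{x}}\,\bra{\vec{y}}\ups_B\ket{\vec{y}}}},
\end{equation}
so that orthogonality of the conditional futures is exactly the vanishing (or smallness) of an off-diagonal matrix element of $\ups_B$ in the butterfly basis. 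This collapses the proposition to the question: for which $\ups_B$ does $\bra{\vec{y}}\ups_B\ket{\vec{x}}$ vanish on every orthogonal pair of butterfly-flutter Choi states?

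The forward ($\Leftarrow$) direction is immediate. If $\ket{\ups}$ is maximally entangled across $B:R$, then $\ups_B = \id_B/d_B$ and the matrix element collapses to $\braket{\vec{y}|\vec{x}}/d_B$, which is zero whenever the butterflies are orthogonal. For the reverse ($\Rightarrow$) direction, I would choose two mutually unbiased orthonormal bases of $\mc{H}_B$ built slot-by-slot from Choi states of rank-$1$ instruments of the form $\ket{a}\!\bra{b}$: the computational basis and its Fourier transform. Each basis vector is the Choi state of a valid rank-$1$ multitime instrument, and distinct basis vectors are orthogonal, so each basis constitutes an admissible family of butterfly flutters. The hypothesis forces $\ups_B$ to be diagonal in both bases, and a standard Fourier argument shows that the only operators simultaneously diagonal in two mutually unbiased bases are scalar multiples of the identity; trace normalization then yields $\ups_B = \id_B/d_B$, which (assuming $d_B \le d_R$) is equivalent to $\ket{\ups}$ being maximally entangled across $B:R$.

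The main obstacle is upgrading this exact equivalence to the stated approximate one. The hypothesis provides only the bound $|\bra{\vec{y}}\ups_B\ket{\vec{x}}|^2 \le \varepsilon\,\bra{\vec{x}}\ups_B\ket{\vec{x}}\bra{\vec{y}}\ups_B\ket{\vec{y}}$ for orthogonal $\vec{x},\vec{y}$, which directly controls the off-diagonal elements of $\ups_B$ but not its diagonal fluctuations. I would close this by splitting $\ups_B$ into diagonal and off-diagonal parts in each of the two bases, bounding the off-diagonal parts via the pointwise estimates above together with Cauchy--Schwarz, and then invoking the discrete uncertainty relation between mutually unbiased bases to convert simultaneous off-diagonal smallness in both bases into closeness of $\ups_B$ to $\id_B/d_B$ in an appropriate norm (e.g.\ trace norm, which by Fuchs--van de Graaf translates to fidelity closeness of $\ket{\ups}$ to the maximally entangled state). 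The resulting dependence on $\varepsilon$ need not be tight; it only has to suffice for the qualitative equivalence claimed. A minor secondary subtlety is the well-definedness of $\ket{\ups_{R|\vec{x}}}$, which assumes $\bra{\vec{x}}\ups_B\ket{\vec{x}} > 0$; pathological butterflies that annihilate $\ket{\ups}$ can be excluded from the statement without loss.
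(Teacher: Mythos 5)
Your proposal is correct and follows essentially the same route as the paper's proof: both reduce the conditional-state overlap to the matrix element $\bra{\vec{y}}\ups_B\ket{\vec{x}}$ via the Schmidt decomposition of $\ket{\ups}$ across $B:R$, and both note that the reverse direction is immediate once $\ups_B \propto \id$. Where you go beyond the paper is in the forward direction: the paper simply asserts that ``the only solution is $\ups_B \propto \id + \epsilon\Omega$'' when the off-diagonals vanish for \emph{all} orthogonal butterfly pairs, whereas you actually justify this by exhibiting two slot-by-slot mutually unbiased bases of admissible rank-one-instrument Choi states (computational and generalized-Pauli/Fourier) and invoking the fact that only scalars are simultaneously diagonal in two MUBs. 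You also retain the normalization denominators $\sqrt{\bra{\vec{x}}\ups_B\ket{\vec{x}}\bra{\vec{y}}\ups_B\ket{\vec{y}}}$ that Def.~\ref{def:butterfly} requires but that the paper's displayed computation silently drops, and you correctly flag that the approximate version needs a separate argument to control diagonal fluctuations, which the paper does not address. These are refinements of, not departures from, the published argument.
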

Proof for this and all further results in this Section can be found in Appendix~\ref{ap:proofs1}.

We note that the previous signature of Dynamical entropy turns out to be exactly the scaling of the entanglement of $\ket{\ups}$ in $B:R$ with times $k$, and the fidelity $   |\braket{\ups_{R | \vec{x}} |\ups_{R | \vec{y}}}|^2$ is a trotterized generalization of the Peres-Loschmidt Echo. We show this later in Section~\ref{sec:prev_chaos} with a detailed exposition on the relation between our conditions \ref{c1}-\ref{c3} and previous signatures (see also Fig.~\ref{fig:internal-external}). Proposition~\ref{thm:BR_ent} then gives a novel connection between these two previously well-studied metrics of chaos.

\subsection{Scrambling as Spatiotemporal Entanglement (C2)}
The condition in the previous section cannot be a complete notion of quantum chaos. In fact, most systems will look ``chaotic'' according to the prescription $\ref{c1}$. For example, a circuit dynamics consisting solely SWAP gates, without any interactions, leads exactly to Eq.~\eqref{eq:orthogg} being satisfied. In this case, the `orthogonality' of the butterflies is transferred to some large environment, and a new pure state is accessed on the butterfly space with each step. The orthogonalization resulting from a butterfly flutter resides entirely in a small subspace of $\mc{H}_R$, yet could be misconstrued as a strong global effect. We look at this example in more detail in Appendix~\ref{sec:LB}, and name such dynamics as a Lindblad-Bernoulli Shift.\textsuperscript{\footnote{This is in reference to its introduction in Ref.~\cite{Lindblad1986chaos}, and the chaotic classical analogy of the Bernoulli shift.}} As a further example, it can be shown analytically that Free fermions lead to a (maximal) linearly growing Dynamical entropy of a process~\cite{Cotler2018}, which by Proposition~\ref{thm:BR_ent} means Eq.~\eqref{eq:orthogg} is also true.

We therefore now introduce a notion of scrambling to the entropic measure from the previous section. Instead of just specifying that the entanglement in the splitting $B:R$ of the purified process $\ket{\ups}$ is volume-law with increase $k$, we extend this to incorporate that the butterfly flutter's effect spreads non-locally. We do this by including a subpart of the the butterfly space together with a subpart of the final pure state when looking at an entanglement bipartition of the process. In particular, $\ref{c2}$ means that
\begin{equation} \label{eq:vol_ST}
    S(\ups_{B_1 R_1} ) \propto \log(d_{B_1} d_{R_1}),
\end{equation}
where $S$ here indicates von Neumann entropy, $R_1$ and $R_2$ are a bipartition of the final state, $\mc{H}_R =: \mc{H}_{R_1} \otimes \mc{H}_{R_2}$, and similarly $\mc{H}_B =: \mc{H}_{B_1} \otimes \mc{H}_{B_2}$. We will generally consider bipartitions $R_1:R_2$ such that $d_{B R_1} < d_{R_2}$. Eq.~\eqref{eq:vol_ST} means that the entanglement of $\ket{\ups}$ in the arbitrary splitting $B_1 R_1: B_2 R_2$ needs to be volume-law. Often we choose $S(\ups_{B R_1} ) $ to investigate the spatial scrambling of the interventions from the entire space $\mc{H}_B$. In this case, spatiotemporal entanglement of the process $\ket{\ups}$ serves as a multitime generalization of `strong scrambling' in terms of the tripartite mutual information~\cite{Hosur2016}; see Section~\ref{sec:TMI}.

There are two subtle considerations to take into account here. For one, explicitly defining a `volume-law' compared to an `area-law' entanglement scaling requires specifying the underlying geometry. For the $B:R$ entanglement of \ref{c1}, there is a natural one dimensional scaling through increasing the number of times $k$ on which $\mc{H}_B$ is defined (and suitably redefining $\ket{\ups}$ in each case). When discussing spatiotemporal entanglement, we require a (varying) bipartition of the the spatial part of the process on $\mc{H}_R$, as well as the temporal part $\mc{H}_B$. For \ref{c2} we therefore restrict ourselves to systems of one spatial dimension, but note that one could likely generalize these results to higher dimensions. In addition, if the dynamics are chosen to be local, as is often the physically relevant situation, the space $\mc{H}_{R_1}$ should be chosen to be causally connected to the space $B$, i.e. well within the Lieb Robinson `lightcone' of the past space $\mc{H}_B$~\cite{lieb_finite_1972,chen2023speed}. This ensures that the operations on $\mc{H}_B$ may be possibly correlated with $\mc{H}_{R_1}$. This is immediately clear in circuit models of dynamics where the light-cone is exact~\cite{PhysRevX.8.021013,PhysRevX.7.031016}. Beyond this, it would be interesting to investigate this further with precise Lieb-Robinson bounds, along the lines of Ref.~\cite{Swingle2016LR}.

 Eq.~\eqref{eq:vol_ST} then means that rather than the butterfly flutter only affecting some localized part of the final pure state on $R$ (as in the Lindblad-Bernoulli Shift, App.~\ref{sec:LB}), leading to a high entanglement in the splitting $B:R$, Eq.~\eqref{eq:vol_ST} means that its effect spreads globally. This is what we call volume-law spatiotemporal entanglement. We will further argue that this is the essence of quantum chaos: that large effects from small, past operations correspond exactly to an extensive entropy scaling with increasing size of $R_1$, for a given $B$ (and possible bipartition $B_1$).

An example tensor network for computing this quantity is given in Fig.~\ref{fig:vol_law}, for a one dimensional lattice system. Here, $\lambda_i$ represents the Schmidt coefficient across the bipartition $B:R$, while the (yellow) circles represent bonds within a MPS representation of the final spatial pure state on $R$. A volume-law spatiotemporal entanglement then means a maximum bond dimension across any of these yellow circle bonds within $R$, when the $B$ subsystem can be connected to any of the components of $R$ within this tensor network (represented by grayed-out bonds). 

\begin{figure} 
  \includegraphics[width=0.22\textwidth]{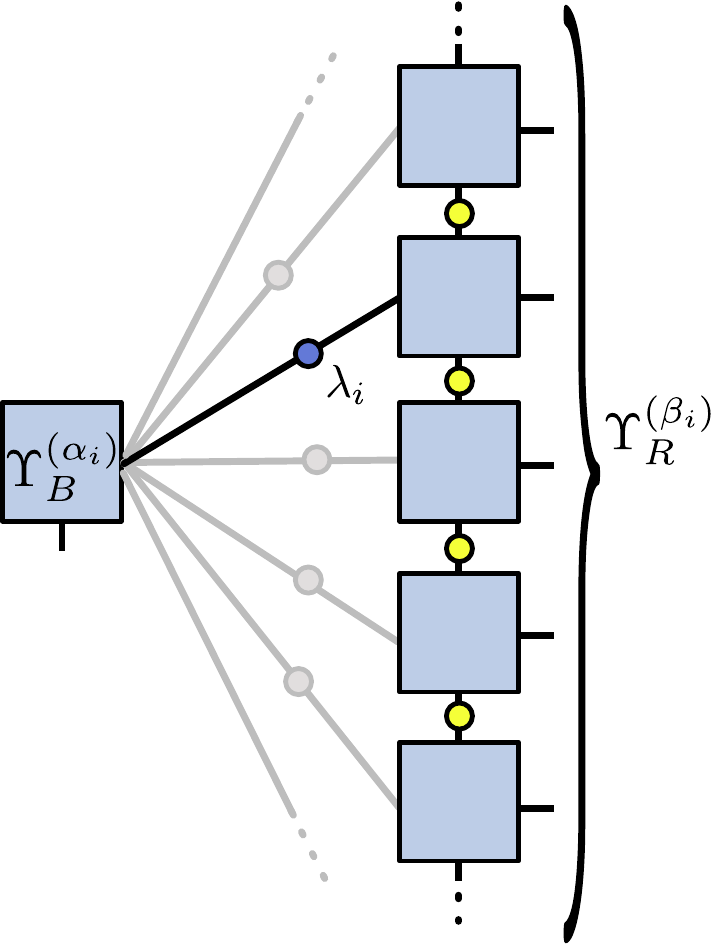}
  \caption{\footnotesize{A spatiotemporal tensor network of the process representing the butterfly protocol. Condition \ref{c2} states that a chaotic process will have maximal Schmidt rank across a decomposition across any cut (represented by colored circles) - i.e. that this network has a maximally bond dimension. These cuts are restricted to be only within the lightcone of the butterfly flutter in the final state on $\mc{H}_R$. The grayed-out lines represent other possible bonds in the choice of the space $\mc{H}_{R_1}$ in Eq.~\eqref{eq:state_chaos}, which should all pertain to a maximal bond dimension (volume-law) tensor network. The tensors (blue squares) on the right hand side have a local dimension of $d_{   R_1}$, and additionally should have maximal bond dimension between them for a chaotic process. }}\label{fig:vol_law}
\end{figure}

But is this spatiotemporal entanglement detectable? If one measures the final state fidelity as in Eq.~\eqref{eq:orthogg}, such that the butterfly flutters $\ket{\vec{x}}$ and $\ket{\vec{y}}$ include part of the final state, $R_1$, we will see that these butterfly flutters typically distinguish between area-law and volume-law spatiotemporal entanglement in $\ket{\ups}$. We will do this by choosing random and unitary butterfly flutters on $\mc{H}_B \otimes  \mc{H}_{R_1}$, and determining the fidelity \eqref{eq:state_chaos} for these. 

In the following, $\mathbb{P}_{a \sim \mu}$ and $\mathbb{E}_{a \sim \mu}$ will mean, respectively, the probability and expectation value of the sampling of a random variable $a$ over the measure $\mu$. $\mu=\mathbb{H}$ denotes the Haar measure - the unique, unitarily invariant measure on Hilbert space. More details on randomness in Hilbert space are given in Section~\ref{sec:typ_process}. 

\begin{restatable}{thm}{randButt} \label{thm:ran_butterfly}
    \textbf{(Random Butterflies are Likely to Detect Spatiotemporal Entanglement)} For a Haar random choice of orthogonal butterflies $\mc{X}=\{\ket{\vec{x}},\ket{\vec{y}} \}$ across the combined space $\mc{H}_B \otimes  \mc{H}_{R_1}$ for any choice of space $\mc{H}_{R_1}$ the fidelity of the final state is likely to be sensitive to the volume-law property of $\ket{\ups}$. In particular, for $\delta >0$, 
        \begin{align} 
            \mathbb{P}_{\mc{X} \sim {\mathbb{H}}}\left\{  |\braket{\ups_{R_2 | \vec{x}} |\ups_{R_2 | \vec{y}}}|^2 \geq {\delta} \right\} &
            \lesssim  \frac{\pur -1/(d_{B    {R_1}}) }{\delta},\label{eq:thm1a}
        \end{align}
    where $d_{B    {R_1}} = d_{B}     d_{R_1}$. This inequality is slightly approximated for large $d_{B    {R_1}}^2$, such that $d_{B    {R_1}}^2 \pm 1 \approx d_{B    {R_1}}^2$.
\end{restatable}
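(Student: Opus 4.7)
The plan is to apply Markov's inequality combined with a direct Haar computation on the reduced density matrix $\ups_{BR_1}:=\tr_{R_2}[\ket{\ups}\bra{\ups}]$. First, using the definition of the conditional state (Def.~\ref{def:butterfly}) and the pure-process Choi form \eqref{eq:PurePT}, the normalized squared overlap reduces cleanly to matrix elements of $\ups_{BR_1}$:
\begin{equation}
|\braket{\ups_{R_2|\vec{x}}|\ups_{R_2|\vec{y}}}|^2 = \frac{|\bra{\vec{x}}\ups_{BR_1}\ket{\vec{y}}|^2}{\bra{\vec{x}}\ups_{BR_1}\ket{\vec{x}}\,\bra{\vec{y}}\ups_{BR_1}\ket{\vec{y}}}.
\end{equation}
This identity follows by expanding $\ket{\ups}=\sum_c\ket{\psi_c}_{BR_1}\ket{c}_{R_2}$ in a Schmidt-like basis across the bipartition $BR_1:R_2$ and tracking the norms of the projected vectors that appear in the numerator and the probabilities $p_{\vec{w}}=\bra{\vec{w}}\ups_{BR_1}\ket{\vec{w}}$ that normalize them.

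Next, I invoke Markov's inequality to reduce the probability bound to an estimate on the Haar expectation of this ratio. To perform the Haar integration, I parameterize the orthogonal pair as $\ket{\vec{x}}=U\ket{1}$ and $\ket{\vec{y}}=U\ket{2}$ with $U$ Haar random on $\mc{H}_B\otimes\mc{H}_{R_1}$ of dimension $D:=d_{BR_1}$, which automatically enforces orthogonality. The required second-moment identity on the Stiefel 2-frame is
\begin{equation}
\mathbb{E}_{\mc{X}\sim\mathbb{H}}[\ket{\vec{x}}\bra{\vec{x}}\otimes\ket{\vec{y}}\bra{\vec{y}}] = \frac{1}{D^2-1}\left(\id-\frac{\swap}{D}\right),
\end{equation}
from which a short swap-trick calculation gives the numerator expectation
\begin{equation}
\mathbb{E}_{\mc{X}\sim\mathbb{H}}[|\bra{\vec{x}}\ups_{BR_1}\ket{\vec{y}}|^2] = \frac{1}{D^2-1}\left(\tr[\ups_{BR_1}^2]-\frac{1}{D}\right).
\end{equation}

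The final step is to control the denominator $p_{\vec{x}}p_{\vec{y}}$, which is the main obstacle: the Haar average of a ratio is not the ratio of Haar averages. I would replace the denominator by its typical value $1/D^2$, justified by L\'evy-type concentration of $p_{\vec{x}}=\bra{\vec{x}}\ups_{BR_1}\ket{\vec{x}}$ about its Haar mean $1/D$, which controls the rare event of anomalously small $p_{\vec{x}}p_{\vec{y}}$. Combined with the numerator estimate above and the trivial simplification $D^2\pm1\approx D^2$, this produces exactly the bound $\lesssim(\tr[\ups_{BR_1}^2]-1/d_{BR_1})/\delta$ claimed in the theorem. The $\lesssim$ and $\approx$ symbols in the statement absorb precisely this concentration-based simplification, which becomes tight in the large-$d_{BR_1}$ limit.
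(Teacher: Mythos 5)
Your proposal is correct and follows essentially the same route as the paper: reduce the conditional-state overlap to a matrix element of $\ups_{B R_1}$ via the Schmidt decomposition across $BR_1:R_2$, compute the Haar second moment with the Weingarten/swap identity (the paper generates the orthogonal pair as $U\ket{0},U\ket{1}$ and averages $U^{\otimes 2}$, equivalent to your Stiefel-frame moment), and finish with Markov's inequality. The only difference is that you explicitly track the normalization denominators $p_{\vec{x}}p_{\vec{y}}$ and argue they concentrate at their typical value, whereas the paper suppresses them from the outset; this is a bookkeeping refinement consistent with the $\lesssim$ in the statement rather than a different proof.
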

A proof for this can be found in Appendix~\ref{ap:proofs1}. This result is also valid for sampling a random butterfly from a $2-$design rather than fully Haar random, which can be done efficiently in practice.  

This constitutes a concrete connection between the fidelity between final states in Eq.~\eqref{eq:orthogg}, and spatiotemporal entanglement of the pure process $\ket{\ups}$. The key point is that for volume-law entanglement of the process $\ket{\ups}$, the purity of the reduced state on $\mc{H}_B \otimes  \mc{H}_{R_1}$ is inversely proportional to the size of the subsystem,
\begin{equation}
    \tr[\ups_{B    {R_1}}^2 ] \sim \mc{O}(\frac{1}{d_{B    {R_1}}}).
\end{equation}
For a Choi state which is truly volume-law -- rather than just maximally entangled across some specific splitting -- this is the case for \emph{any} choice of $\mc{H}_{R_1}$ up to causality considerations. So for volume-law, the right hand side of Eq.~\eqref{eq:thm1a} is close to zero for almost any small $\delta > 0$. Therefore, for most random unitary butterflies, ${\zeta}$ in Eq.~\eqref{eq:state_chaos} will likely be small for volume-law processes. 


Note that framing chaos in terms of the entanglement properties of $\ket{\ups}$ is independent of the instrument, i.e., the butterfly flutter represented by $\ket{\vec{x}}$. This allows for testing of this principle against any previous or new heuristic of quantum chaos. It also implies that the manifestation of quantum chaos may be tested for strong or weak butterflies, and many-time or few-time, which turns out to be the distinction between the Peres-Loschmidt Echo and Local-Operator Entanglement, as we show in Section~\ref{sec:prev_chaos}.

One might now want to know if volume-law spatiotemporally entangled processes exist; if the condition \ref{c2} is too strong. In fact, from concentration of measure results, it is known that most processes generated from Haar random dynamics are locally exponentially close to the completely noisy process~\cite{FigueroaRomero_Modi_Pollock_2019},
\begin{equation} \label{eq:typ_process}
    \tr_{R}[\ket{\ups^{(\mathbb{H})}}\bra{\ups^{(\mathbb{H})}}] \approx \frac{\id}{d_B}, \text{ for } d_B \ll d_R
\end{equation}
and polynomially close for dynamics sampled from an $\epsilon-$approximate $t-$design~\cite{FigueroaRomero_Pollock_Modi_2021}. Such a process also has volume-law spatiotemporal entanglement, as we prove in Section~\ref{sec:typ_process}.

We now move to our final condition on quantum chaos.

\subsection{Complexity of Sensitivity to Perturbation (C3)}

We now introduce a final, strictly stronger measure of chaos, based on the notion of how far the final states are, after two distinct butterfly flutters. This is not just a fidelity measure like we have so far considered, but rather the fidelity after a restricted correction. 
\begin{mydef} \label{def:BFF}
    The {Butterfly Flutter Fidelity} takes values between $0 \leq \zeta \leq 1 $, and is defined as 
\begin{align}
    \zeta(\ups)  :=&  \underset{{V \in \mc{R}, \braket{\vec{x}|\vec{y}}=0}}{\mathrm{sup}}  \left(| \langle \ups_{R|\vec{x}} | V | \ups_{R|\vec{y}} \rangle|^2 \right) \label{eq:state_chaos} 
\end{align}
Here, $V$ is a unitary operation on the full (spatial) Hilbert space $\mc{H}_R$, and is restricted to some low-complexity set $\mc{R}\subset \mathbb{U}(d_R)$.
\end{mydef}

Note that often in Eq.~\eqref{eq:state_chaos} we will instead choose a particular pair of butterfly flutters $\ket{\vec{x}}$ and $\ket{\vec{y}}$, or otherwise average over some set of them. This is order to perform analytic calculations or to draw comparisons with other quantities, and the interpretation of a sensitivity to perturbation holds true without an optimization over all possible butterfly flutters satisfying $\braket{\vec{x}|\vec{y}}=0$.

Intuitively, the Butterfly Flutter Fidelity measures how difficult it is to convert one resultant state $\ups_{R|\vec{x}}$ to the other $\ups_{R|\vec{y}}$. In other words, it measures how easily correctable the effect of a past butterfly flutter is. We leave open the exact measure of the complexity with which the `correction' unitary $V$ is restricted. Possible choices include specifying $V$ to be: a constant depth local circuit, independent of the system size or time evolution in the process; a local circuit with depth proportional to the size of the system $d_R$ but independent of the time of evolution; a unitary with an appropriately defined restricted Nielsen Complexity~\cite{Nielsen2006}; or an MPO of restricted (constant) bond dimension. Of course, many of these measures are related. It would be an interesting avenue of future research to investigate this quantity in more detail and for different models. For the rest of this work, we will generally take $V$ such that it can be represented by an MPO with a restricted bond dimension, part of the set $\mc{R}_{\text{MPO}}$. Therefore, a process will be chaotic according to \ref{c3} if it is not possible to efficiently correct the effects of a past butterfly flutter. We note that the Butterfly Flutter Fidelity reduces to simply the fidelity, as in \ref{c1} and Eq.~\eqref{eq:orthogg}, when $V$ is restricted to the identity $\mc{R}=\{ \id \}$.

We will now show that this is a strictly stronger condition than volume-law spatiotemporal entanglement, that $\ref{c3} \implies \ref{c2} $. 
\begin{restatable}{prop}{zetaST} \label{thm:zeta-ST}
    If the Butterfly Flutter Fidelity \eqref{eq:state_chaos} is small, $\zeta \approx 0$, then the process $\ket{\ups}$ has volume-law spatiotemporal entanglement. 
\end{restatable}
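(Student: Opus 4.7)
The plan is to prove the contrapositive: if the process $\ket{\ups}$ does not have volume-law spatiotemporal entanglement, then $\zeta$ is bounded away from zero. Concretely, I will exhibit a pair of orthogonal butterflies $\ket{\vec{x}}, \ket{\vec{y}}$ and a correction unitary $V$ belonging to the restricted class $\mc{R}_{\mathrm{MPO}}$ for which $|\langle \ups_{R|\vec{x}}|V|\ups_{R|\vec{y}}\rangle|^2$ is close to unity, thereby ruling out $\zeta \approx 0$.

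Suppose condition \ref{c2} fails, so that there is a causally-connected bipartition $B R_1 : R_2$ across which the Schmidt rank of $\ket{\ups}$ is bounded by some $\chi$ that grows at most sub-extensively with $d_{R_2}$. Schmidt-decomposing $\ket{\ups} = \sum_{i=1}^{\chi} \lambda_i \ket{a_i}_{BR_1} \otimes \ket{b_i}_{R_2}$, any butterfly projection on $\mc{H}_B$ yields a conditional state
\[
    \ket{\ups_{R|\vec{x}}} \propto \sum_{i=1}^{\chi} \lambda_i \ket{\tilde{a}^{x}_i}_{R_1} \otimes \ket{b_i}_{R_2},
\]
which lives in the subspace $\mc{H}_{R_1}\otimes \mathrm{span}\{\ket{b_i}_{R_2}\}_{i=1}^{\chi}$ of dimension at most $d_{R_1}\chi$, and the same holds for $\ket{\ups_{R|\vec{y}}}$. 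For any orthogonal pair of butterflies projecting non-trivially onto this subspace I would then construct the unitary $V = P V' P + (\id - P)$ on $\mc{H}_R$, where $P$ is the orthogonal projector onto the low-dimensional subspace and $V'$ is chosen so that $V \ket{\ups_{R|\vec{y}}} = \ket{\ups_{R|\vec{x}}}$; this makes the overlap equal to one for this specific choice.

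The principal obstacle is verifying that $V$ belongs to $\mc{R}_{\mathrm{MPO}}$, i.e.\ that it can be written as an MPO of bounded bond dimension. The Schmidt vectors $\ket{b_i}_{R_2}$ inherit an MPS form of bond dimension at most $\chi$ from any tensor-network representation of $\ket{\ups}$, so the projector $P = \id_{R_1} \otimes \sum_i \ket{b_i}\bra{b_i}_{R_2}$ admits an MPO representation of bond dimension $O(\chi^2)$, and the intra-subspace rotation $V'$ contributes at most another $O(\chi^2)$ factor; the composite $V$ therefore has a bond dimension that is independent of the total system size. Making this tight requires choosing the complexity cutoff in the definition of $\mc{R}_{\mathrm{MPO}}$ consistently with the area-law scale considered, handling the causality restriction on $R_1$ via a Lieb--Robinson bound as in the formulation of condition \ref{c2}, and promoting exact Schmidt-rank bounds to approximate ones through entropy continuity (e.g.\ Fannes--Audenaert). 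These are primarily technical hurdles rather than conceptual ones; the guiding intuition is that area-law spatiotemporal entanglement leaves enough simple structure in $\ket{\ups}$ for a low-complexity correction to realign the two conditional states, so that only a genuinely volume-law process can remain uncorrectable under every $V \in \mc{R}_{\mathrm{MPO}}$.
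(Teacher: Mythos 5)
Your overall strategy is the same as the paper's: prove the contrapositive by exhibiting, under an area-law assumption, an explicit low-complexity correction unitary that realigns the two conditional states, forcing $\zeta\approx 1$. The construction differs in detail. The paper notes that area-law spatiotemporal entanglement makes both $\ket{\ups_{R|\vec{x}}}$ and $\ket{\ups_{R|\vec{y}}}$ efficiently representable MPS, hence preparable from a common product state $\ket{\psi_0}$ by low-bond-dimension MPO unitaries $V_{\vec{x}},V_{\vec{y}}$, and simply takes $V=V_{\vec{x}}^\dg V_{\vec{y}}\in\mc{R}_{\mathrm{MPO}}$. Your subspace rotation $V=PV'P+(\id-P)$ accomplishes the same thing but routes through the Schmidt vectors $\ket{b_i}_{R_2}$ and a projector $P$, which is more machinery than is needed: once you know both conditional states are bounded-bond-dimension MPS, the disentangle-and-re-entangle composition is immediate and sidesteps the bond-dimension bookkeeping for $P$ and $V'$.

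There is, however, one point where your argument as stated has a genuine gap. You negate \ref{c2} by assuming only that a \emph{single} bipartition $BR_1:R_2$ has bounded Schmidt rank $\chi$, but you then assert that the Schmidt vectors $\ket{b_i}_{R_2}$ "inherit an MPS form of bond dimension at most $\chi$." That inheritance does not follow from a rank bound across one cut: the $\ket{b_i}$ could themselves be volume-law entangled across cuts internal to $R_2$ (e.g.\ $\ket{\ups}$ could be a rank-one product across $B:R$ with $\ket{b_1}$ a volume-law state), in which case $P=\id_{R_1}\otimes\sum_i\ket{b_i}\bra{b_i}$ and the rotation $V'$ have no bounded-bond-dimension MPO representation and your $V\notin\mc{R}_{\mathrm{MPO}}$. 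What the argument actually requires — and what the paper implicitly assumes by writing "area-law spatiotemporal entanglement" — is bounded bond dimension across \emph{all} the relevant cuts simultaneously, i.e.\ that $\ket{\ups}$ (and hence each conditional state) is globally an efficient MPS. With that strengthened hypothesis your construction goes through, but at that point the paper's $V=V_{\vec{x}}^\dg V_{\vec{y}}$ is available and simpler. The remaining technicalities you flag (causality/light-cone restrictions, approximate versus exact rank, continuity of entropy) are indeed of the kind the paper also leaves at the physics level.
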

A proof for this can be found in Appendix~\ref{ap:proofs1}.
The question remains of just how strong the condition \ref{c3} is. That is, when is there volume-law spatiotemporal entanglement in a process \ref{c2}, but the effects of a butterfly flutter are easily correctable? In fact, the only case where \ref{c2} and \ref{c3} are not equivalent is if the process has a volume-law initial state. 

\begin{restatable}{prop}{stateChaos} \label{thm:state_chaos}
     If the Butterfly Flutter Fidelity \eqref{eq:state_chaos} is not small (non-chaotic), $\zeta \approx 1$, but the process $\ket{\ups}$ has volume-law spatiotemporal entanglement, the process can be written as a simple dynamics with a volume-law entangled initial state.
\end{restatable}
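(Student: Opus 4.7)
The plan is to prove Proposition~\ref{thm:state_chaos} by translating the hypothesis $\zeta \approx 1$ into a non-scrambling property of the dynamics $\{U_i\}$, and then observing that volume-law spatiotemporal entanglement in a non-scrambling process must originate entirely from the initial state. First I would fix one reference butterfly flutter $\vec{y}$ to be the trivial (identity) butterfly and let $\vec{x}$ range over orthogonal unitary butterflies. For each such $\vec{x}$, the hypothesis guarantees a low-complexity correction $V_{\vec{x}} \in \mc{R}_{\mathrm{MPO}}$ of bounded bond dimension with $V_{\vec{x}} \ket{\ups_{R|0}} \approx \ket{\ups_{R|\vec{x}}}$. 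Reading off $V_{\vec{x}}$ on the orbit of the initial state, and telescoping the cancellation of successive $U_i U_i^\dg$ pairs in the product $\mc{U}_{\vec{x}} \mc{U}_{0}^\dg$ (where $\mc{U}_{\vec{w}} := A_{w_k} U_k \cdots A_{w_1} U_1$), one identifies $V_{\vec{x}}$ on support with a product of Heisenberg-evolved local operators of the form $U_k \cdots U_{i+1} A_{x_i} U_{i+1}^\dg \cdots U_k^\dg$. Requiring this product to have bounded MPO bond dimension for every choice of butterfly flutter yields an operator-spreading bound: every local Heisenberg operator under the dynamics $U_k \cdots U_{i+1}$ admits a low-bond-dimension MPO representation --- the hallmark of \emph{simple}, non-scrambling dynamics.

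Next I would use this non-scrambling property to localize the source of the entanglement in $\ket{\ups}$. Writing $\ket{\ups} = \ket{\mathtt{U}_k} * \cdots * \ket{\mathtt{U}_1} * \ket{\psi_{SE}}$ via Eq.~\eqref{eq:PurePT}, the entanglement across any spatiotemporal bipartition receives contributions from the initial state and from the Link product with each $\ket{\mathtt{U}_i}$. Bounded operator spreading of the $U_i$ translates directly into a bound on the Schmidt rank of each $\ket{\mathtt{U}_i}$ across the relevant spatial cuts of $R$ identified in Section~\ref{sec:c1}, so Link-composing with these Choi states cannot generate extensive entanglement beyond what is already present in $\ket{\psi_{SE}}$. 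Since $\ket{\ups}$ is by hypothesis volume-law across every such bipartition within the causal cone, the extensive entropy can only be inherited from the initial state, which must therefore itself be volume-law entangled. This gives the claimed decomposition: a simple (non-scrambling) dynamics acting on a volume-law initial state.

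The main technical hurdle lies in converting a single small-$\zeta$ statement into a \emph{uniform} operator-spreading bound over all butterfly flutters, and pinning down exactly what "simple dynamics" means at the operator level. I expect this will require varying $\vec{x}$ over a sufficiently rich probe set of local unitaries (for instance a local Pauli basis, or a local $2$-design) and taking a union bound, then inverting the resulting family of MPO representations of $V_{\vec{x}}$ to extract an MPO complexity bound on the conjugation channel $O \mapsto U_k \cdots U_{i+1} O U_{i+1}^\dg \cdots U_k^\dg$ itself. A further subtlety is the treatment of non-unitary (projective) butterflies, for which $\mc{U}_{\vec{x}}$ is not invertible globally; I would sidestep this by first carrying out the argument for unitary butterflies, where $\mc{U}_{\vec{x}} \mc{U}_{\vec{y}}^\dg$ is a bona fide unitary, and then extending to general rank-one instruments by a support-based refinement on the conditional states $\ket{\ups_{R|\vec{x}}}$.
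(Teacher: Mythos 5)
Your opening move --- using $\zeta\approx 1$ to relate the conditional states by low-complexity unitaries --- matches the paper's, but the route you take from there has a genuine gap. You propose to identify $V_{\vec{x}}$ with $\mc{U}_{\vec{x}}\mc{U}_0^\dg$ and thereby extract an operator-spreading bound on the Heisenberg conjugation channels of the actual dynamics $U_i$. The fidelity condition, however, only constrains $V_{\vec{x}}$ on the single vector $\mc{U}_0\ket{\psi_{SE}}$ (and, varying the butterflies, on a span of dimension at most $\sim d_B^2 \ll d_R$); it says nothing about how $V_{\vec{x}}$, or the true operator $\mc{U}_{\vec{x}}\mc{U}_0^\dg$, acts on the orthogonal complement. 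An operator-level statement of the kind you want genuinely requires quantifying over initial states --- which is exactly the extra hypothesis Theorem~\ref{thm:LOE_BFF} assumes (``for any initial state'') and which Proposition~\ref{thm:state_chaos} does not grant, since the process $\ket{\ups}$ fixes $\ket{\psi_{SE}}$. Your second step compounds this: even granting a bound on the local-operator entanglement of Heisenberg-evolved $A_{x_i}$, it does not follow that each $\ket{\mathtt{U}_i}$ (the Choi state of the \emph{global} unitary) has small Schmidt rank across spatial cuts. The operator entanglement of the evolution operator itself is a different, generically extensive quantity even for manifestly regular dynamics (the SWAP-circuit/Lindblad--Bernoulli example of Appendix~\ref{sec:LB}, or free fermions), so the Link-product entanglement-accounting argument would not close.

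The paper's proof avoids both problems by never constraining the physical $U_i$. It takes a full basis of butterflies $\{\ket{\vec{x}_i}\}$, notes that all conditional states are simple-circuit-equivalent, hence each equals $V_i\ket{R_0}$ for a single common volume-law state $\ket{R_0}$, and then invokes the gauge freedom of the process tensor to \emph{re-declare} $\ket{B_0 R_0}$ as the initial state, with the controlled map $\sum_m \lambda_m (\ket{B_m}\bra{B_m}\otimes V_m)$ as the new --- manifestly MPO-simple --- dynamics. The conclusion ``can be written as'' is a statement about an equivalent representation of the process, not about the original unitaries; reorienting your argument toward that re-representation, rather than toward inverting the correction unitaries to bound operator spreading, is what is needed to make it go through.
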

A proof for this can be found in Appendix~\ref{ap:proofs1}.
What this result means is that in the particular case where a process is regular according to \ref{c3}, but chaotic according to \ref{c2}, then all the volume-law entanglement is attributed to the initial state. The dynamics part of the process can be considered to have area-law entanglement.

In the setup we have suggested to classify chaos in quantum systems, one interacts locally with a quantum system across multiple times and examines the effect on the final, global pure state. In this situation, the above result Prop.~\ref{thm:state_chaos} means that in terms of the entanglement properties of the corresponding process $\ket{\ups}$, one cannot distinguish between a process that first prepares a volume-law spatial entanglement state from a process that genuinely creates volume-law spatiotemporal entanglement from the dynamics. One way to interpret this is that complex spatial entanglement in itself is chaotic. We refer to this as quantum \emph{state chaos}: for a volume-law entangled state, performing an operation on a small part of a large state instantaneously has a highly non-local and strong effect on the remainder of the state. 

This also follows from the fact a multipartite quantum state is also a quantum channel, through teleportation. This is a purely quantum effect, and thus there is no classical analog to quantum state chaos. Volume-law spatiotemporal entanglement is equivalent to chaos in the sense of a strong, non-local sensitivity to perturbations, regardless of whether these perturbations occur simultaneously to the effect (state chaos), or in the past with the effect stemming from dynamics (as measured by the Butterfly Flutter Fidelity \ref{c3}). However, in the traditional dynamical sense, the butterfly flutter fidelity measures the chaoticity of the dynamics and so can be seen as equivalent to the quantum butterfly effect: the operationally meaningful notion of quantum chaos.

\subsection{Sensitivity to initial perturbations}
The above operational understanding for quantum chaos readily resolves a fundamental question. Namely, are quantum chaotic systems sensitive to an initial perturbation? 

The usual argument against a quantum sensitivity to perturbation is that the distance (or fidelity) between two initial states, $\epsilon = |\braket{\psi |\phi}|$, is preserved with unitary time evolution
\begin{equation} \label{eq:ignorant_butterfly}
    |\braket{\psi | U_t^\dg U_t |\phi}| = |\braket{\psi|\phi}| = \epsilon.
\end{equation}
This precludes a straightforward notion of exponential (or otherwise) deviation with respect to $\epsilon$.

\ref{c3} includes a rather direct and intuitive notion of sensitivity to initial conditions. Consider a single-time butterfly flutter protocol, with a perturbative operations, $X$ and $Y$ on initial state $\ket{\psi}$. Eq.~\eqref{eq:state_chaos} then reduces to a sensitivity of the resultant state after this initial operation
\begin{equation}
     \zeta(\ups) =\underset{{V\in \mc{R}}}{\mathrm{sup}}  \left(| \langle \psi | X^\dg U^\dg_t V U_t Y |\psi  \rangle|^2 \right).
\end{equation}
Here we have assumed the local perturbations $X\equiv (X_S \otimes \id_E)$ are unitary for simplicity, and so the final states are normalized. Instead of comparing the final state fidelity given an initially perturbed state as in Eq.~\eqref{eq:ignorant_butterfly}, the single-time flutter corresponds to how difficult it is to correct the resultant state from a local perturbation. This notion of difficulty encompasses the complexity inherent to quantum mechanics but admits the classical analog of sensitivity to perturbation.

\subsection{Determining The Butterfly Flutter Protocol in the Laboratory} \label{sec:ancilla}
The quantum butterfly protocol Eq.~\eqref{eq:state_chaos} is a fidelity of final pure states, which apparently requires a backwards-in-time global evolution to compute. In this section we show that by appending a quantum ancilla space to the protocol, one can compute $\zeta$ through only forward-in-time evolution. 

Consider the same setup as the butterfly flutter protocol in Eq.~\eqref{eq:state_chaos}, with appended qubit ancilla space $\mc{H}_A$, with combined initial state
\begin{equation}
    \ket{\psi_{SEA}} = \ket{\psi}_{SE} \otimes \ket{+}_A.
\end{equation}
Then, for a butterfly flutter defined by the unitary instruments $A_{x_1}, A_{x_2}, \dots, A_{x_k}$ acting on $\mc{H}_S$, define an instrument at time $t_i$ on the full $SEA $ space as 
\begin{equation}
    A^\prime_i :=  \id_E \otimes \left(A_{x_i} \otimes \ket{0}\bra{0} +A_{y_i} \otimes \ket{1}\bra{1} \right).
\end{equation}
We also define an additional (controlled) instrument which encodes the correction unitary $V$,
\begin{equation}
    V^\prime:= \id_{SE} \otimes \ket{0}\bra{0} + V \otimes  \ket{1}\bra{1}.
\end{equation}
Then the final state of the reduced state of the ancilla qubit at the end of the forward-in-time evolution of the butterfly protocol is 
\begin{equation} \label{eq:rhoPrime}
    \rho= \tr_{SE}[\mc{V}^\prime \mc{U}_{k} \mc{A}^\prime_k  \dots \mc{U}_1 \mc{A}^\prime_1  (\ket{\psi_{SEA}} \bra{\psi_{SEA}})]
\end{equation}
where as is standard throughout this work, calligraphic script letters correspond to superoperators, $\mc{A}_i^\prime (\cdot )\equiv A^\prime_i(\cdot )A_i^{\prime \dg}$, $\mc{V}^\prime (\cdot )\equiv V^\prime(\cdot )V^{\prime \dg}$, and $\mc{U}_i^{\prime } (\cdot )\equiv U^\prime_i(\cdot )U^{\prime \dg}_i$ is global unitary $\mc{H}_S \otimes \mc{H}_E$ evolution. This protocol is presented in Fig.~\ref{fig:ancilla}. Then it is easy to check that the squared off-diagonal elements of the final state of the ancilla qubit gives exactly the Butterfly Flutter Fidelity \eqref{eq:state_chaos}, for a given choice of correction unitary $V$. Further, the off-diagonal elements of a density matrix are easily measurable, 
\begin{equation}
    \zeta = |\bra{0} \rho \ket{1}|^2 = |\frac{1}{2}\braket{\sigma_x + i \sigma_y }|^2.
\end{equation}
This can be directly generalized to larger ancilla spaces, if one wanted try a set of different butterfly flutters.

\begin{figure}[t]
  \includegraphics[width=0.47\textwidth]{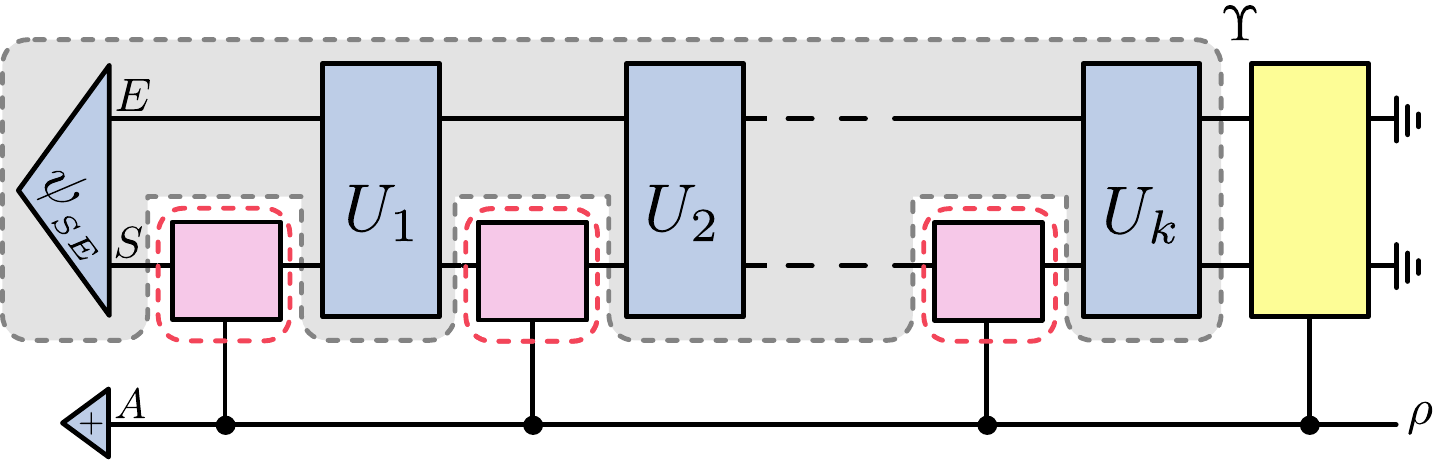}
  \caption{\footnotesize{The forward-in-time protocol for measuring the Butterfly Flutter Fidelity. Here the pink controlled operations correspond to the butterfly flutters $\ket{\vec{x}}$ for ancilla qubit equal to $\ket{0}\bra{0}$, and $\ket{\vec{y}}$ for ancilla qubit equal to $\ket{1}\bra{1}$. The (yellow) global operation at the end (rightmost) is then similarly controlled to be either the identity map, or the correction unitary $V$ to be optimized over (see discussion around Eq~\eqref{eq:state_chaos}). The Butterfly Flutter Fidelity is then stored in the coherences of the final state of the ancilla, $\rho$, while information of decoherence effects is encoded in the diagonals.}} 
  \label{fig:ancilla}
\end{figure}

We will finish this section by making some remarks about the limitations of a realistic experimental setup. In a non-isolated situation where the initial state may be mixed and where evolution may not be unitary, then Eq.~\eqref{eq:rhoPrime} takes the form
\begin{equation} \label{eq:rhoPrime1}
    \rho = \tr_{SE}[\mc{V}^\prime \mc{A}^\prime_k \mc{L}_{k-1} \dots \mc{A}^\prime_1 \mc{L}_1 (\rho_i)].
\end{equation}
Here, $\rho_i$ is some arbitrary initial state, and $\mc{L}_i$ represents open quantum system evolution which may include arbitrary decoherence effects ($\mc{L}_i$ is generally a CPTP map). In practice, Eqs.~\eqref{eq:rhoPrime} and \eqref{eq:rhoPrime1} require an identical protocol from a (hypothetical) experimenter. In practice, one cannot easily tell whether the scaling of the butterfly flutter fidelity is according to Eq.~\eqref{eq:rhoPrime1} or the perfectly isolated Eq.~\eqref{eq:rhoPrime}. This is a problem faced with other measures of the quantum case (such as the OTOC), and indeed even classically it is difficult to discern between noise and dynamical chaos. 

In the BFF protocol, one can check the unitarity of the dynamics by checking the purity of the final total state. This requires access to two copies of the final state to perform a swap test. One alternative to this setup is to perform the butterfly flutter protocol with the correction unitary coming from the set of all possible unitaries, $\mc{R} = \mathbb{U}(d_R)$ in Eq.~\eqref{eq:state_chaos}. This means that the correction unitary $V$ will align the resultant states $\ups_{R|\vec{x}}$ and $\ups_{R|\vec{y}}$ to give $\zeta = 1$ if the dynamics are unitary (following Eq.~\eqref{eq:rhoPrime}). If the dynamics were not unitary (Eq.~\eqref{eq:rhoPrime1}), this should not be possible and so in this case we have $\zeta < 1 $. Of course, it is highly expensive and non-trivial to implement an optimization over all possible unitaries in Eq.~\eqref{eq:state_chaos}.

This protocol allows one to perform a forward-in-time experiment to determine the butterfly flutter fidelity. This requires a perfect control over the system-environment space (the $R$ space), in order to implement the correction unitary $V$, and a perfectly isolated ancilla space which is not itself influenced by decoherence effects or other uncontrolled dynamics. However, the correction unitary itself is in principle easy to implement by construction. The predominant difficulty is how exactly to perform the maximization over $V \in \mc{R}$ in Eq.~\eqref{eq:state_chaos}. It would be interesting to determine an efficient algorithm that could approximate this optimization.

\subsection{Summary and Discussion}
We now restate the hierarchy of conditions for quantum chaos. For two butterfly flutters, with Choi states $\ket{\vec{x}}$ and $\ket{\vec{y}}$, we call a process $\ket{\ups}$ chaotic if: 
\begin{enumerate}[label={\textbf{{(C\arabic*)}}}]
    \item \label{c1}\textit{(Perturbation orthogonalizes future state)} The final state on $R$ should be strongly sensitive to butterflies on $B$: 
    \begin{equation}
        |\braket{\ups_{R|\vec{x}}  |\ups_{R|\vec{y}}}|^2 \approx 0,
    \end{equation}
    or equivalently (Prop.~\ref{thm:BR_ent})
    \begin{equation}
        S(\ups_B) \sim \log(d_B).
    \end{equation}.
    \item \label{c2}\textit{(Scrambling as volume-law entanglement)} Butterflies on $B$ should affect a large portion of the final state on $R$: 
    \begin{equation}
        S(\ups_{B_1 R_1}) \sim \log(d_{B_1} d_{R_1}),
    \end{equation}
    for appropriate choices of $\mc{H}_{R_1} \subset \mc{H}_R $ and $\mc{H}_{B_1} \subseteq \mc{H}_B $.
    \item \label{c3}\textit{(Complexity of sensitivity)} Different butterflies on $B$ should lead to different enough states on $R$, as measured by the Butterfly Flutter Fidelity
    \begin{equation}
        \zeta(\ups) = \underset{{V\in \mc{R},\braket{\vec{x}|\vec{y}}=0}}{\mathrm{sup}} |\bra{\ups_{R|\vec{x}}} V  \ket{\ups_{R|\vec{y}}}|^2 \approx 0
    \end{equation}
    for some defined set of bounded-complexity unitaries $\mc{R}$.
\end{enumerate}
The operational criteria for quantum chaos impose several restrictions on the spatiotemporal correlation content of a process. \ref{c1} and \ref{c2} require that $\ups$ is volume entangled, while \ref{c3} further requires that the process itself must be able to dynamically generate volume-law spatiotemporal entanglement. Importantly, these criteria directly led to a universal operational metric for quantum chaos in Eq.~\eqref{eq:state_chaos}, which we showed to be accessible in a laboratory setting.

We then used these ideas, especially \ref{c3}, to show how quantum processes are also sensitive to initial conditions much like their classical counterparts. This opens up the possibility of operationally defining quantum Lyapunov exponents to further close the gap between the theories of classical and quantum chaos. Finally, \ref{c3} has the same flavor as the \textit{complexity=volume} conjecture due to Susskind~\cite{susskind}, also see~\cite{bouland} for the more definite version of the same conjecture. Namely, the operational metric for quantum chaos is concerned with the complexity of the correction unitary in Eq.~\eqref{eq:state_chaos}. Our results therefore hint that quantum chaos may be key to understanding this conjecture, fitting with the common belief that black holes are maximally chaotic quantum systems~\cite{Sekino_Susskind_2008,Shenker2014}. On the other hand, the tools presented in Ref.~\cite{bouland} are likely applicable to the case of quantum chaos.

We show in Section~\ref{sec:loe} that the previous dynamical signature of the Local-Operator Entanglement measures this single-time sensitivity, optimizing over any initial state. Further, it can be shown that out-of-time-order correlators generically probe this operator entanglement~\cite{dowling2023scrambling}. The hierarchy \ref{c1}-\ref{c3} gives a robust understanding of why these previous diagnostics measure chaos, in terms of a future sensitivity to past local operations.

\section{Connection to Previous Signatures}\label{sec:prev_chaos}
Our construction so far has involved a first-principles proposition of a series of conditions that mean chaos as a sensitivity to perturbation in quantum systems. We will now show how these conditions \ref{c1}-\ref{c3} compare to previous dynamical signatures of chaos (see diagram of this connection in Fig.~\ref{fig:internal-external}). The Peres-Loschmidt Echo corresponds to \ref{c1} in the many-time limit and for weak butterflies, while Dynamical Entropy is exactly the entanglement scaling of $\ket{\ups}$ in the splitting $B:R$, and so is in some sense equivalent to the Peres-Loschmidt Echo scaling according to Proposition~\ref{thm:BR_ent}. The tripartite mutual information measures spatiotemporal entanglement for a single-time butterfly, and so \ref{c2} can be seen as a multitime generalization of this measure. Finally, the local-operator space entanglement measures the required entanglement complexity of the correction unitary $V$, such that $\zeta(\ups) = 1$ for any initial state. See Fig.~\ref{fig:flow} for a summary of these connections. In this section we will explain these diagnostics and show each of these connections in turn. Our first principles construction is supported by, and contains a range of previous notions of quantum chaos from recent years, all within a single intuitive framework.

\subsection{Peres-Loschmidt Echo} \label{sec:LE_comb}
The Peres-Loschmidt Echo measures the sensitivity of an isolated quantum system to a weak perturbation to the dynamics~\cite{Peres1984stab}.\textsuperscript{\footnote{This quantity is alternatively called fidelity decay or Loschmidt Echo in the literature. We take the middle ground here, to give credit to the historical role of Peres~\cite{Peres1984stab}, while remaining familiar to those who recognize this quantity as the latter.}} It is equal to the deviation in fidelity between the same initial states evolving unitarily according to some Hamiltonian compared to a perturbed Hamiltonian,
\begin{equation}
    |\braket{\psi_t | \psi^\epsilon_t}|^2=|\braket{\psi| e^{iHt} e^{-it(H+\epsilon T)} |\psi}|^2.
\end{equation}
This equivalently measures the distance from the initial state, when a state evolves forward in time, then evolves backwards in time according to imperfect evolution. Exponential decay with time is regarded heuristically to mean quantum chaos. In practice~\cite{Emerson2002,Poulin2004}, one often needs to discretise the dynamics in order to realize the perturbation to the Hamiltonian, $T$. To do so, one can use the  Trotter approximation of the perturbed evolution,
\begin{align}
    e^{-it(H+\epsilon T)} &\approx (e^{iH \delta t}e^{i \epsilon T \delta t})^k, \label{eq:trotter}\\
    &=:(U_{\delta t} W_\epsilon  )^k \nn
\end{align}
where $k \delta t = t$, which is valid for large $k$ and small $\delta t$. Then, up to  Trotter error~\cite{Childs2021}, the Peres-Loschmidt Echo corresponds to the fidelity between two final states, given the application of $k$ identity channels, compared to $k$ unitaries which are $\epsilon-$close to the identity (see Fig.~\ref{fig:combs_probes} a)). From this, we can already see that the Peres-Loschmidt Echo falls into the category of a fidelity between resultant states given two past butterfly flutters as in \ref{c1}, Eq.~\eqref{eq:orthogg}. 

In addition to the trotterization, the key difference between our condition \ref{c1} and the Peres-Loschmidt Echo is that instead of optimal butterflies, we specify the two many-time butterfly flutters to be projections which are $(k\epsilon)-$close. These two projections are, respectively, the Choi states of a sequence of $k$ weak unitaries and a sequence of $k$ identity maps, such that
\begin{align}
    |\braket{\vec{x}|\vec{y}}|:=& |\langle W_{\epsilon}^{\otimes k} | \id^{\otimes k} \rangle| \nn \\
    =& |\langle W_{\epsilon} | \id \rangle|^k  \label{eq:le_decay}\\
    =& (1-\epsilon)^k d_S^{2k}. \nn
\end{align}
where we recall that $d_B=d^{2k}$. Then for a typical volume-law process, consisting of random dynamics as described around Eq.~\eqref{eq:typ_process} and further explored in Section~\ref{sec:typ_process}, under the action of any two butterflies of appropriate size we have that $\ups_B \sim \id/d_B $, and so typically
\begin{equation}
    \begin{split} \label{eq:LE_exp}    |\braket{\ups_{R|\vec{x}}|\ups_{R|\vec{y}}}|^2 &\approx  \frac{|\langle \vec{x}|\id/d_B|\vec{y} \rangle |^2}{{\langle \vec{x}|\id/d_B |\vec{x}\rangle \langle \vec{y} | \id/d_B |\vec{y} \rangle} } \\ 
        &= (1-\epsilon)^{2k} d_B^2 (1/d_B^2) \\
        &= (1-\epsilon)^{2k}  \\
        &\approx \ex^{- 2 k \epsilon } , \text{ for small $\epsilon$,}\\
        &\approx 0, \text{ for large $k$}.
    \end{split}
\end{equation}
In the first line we have used the Schmidt decomposition, as in Fig.~\ref{fig:zeta_cji} \textbf{b)} and Eq.~\eqref{eq:schmidt1}. 
For an area-law Choi state, this fidelity will be larger, and tend to scale as the leading-order Schmidt coefficient. 

\begin{figure}[t]
  \includegraphics[width=0.49\textwidth]{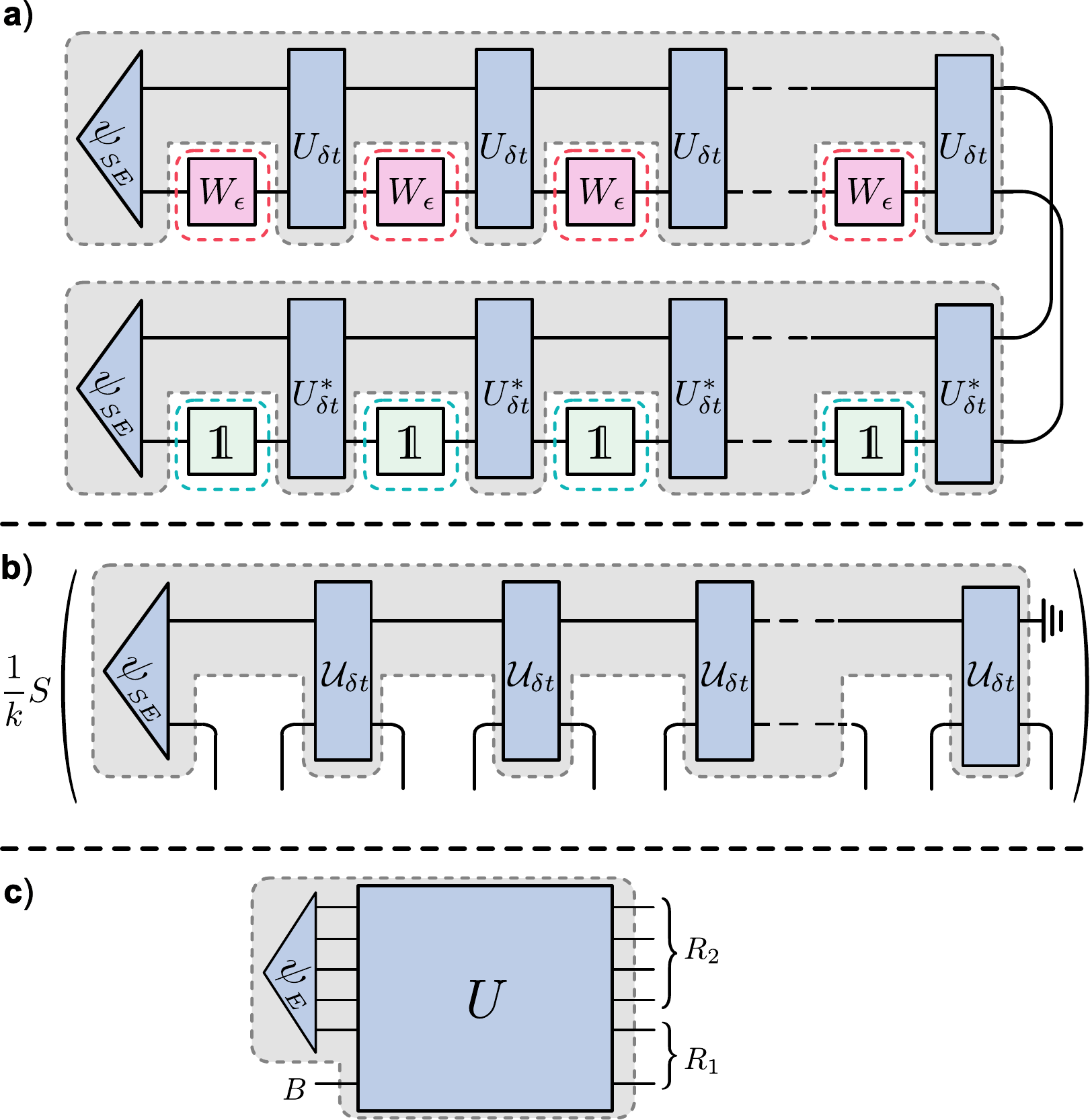}
  \caption{\footnotesize{The quantum process construction for the dynamical signatures: (a) Peres-Loschmidt Echo, (b) Dynamical Entropy, and (c) Tripartite Mutual Information. Note that the diagrams of $\mathbf{a})$ and $\mathbf{c})$ are in the pure state representation, such that the initial state is a state-vector (ket), and boxes correspond to matrices (Latin script), while $\mathbf{b})$ is in the superoperator representation, such that the initial state is a vectorized density matrix, and boxes represent quantum channels in the Liouville superoperator representation (calligraphic script). See Section~\ref{sec:processes} and Ref.~\cite{OperationalQDynamics} for further details.}}
  \label{fig:combs_probes}
\end{figure}

For a given  Trotter error, time evolution corresponds to increasing $k$, for a constant $\delta t$ and $\epsilon$. Therefore in Eq.~\eqref{eq:LE_exp} we can see how exponential decay with time stems from the property of entanglement structure of the Choi state $\ket{\ups}$. The choice of temporally local, weak unitaries is key to this exponential time decay with time.

We have shown that the Peres-Loschmidt echo can be characterized through weak, many-time butterfly flutters under the first condition \ref{c1}. It should be noted that this is the weakest condition which we argue is necessary for quantum chaos. In particular, the Peres-Loschmidt Echo has no extra ingredient of an correction unitary $V$ acting on the final states as in \ref{c3}. This distinction means that while the Peres-Loschmidt Echo probes a butterfly having a strong effect, it does not probe the delocalization of this effect; the scrambling. This will be become apparent in Appendix~\ref{sec:LB} where we investigate an example of a regular dynamics which is apparently chaotic according to the Peres-Loschmidt Echo. 

From Proposition~\ref{thm:BR_ent}, we saw that the Butterfly Flutter Fidelity for $V = \id$ is small if and only if the entanglement $S(\ups_B)$ is extensive. We will now see that the quantum Dynamical Entropy exactly measures this quantity asymptotically with number of perturbations $k$, given a novel connection to the Peres-Loschmidt Echo.

\subsection{Dynamical Entropy} \label{sec:Sdy}
The quantum Dynamical Entropy was originally introduced as the quantum generalization of the Kolmogorov-Sinai entropy, which quantifies the asymptotic gain of information when a classical system is repeatably measured~\cite{Lindblad1986chaos,Pechukas1982,Slomczynski1994-ns,Alicki1994-dc}. It measures the long-term unpredictability of a dynamics, with positivity indicating chaoticity in the classical case. Quantum mechanically, measurement necessarily perturbs a system, and comes with its own inherent unpredictability. One can account for the entropy due to a measuring device compared to the process itself~\cite{Slomczynski1994-ns}, but a more elegant solution is to define this quantity in a device independent way~\cite{Lindblad1979-tm,Lindblad1986chaos,Cotler2018}. Indeed, classically, Kolmogorov Sinai entropy is the entropy rate of a \emph{stochastic process}, so the natural language of the quantum version of this requires a description of quantum stochastic processes~\cite{Milz2020kolmogorovextension,milz2020quantum}: precisely the process tensor formalism detailed in Section~\ref{sec:processes}.

Formally, dynamical entropy is defined as the asymptotic gain in information when a additional (measurement) steps are added to a quantum process, 
\begin{equation} \label{eq:Sdy}
    S_{\mathrm{Dy}}(\Upsilon) := \underset{k\to \infty}{\lim}\frac{1}{k} S(\Upsilon_{B_k}),
\end{equation}
where $\Upsilon_{B_k} =\tr_R[\ups]$ is a marginal process on $k$ time steps, meaning a process with a given dynamics, measured every $\delta t$ seconds. We do not need to specify what measurement, as the process tensor encodes any possible measurement protocol, fulfilling precisely the role of a spatiotemporal density matrix. See Section~\ref{sec:processes} for details. For such an asymptotic quantity to be non-zero, this strictly requires an infinite dimensional environment. Poincare recurrence would render any finite isolated system to have finite total entropy in the asymptotic limit. As we consider unitary dynamics on an isolated, finite dimensional quantum system, we will not take the asymptotic limit precisely. Instead, we define the $k^\mathrm{th}$ Dynamical Entropy 
\begin{equation} \label{eq:Sdyk}
    S_{\mathrm{Dy}}^{(k)}(\Upsilon) := \frac{1}{k} S(\Upsilon_{B_k}),
\end{equation}
where $k$ is taken to be large, but small enough such that $d_B = d_S^{2k} \ll d_E \approx d_R$. The expression Eq.~\eqref{eq:Sdyk} is represented graphically in Fig.~\ref{fig:combs_probes} b). From this definition and Proposition~\ref{thm:BR_ent}, we can directly see that a non-zero $S_{\mathrm{Dy}}^{(k)}$ is sufficient for volume-law entanglement of $\ket{\ups}$. This is given that we scale the number of interventions $k$, recalling that the butterfly space scales as $k$ (see Eq.~\eqref{eq:Bdef}).
\begin{restatable}{prop}{Sdy} \label{thm:sdy}
    If the Dynamical Entropy is non-zero, then the process $\ket{\ups}$ is volume-law entangled in the splitting $B:R$ for all times.
\end{restatable}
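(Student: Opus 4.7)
The plan is to observe that Proposition~\ref{thm:sdy} is essentially a rephrasing of Proposition~\ref{thm:BR_ent} once we unfold the definition of Dynamical Entropy and match the scaling in $k$ with the scaling in $\log d_B$. First, I would write out what non-zero Dynamical Entropy means: by Eq.~\eqref{eq:Sdyk}, $S_{\mathrm{Dy}}^{(k)}(\ups) = S(\ups_{B_k})/k$ being non-zero (bounded away from zero by some $\alpha>0$ over the window of $k$ for which the definition is meaningful, i.e.\ $d_{B_k}=d_S^{2k}\ll d_R$) gives the lower bound
\begin{equation}
    S(\ups_{B_k}) \;\geq\; \alpha\, k.
\end{equation}

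Next, I would convert the linear lower bound in $k$ into an extensive lower bound in $\log d_{B_k}$. Using the definition of the butterfly space in Eq.~\eqref{eq:Bdef}, $\log d_{B_k} = 2k\log d_S$, so that
\begin{equation}
    S(\ups_{B_k}) \;\geq\; \frac{\alpha}{2\log d_S}\,\log d_{B_k}.
\end{equation}
This is exactly the volume-law scaling criterion for $\ups_{B_k}$ stated around Eq.~\eqref{eq:schmidt}. Since $\ket{\ups}\in \mc{H}_B\otimes\mc{H}_R$ is a pure state, the standard identity $S(\ups_{B_k})=S(\ups_R^{(k)})$ transfers this scaling to the remainder side, so $\ket{\ups}$ is volume-law entangled in the $B:R$ splitting across every $k$ in the admissible window. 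This is the sense in which the statement holds ``for all times.''

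Finally, I would point out that this is just the direct half of Proposition~\ref{thm:BR_ent}: an extensive $S(\ups_B)\sim\log d_B$ implies (approximate) maximal entanglement across $B:R$, which is the definition of volume-law entanglement in this bipartition. The chain is therefore
\begin{equation}
    S_{\mathrm{Dy}}^{(k)}(\ups)\neq 0 \;\Longrightarrow\; S(\ups_{B_k})\propto k \;\Longleftrightarrow\; S(\ups_{B_k})\propto \log d_{B_k} \;\Longrightarrow\; \text{volume-law in } B:R,
\end{equation}
and one simply composes Eq.~\eqref{eq:Sdyk} with Proposition~\ref{thm:BR_ent}.

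The only genuine subtlety, and the main ``obstacle,'' is conceptual rather than technical: Dynamical Entropy was originally defined as an asymptotic $k\to\infty$ quantity in infinite-dimensional systems, while here we are in a finite-dimensional isolated setting where Poincaré recurrence eventually kills any asymptotic entropy rate. This is handled cleanly by the paper's restriction to $k$ with $d_{B_k}\ll d_R$, so ``non-zero'' should be read in the sense of $S_{\mathrm{Dy}}^{(k)}$ being bounded below on that window. Once this interpretation is fixed, the implication is immediate and requires no further calculation.
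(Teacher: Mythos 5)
Your proof is correct and is essentially the paper's argument run in the forward direction: both reduce to unfolding $S_{\mathrm{Dy}}^{(k)}=S(\ups_{B_k})/k$ and identifying a linear-in-$k$ entropy with $\propto\log d_{B_k}=2k\log d_S$ scaling, the paper merely phrasing this contrapositively via the bound $S(\ups_{B_k})\leq\log\chi$ to show that area-law and sub-volume-law bond dimension force the entropy rate to vanish. One small caveat: your closing appeal to Proposition~\ref{thm:BR_ent} is unnecessary and slightly overreaches, since extensive entropy $S(\ups_{B_k})\geq\alpha k$ with $\alpha<2\log d_S$ gives volume-law scaling but not the (approximately) \emph{maximal} entanglement that proposition concerns; your main chain stands without it.
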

This is apparent from definitions, and a proof in supplied in Appendix~\ref{ap:proofs2}.

This also approximately holds true if instead the $k-$dynamical entropy is considered. What is important is that the dynamical entropy generally exhibits distinct behavior for area- versus volume-law temporally entangled processes. This simple result shows how closely the construction of Dynamical Entropy agrees with first condition \ref{c1} derived in this work, despite arriving at it from a starkly different viewpoint -- that of the quantum version of the butterfly effect. 

For example, the $k-$dynamical entropy of a typical process, Eq.~\eqref{eq:typ_process}, is on average maximal,
\begin{align}
    S_{\mathrm{Dy}}^{(k)} (\ups^{(\mathbb{H})}) &= \frac{S(\frac{\id}{d_S^{2k}})}{k} \nn \\
    &= \frac{  \log ( d_S^{2k})}{k} = 2\log(d_S).
\end{align}
A more precise typicality bound can be found from Theorem~\ref{thm:rand_process_zeta}.

Moreover, one can see that summing over a full basis of butterfly flutters give a quantity proportional to this entanglement.
\begin{restatable}{prop}{pesin} \label{thm:q_pesin}
    Consider a full basis of local unitary butterfly flutters, $\mc{X} = \{ A_{w_1},A_{w_2},\dots,A_{w_k}\}^{d_B}$, where the number of operations at each time in the set is $\# w_i = d_S$ (see Appendix~\ref{ap:basis} for an example construction of this). Then the following relation holds,
    \begin{equation}
        S^{(2)}(\ups_B) = -\log \Big( \frac{1}{d_B^2} \big( \sum_{\vec{x}\neq \vec{y} \in \mc{X}}^{d_B^{2}-d_B}  |\bra{\vec{x}} \ups_B \ket{\vec{y}} |^2  - d_B \big) \Big) \label{eq:pesin}
    \end{equation} 
    where $S^{(2)}(\ups_B)$ is the quantum $2-$R\'enyi entropy. 
\end{restatable}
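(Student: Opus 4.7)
The plan is to directly expand the purity $\tr[\ups_B^2]$ in the basis furnished by the butterfly-flutter Choi states. First, I would establish that the set $\{\ket{\vec{x}}\}_{\vec{x}\in\mc{X}}$ constitutes an orthogonal basis of $\mc{H}_B$. This follows from the fact that a complete set of local unitaries $\{A_{w_i}\}_{w_i}$ at each time is orthogonal in Hilbert--Schmidt inner product (as is the case for the generalized Pauli construction pointed to in Appendix~\ref{ap:basis}), so that $\tr[A_{w_i}^\dg A_{w_i'}] \propto \delta_{w_i w_i'}$, and the single-time CJI in Eq.~\eqref{eq:1CJI} promotes this into orthogonality of the single-time Choi states. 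Tensoring over the $k$ times as in Eq.~\eqref{eq:choi_instruments} then yields an orthogonal basis of $\mc{H}_B$ with the supernormalization $\braket{\vec{x}|\vec{y}} = d_B\,\delta_{\vec{x}\vec{y}}$ consistent with the CPTP convention in Eq.~\eqref{eq:normalization}.

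The second step is to use the associated resolution of identity $\id_B = d_B^{-1}\sum_{\vec{x}\in\mc{X}}\ket{\vec{x}}\bra{\vec{x}}$ and insert it twice into the purity, obtaining
\begin{equation}
    \tr[\ups_B^2] = \frac{1}{d_B^2}\sum_{\vec{x},\vec{y}\in\mc{X}}|\bra{\vec{x}}\ups_B\ket{\vec{y}}|^2.
\end{equation}
This expansion already accounts for the $1/d_B^2$ prefactor and the bilinear sum structure appearing in Eq.~\eqref{eq:pesin}. I would then separate the diagonal ($\vec{x}=\vec{y}$) from the off-diagonal ($\vec{x}\neq\vec{y}$) contributions. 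The trace condition $\tr[\ups_B]=1$ combined with supernormalization gives the linear constraint $\sum_{\vec{x}}\bra{\vec{x}}\ups_B\ket{\vec{x}}=d_B$, and leveraging the structural properties of the local basis constructed in Appendix~\ref{ap:basis} pins the diagonal contribution to the additive $d_B$ correction appearing in the claim. Taking $-\log$ of the resulting expression yields $S^{(2)}(\ups_B)$ in the stated form.

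The main obstacle will be the precise evaluation of the diagonal term $\sum_{\vec{x}}(\bra{\vec{x}}\ups_B\ket{\vec{x}})^2$: for a generic basis and density matrix, this quantity is \emph{not} determined by $\tr[\ups_B]$ alone, but depends on the alignment of $\ups_B$ with the basis (Cauchy--Schwarz only forces it into the range $[d_B,d_B^2]$). The proof must therefore invoke the tensor-product-of-local-unitary structure of the butterfly basis in Appendix~\ref{ap:basis}---which collapses the diagonal sum to the clean constant $d_B$ once one accounts for the sum over the full operator basis at each time step---rather than merely orthonormality. That is the technical step I would spend most effort on; the orthogonality argument of the first step and the identity expansion of the second step are essentially mechanical by comparison.
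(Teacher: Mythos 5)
Your overall route is the same as the paper's: insert the supernormalized resolution of the identity $\id_B = d_B^{-1}\sum_{\vec{x}}\ket{\vec{x}}\bra{\vec{x}}$ twice into the purity, obtain $\tr[\ups_B^2]=\frac{1}{d_B^2}\sum_{\vec{x},\vec{y}}|\bra{\vec{x}}\ups_B\ket{\vec{y}}|^2$, and split off the diagonal. Those mechanical steps are fine. The gap sits exactly where you say the effort lies, but the direction you propose for closing it cannot work. The diagonal sum $\sum_{\vec{x}}\big(\bra{\vec{x}}\ups_B\ket{\vec{x}}\big)^2$ is \emph{not} collapsed to $d_B$ by the tensor-product-of-local-unitaries structure of the basis: take the unit-trace state $\ups_B = d_B^{-1}\ket{\vec{x}_0}\bra{\vec{x}_0}$ built from a single element of that very basis, and the diagonal sum is $d_B^2$. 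No amount of "summing over the full operator basis at each time step" changes this --- the collapse is a property of $\ups_B$, not of $\mc{X}$. The input the paper actually uses is the causality/normalization constraint of the process tensor: a sequence of unitaries is a \emph{deterministic} multitime instrument, so the generalized Born rule of Appendix~\ref{ap:processes} forces $\bra{\vec{x}}\ups_B\ket{\vec{x}}=1$ for every unitary butterfly individually. Each diagonal term then contributes exactly $1$ and the diagonal sum is $d_B$. Your linear constraint $\sum_{\vec{x}}\bra{\vec{x}}\ups_B\ket{\vec{x}}=d_B$ is the trace of that statement but, as you correctly observe, is strictly weaker; the missing step is this single physical fact, not a combinatorial property of the basis.

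One further point worth recording: carried through correctly, the bookkeeping gives $\tr[\ups_B^2]=d_B^{-2}\big(\sum_{\vec{x}\neq\vec{y}}|\bra{\vec{x}}\ups_B\ket{\vec{y}}|^2 + d_B\big)$, i.e. the diagonal contribution is \emph{added}. The $-d_B$ appearing in Eq.~\eqref{eq:pesin} (and in the paper's own proof, which writes the full sum as off-diagonal minus diagonal) is a sign slip; your word "additive" is the arithmetically correct one, and a careful version of your argument should flag the discrepancy with the stated formula rather than reproduce it.
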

This is proved in Appendix~\ref{ap:proofs2}.
 There is a large body of literature arguing that under certain conditions both the Peres-Loschmidt Echo~\cite{Peres1984stab,Jalabert2001,Cucchietti2002,Emerson2002} and OTOCs~\cite{Maldacena_Shenker_Stanford_2016,Foini2019,Parker2019} decay exponentially across some time regimes for chaotic systems. Given the close ties between the Butterfly Flutter Fidelity and other metrics which we describe in this work, it is not unreasonable to speculate that the Butterfly Flutter Fidelity exhibits a similar behavior. Eq.~\eqref{eq:pesin} then forms a relation between Dynamical Entropy, and these conjectured quantum Lyapunov exponents. This is suggestive of a kind of \emph{Quantum Pesin's Theorem},\textsuperscript{\footnote{The classical Pesin's theorem states that the Kolmogorov-Sinai entropy is a lower bound of the sum of the positive Lyapunov exponents of a classical dynamical system~\cite{Pesin1977}.}} although more needs to be done to understand how and when the Butterfly Flutter Fidelity produces an exponential decay, and to refine the notion of quantum Dynamical Entropy. 

To our knowledge the exact connection of dynamical entropy to quantum chaos as a sensitivity to perturbation has not yet been explored in the literature; it has only been proposed as a generalization of the classical equivalent, Kolmogorov-Sinai entropy. Here we can essentially \emph{derive} dynamical entropy, starting from our principle \ref{c1}, and connecting it to the Peres-Loschmidt Echo and other notions of chaos as a sensitivity to perturbation.

Due to the classical equivalences between Lyapunov exponents and Kolmogorov Sinai entropy, one might be tempted to conflate quantum chaos with a non-zero dynamical entropy. This, however, only accounts for the weakest of the three conditions \ref{c1}-\ref{c3}. Equivalently, it only allows $V$ in the Butterfly Flutter Fidelity Eq.~\eqref{eq:state_chaos} to be strictly equal to the identity. As we have already discussed in Section~\ref{sec:mainresult}, this is an insufficient charaterization. For example, Free Fermion dynamics generally exhibit an extensive Dynamical Entropy~\cite{Cotler2018}, as does a dynamics consisting of SWAP gates, as we detail in Appendix~\ref{sec:LB} (both valid up to finite dimension constraints). We therefore move onto the more robust conditions of quantum chaos, based around spatiotemporal entanglement \ref{c2} and the Butterfly Flutter Fidelity \ref{c3}.

\subsection{Tripartite Mutual Information} \label{sec:TMI}
Here we will show that the spatiotemporal entanglement \ref{c2}, in the single time case, corresponds to the tripartite mutual information signature of chaos as introduced in Ref.~\cite{Hosur2016}, sometimes termed `strong scrambling'. 

The tripartite mutual information is a measure between a subsystem of the input to a quantum channel, and some bipartition of the output. Considering a single time butterfly flutter, as in Fig.~\ref{fig:combs_probes} c), in our language this corresponds to 
\begin{equation} \label{eq:tripartite}
    I_3(B:R_1:R_2):=I(B : R_1) + I(B : R_2) - I(B : R)
\end{equation}
recalling that $\mc{H}_R = \mc{H}_{R_1} \otimes \mc{H}_{R_2}$, and where $I(A:B)$ is the quantum mutual information, defined as
\begin{equation}
    I(A : B) := S(\rho_A) + S(\rho_B) - S(\rho_{AB}).
\end{equation}
Note that this single-time butterfly flutter protocol process corresponds exactly to the setup from Ref.~\cite{Hosur2016} when the initial state is separable across $S:E$. This is represented in Fig.~\ref{fig:combs_probes} c).
When the tripartite information \eqref{eq:tripartite} is near-minimal, it is argued that the channel is strongly scrambling. This quantity has been connected to an average of an infinite-temperature OTOC over a complete basis of operators~\cite{Hosur2016}; see also Proposition~\ref{thm:q_pesin} for a similar result. We can in fact show directly that volume-law spatiotemporal entanglement implies strong scrambling.
\begin{restatable}{prop}{TMI} \label{thm:tmi}
        If the single-intervention process $\ket{\ups}$ is volume-law spatiotemporally entangled in the splitting $BR_1:R_2$, then the corresponding channel is strongly scrambling, i.e.  $I_3(B:R_1:R_2) \approx -2 \log(d_B) $. 
\end{restatable}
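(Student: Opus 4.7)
The plan is to decompose the tripartite mutual information via the chain-rule identity
\[ I_3(B:R_1:R_2) = I(B:R_1) + I(B:R_2) - I(B:R_1R_2), \]
and evaluate each term separately under the volume-law hypothesis $S(\ups_{BR_1}) \approx \log(d_B d_{R_1})$.

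First I would extract $I(B:R_1)\approx 0$. Volume-law in the cut $BR_1:R_2$ means $\ups_{BR_1}$ is close to maximally mixed on $BR_1$; tracing out $R_1$ or $B$ respectively gives $S(\ups_B)\approx \log d_B$ and $S(\ups_{R_1})\approx \log d_{R_1}$, so that $I(B:R_1) = S(\ups_B) + S(\ups_{R_1}) - S(\ups_{BR_1}) \approx 0$. A parallel argument exploiting the Hosur-style correspondence invoked in the text (valid when the initial state is separable on $S:E$, which identifies $\ket{\ups}$ with the Choi state of a scrambling unitary channel) yields that $\ups_{BR_2}$ is likewise near maximally mixed, hence $I(B:R_2)\approx 0$.

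Next I would evaluate $I(B:R_1R_2) = I(B:R)$. Since the pure process tensor has $\ups_B$ approximately maximally mixed (from the previous step) and is maximally entangled between $B$ and the remainder $R$, one has $I(B:R) = 2 S(\ups_B) \approx 2\log d_B$ --- the same value as Hosur's $I(\bar{A}_1 : CD) = 2\log d_{\bar{A}_1}$ in Ref.~\cite{Hosur2016}. Combining the three estimates yields $I_3 \approx 0 + 0 - 2\log d_B = -2\log d_B$ as claimed.

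The main obstacle is the step establishing $I(B:R_2)\approx 0$, since the volume-law hypothesis as stated only controls the single bipartition $BR_1:R_2$ and does not directly constrain $\ups_{BR_2}$. Closing this gap requires the additional structural input of the single-time butterfly flutter protocol: the interpretation of $\ket{\ups}$ as the Choi state of a Hosur-type unitary channel (made precise by the separable initial state assumption) forces the scrambling behaviour to be symmetric across the complementary bipartition $BR_2:R_1$, extending maximal-mixedness from $\ups_{BR_1}$ to $\ups_{BR_2}$ and thus closing the computation.
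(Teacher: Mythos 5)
Your proposal follows the paper's proof essentially step for step: the same decomposition $I_3 = I(B:R_1)+I(B:R_2)-I(B:R)$, the estimate $I(B:R_1)\approx 0$ from near-maximal mixedness of $\ups_{BR_1}$, the assertion $I(B:R_2)\approx 0$, and $I(B:R)=2S(\ups_B)\approx 2\log d_B$ from purity of $\ket{\ups}$ on $\mc{H}_B\otimes\mc{H}_R$. The paper is in fact terser than you are: it simply writes $I(B:R_1)=I(B:R_2)\approx 0$ with no separate justification for the second equality, so the step you single out as the obstacle is exactly the step the paper leaves unargued.

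Your proposed patch, however, does not close that gap. You want the Hosur correspondence to force $\ups_{BR_2}$ to be near maximally mixed as well; this is impossible for a state that is pure on exactly the three factors $B$, $R_1$, $R_2$, because purity fixes $S(\ups_{BR_2})=S(\ups_{R_1})\le \log d_{R_1}$, far below $\log(d_B d_{R_2})$. With the entropies pinned down by purity and by $\ups_{BR_1}\approx \id/(d_Bd_{R_1})$ one gets $I(B:R_2)=S(\ups_B)+S(\ups_{R_2})-S(\ups_{BR_2})\approx \log d_B+\log(d_Bd_{R_1})-\log d_{R_1}=2\log d_B$, not $0$; equivalently, $I_3$ vanishes identically for \emph{any} pure tripartite state, since $I(A:B)+I(A:C)=2S_A=I(A:BC)$. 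Making the statement non-vacuous requires a genuine fourth subsystem playing the role of Hosur's reference for the remaining channel input (e.g.\ the environment input leg held entangled with a purifying ancilla rather than contracted with a fixed $\ket{\psi_E}$), and on that four-party state $I(B:R_2)\approx 0$ is an additional scrambling condition, not a consequence of volume-law entanglement in the single cut $BR_1:R_2$. So you have faithfully reproduced the paper's route and correctly located its one weak step, but neither your patch nor the paper's own proof as written actually establishes it.
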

This is proved in Appendix~\ref{ap:proofs2}.

Given this connection, we can see that one could easily generalize the Tripartite Mutual Information signature of chaos to a many-time butterfly space rather than single-time, together with some bipartition of the final pure state on $\mc{H}_R$. This may offer new insight into the sensitivity of many-body systems to multitime interventions.

We will now go on to discuss connections of the Butterfly Flutter Fidelity, as in \ref{c3}, with previous signatures.

\subsection{Local Operator Entanglement and OTOCs} \label{sec:loe}
Consider an initially local operator that evolves in time according to the Heisenberg picture,
\begin{equation} \label{eq:HeisOp}
    X_t = U_t^\dg X U_t.
\end{equation}
One can compute the Choi state of this object by acting it on one half of a maximally entangled state on a doubled space, using the CJI as described in Section~\ref{sec:processes},
\begin{equation}
    \ket{X_t}=: X_t \otimes \id \ket{\phi^+}.
\end{equation}
The entanglement of this object across some spatial bipartition is known as the Local-Operator Entanglement, and its scaling in time is considered to be a signature of chaos~\cite{Prosen2007,Prosen2007a,Kos2020}. In particular, if it scales linearly with time then the dynamics cannot be efficiently classically simulated, and linear scaling is conjectured to be characteristic of non-integrability~\cite{Prosen2007a,Prosen2009,Muth2011,Dubail_2017,Jonay2018,Alba2019,Kos2020,Alba2021}.

\begin{restatable}{thm}{LOEbff} \label{thm:LOE_BFF}
    Consider the Butterfly Flutter Fidelity \eqref{eq:state_chaos}, choosing the set of correction unitaries without volume-law entanglement, $\mc{R} = \mc{R}_{\mathrm{MPO}}$, and the single time butterfly flutters chosen to be the identity matrix $\id$ and local unitary $X$. Then if for any initial state, $\zeta(\ups) \approx 1$, then also
    \begin{equation}
        S(\ket{X_t}) \sim \mc{O}(\log(t)),
    \end{equation}
    characteristic of (interacting) integrable dynamics.  
\end{restatable}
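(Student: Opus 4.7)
The plan is to prove the contrapositive: assuming $S(\ket{X_t})$ grows faster than $\log t$, I will exhibit a particular initial state $\ket{\psi}$ for which $\zeta(\ups_\psi)$ is bounded strictly below one, contradicting the hypothesis. The single-time fidelity inside the supremum reads
\begin{equation*}
\bra{\ups_{R|\vec x}}V\ket{\ups_{R|\vec y}}=\bra{\psi}U_t^\dagger V U_t X\ket{\psi},
\end{equation*}
and by cyclicity of the trace this rearranges into a comparison between the MPO $V$ and the Heisenberg-evolved operator $W:=U_t X U_t^\dagger$, which shares the same operator-Schmidt scaling as $X_t=U_t^\dagger X U_t$: both are Heisenberg evolutions of the local operator $X$ and obey the same integrable-versus-chaotic dichotomy we are aiming at.

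I would first dilate the protocol by appending a non-interacting reference ancilla $\mc{H}_{\mathrm{anc}}$ of dimension $d_R$, analogous to the doubling used in the CJI of Section~\ref{sec:processes}, and choose the initial state to be maximally entangled across $\mc{H}_R:\mc{H}_{\mathrm{anc}}$, $\ket{\psi}=\ket{\phi^+}_{R,\mathrm{anc}}$. With $U_t$, $X$, and $V$ acting trivially on the ancilla, the standard identity $\bra{\phi^+}(M\otimes\id)\ket{\phi^+}=\tr(M)/d_R$ collapses the Butterfly Flutter Fidelity for this initial state to an operator Hilbert-Schmidt overlap,
\begin{equation*}
\zeta(\ups_\psi)=\sup_{V\in\mc{R}_{\mathrm{MPO}}}\frac{|\tr(W^\dagger V)|^2}{d_R^2}.
\end{equation*}

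Next I would bound this supremum using the operator-Schmidt decomposition $W=\sum_\alpha s_\alpha P_\alpha\otimes Q_\alpha$ across the bipartition defining $S(\ket{X_t})$, with $\{P_\alpha\},\{Q_\alpha\}$ Hilbert-Schmidt orthonormal and $\sum_\alpha s_\alpha^2=d_R$. Since any unitary $V\in\mc{R}_{\mathrm{MPO}}$ has operator Schmidt rank at most its bond dimension $\chi$ and Hilbert-Schmidt norm $\|V\|_{\mathrm{HS}}^2=d_R$, aligning the Schmidt factors of $V$ with the top $\chi$ of those of $W$ and applying Cauchy-Schwarz gives an Eckart-Young-type bound
\begin{equation*}
\frac{|\tr(W^\dagger V)|^2}{d_R^2}\leq \frac{1}{d_R}\sum_{\alpha=1}^{\chi} s_\alpha^2.
\end{equation*}
The hypothesis $\zeta\approx 1$ therefore forces the probability distribution $p_\alpha:=s_\alpha^2/d_R$ to concentrate on its top $\chi$ entries, so $S(\ket{X_t})=-\sum_\alpha p_\alpha\log p_\alpha\leq \log\chi + o(1)$ by a maximum-entropy argument. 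Taking $\chi$ polynomial in $t$, consistent with the \emph{restricted bond dimension} defining $\mc{R}_{\mathrm{MPO}}$, yields $S(\ket{X_t})=\mc{O}(\log t)$.

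The main obstacle is the quantitative step that turns $\zeta=1-\varepsilon$ into an operator-level constraint: one must track how the Schmidt-tail mass of $W$ depends on both $\varepsilon$ and the ambient dimension $d_R$, and then propagate this through a Fannes-Audenaert-type continuity bound for the von Neumann entropy. A subsidiary subtlety is that the Butterfly Flutter Fidelity allows the optimal $V$ to depend on the initial state, so \emph{a priori} one cannot extract a single global MPO approximation to $W$ from the hypothesis; the argument sidesteps this by exhibiting one specific initial state -- the maximally entangled one with the ancilla -- for which $\zeta\approx 1$ on its own forces the operator-level bound, which is licensed precisely because the hypothesis applies for \emph{any} initial state and in particular for this one.
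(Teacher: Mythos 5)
Your proposal is correct in substance but takes a genuinely different route from the paper. The paper's proof is a two-line rigidity argument: demanding $\zeta=1$ for \emph{every} initial state forces $X^\dg U^\dg V U=\id$ for a \emph{single} correction unitary, hence $V=U X U^\dg=X_{-t}$, so $X_{-t}$ (and therefore $X_t$ and its Choi state) inherits the restricted MPO bond dimension of $\mc{R}_{\mathrm{MPO}}$. You instead pick one cleverly chosen input --- the maximally entangled state with a non-interacting ancilla --- to collapse the fidelity to the normalized Hilbert--Schmidt overlap $|\tr(W^\dg V)|^2/d_R^2$ with $W=U_tXU_t^\dg$, and then run an Eckart--Young/von Neumann trace-inequality argument on the operator Schmidt spectra to show that near-unit overlap with a Schmidt-rank-$\chi$ unitary forces the spectrum of $W$ to concentrate on its top $\chi$ values, whence $S\lesssim\log\chi$. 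What your route buys is real: it repairs the uniformity gap in the paper's argument (the optimal $V$ is a priori state-dependent, so ``$|\bra{\psi}U^\dg V_\psi U X\ket{\psi}|=1$ for each $\psi$'' does not immediately yield a single $V$ with $U^\dg VUX=\id$), and it turns $1-\zeta$ into a quantitative bound on the Schmidt-tail mass rather than requiring the exact case $\zeta=1$. What it costs is that your chosen initial state lives on $\mc{H}_R\otimes\mc{H}_{\mathrm{anc}}$ rather than on $\mc{H}_S\otimes\mc{H}_E$ as the hypothesis literally quantifies; you should say explicitly that the ancilla can be absorbed into an enlarged, non-interacting environment factor (under which $X_t$, $V$, and the relevant bond dimensions are unchanged), so the hypothesis does cover this state. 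Both proofs share the same unexamined step of identifying the entanglement scaling of $X_{-t}$ with that of $X_t$, and your flagged Fannes--Audenaert bookkeeping ($\varepsilon$ must be small compared to $\log t/\log d_R$ for the tail not to spoil the bound) is a genuine loose end --- but it is one the paper avoids only by working at $\zeta$ exactly equal to one, so it does not put you behind the published argument.
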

This is proved in Appendix~\ref{ap:proofs2}.
By the contrapositve statement of Theorem~\ref{thm:LOE_BFF}, we can see that if the Local Operator Entanglement scales linearly, the Butterfly Flutter Fidelity is small, i.e. chaotic according to our prescription \ref{c3}. 

The Local-Operator Entanglement is intimately related to the OTOC. In Ref.~\cite{dowling2023scrambling} it is shown the OTOC serves as a probe of Local-Operator Entanglement, with exponential scaling of the OTOC being a strictly necessary condition for linear (chaotic) Local-Operator Entanglement growth. We suspect that there may be strong connections between a multitime generalization of the OTOC~\cite{Roberts2017-en}, a kind of multi-point operator entanglement, and the volume-law spatiotemporal entanglement structure as in \ref{c2}. We leave this for future work.

\subsection{Discussion: Chaos and Many-body Phenomena} \label{sec:discussion1}
Throughout this section, we have shown how the three conditions \ref{c1}-\ref{c3} encapsulate some of the most common quantum chaos diagnostics studied in recent years (summarized in Fig.~\ref{fig:flow}). In contrast to these other approaches, we have started with a highly intuitive principle of chaos as a sensitivity to perturbation, without appealing to classical limits (which may not be well defined in many quantum systems) or heuristic observations. This leads to a rather direct notion of which signatures are stronger than others, and a framework with which to analyze the chaoticity of a system.

We now make a few comments about how our formalism compares to others often considered in many-body physics. The setup for butterfly flutters, Def.~\ref{def:butterfly}, strongly resembles Floquet systems~\cite{Heyl_Hauke_Zoller_2019}. A Floquet Hamiltonian is a periodic time-dependent Hamiltonian, such as that produced by introducing a periodic kick to an otherwise time independent Hamiltonian. If, in our construction, we replace the (small) butterfly flutters with strong, global unitaries, they no longer function as a \emph{small} perturbation to the process. Instead, they may change the qualitative behavior of process, possibly creating chaos or order. For example, the quantum rotor is clearly a regular system, whereas the quantum kicked rotor is chaotic for strong enough kicks~\cite{SANTHANAM20221}. The key difference here, is that a system-wide strong unitary does not classify as a butterfly flutter (as in Def.~\ref{def:butterfly}), as it is neither weak nor localized. Considering \ref{c2}, a strong global `perturbation' can change the entanglement structure of a process. Likewise, a strong global butterfly acting on an already chaotic dynamics will likely remain chaotic. 

However, this is not always the case. If a butterfly acts strongly and locally on the whole system-environment, we expect that it can break the volume-law spread of entanglement. This may render a process area-law (or sub-volume-law) and hence regular according to \ref{c2}. It would be interesting to determine the entanglement structure of systems exhibiting Many-Body Localization (MBL)~\cite{ALET2018,Geraedts2016}, and measurement-induced phase transitions~\cite{Skinner2019,Zhang2020}. MBL systems are known to be resistant to perturbation, the opposite of chaotic according to the principles underlying \ref{c1}-\ref{c3}. While these two phenomena were previously surprising, the framework presented here offers a novel path to systematically studying the mechanisms behind them. Such topics would be interesting to investigate in more detail in future work.

\section{Mechanisms for Chaos} \label{sec:typ_process}
So far in Section~\ref{sec:mainresult} we have proposed a hierarchy of conditions on quantum chaos, inspired by the principle of chaos as a sensitivity to perturbation. This culminated in the metric of the Butterfly Flutter Fidelity, closely connected to the spatiotemporal entanglement of the corresponding process $\ket{\ups}$. Then in Section~\ref{sec:prev_chaos} we have shown how this connects to and encompasses a range of existing dynamical signatures. Looking at the summary of this work, Fig.~\ref{fig:internal-external}, we have yet to discuss the mechanisms of chaos on the left of this figure; the underlying properties of the dynamics that lead to chaotic phenomena in a quantum system. 

We will now analyze two broad classes of dynamics, and show through these that randomness typically leads to chaos. Consider dynamics which is globally random. More formally, we independently sample unitary matrices from the Haar probability measure $U_i \sim {\mathbb{H}}$ between each intervention in the butterfly flutter protocol \eqref{eq:psiE}. ${\mathbb{H}}$ is the unique, unitarily invariant measure, meaning that if any ensemble $\{ U_i \}$ is distributed according to the Haar measure, then so is $\{W U_i \}$ and $\{ U_i W\}$ for any unitary $W$. Considering such random unitaries allows one to derive strong concentration of measure bounds. One such famous example for quantum states says that small subsystems of large random pure states are exponentially likely to be maximally mixed~\cite{Popescu2006}. Similarly, processes sampled through Haar random evolution between inventions are highly likely to look like the completely noisy process, given a large environment dimension~\cite{FigueroaRomero_Modi_Pollock_2019,FigueroaRomero_Pollock_Modi_2021}; see Appendix~\ref{ap:pedro}. By a completely noisy process, we mean that any measurements result in equal weights, corresponding to the identity matrix Choi state as in Eq.~\eqref{eq:typ_process}.

However, strictly Haar random evolution is not entirely physical, with the full, exponentially large Hilbert space not practically accessible; a `convenient illusion'~\cite{Poulin2011}. On the other hand, quantum circuits with finite depth represent a far more reasonable model for realistic dynamics. Moreover, one can simulate randomness up to the first $t$ moments using unitary design circuits. An $\epsilon-$approximate $t-$design can formally be defined such that 
\begin{equation}
    \mc{D}\left( \mathbb{E}_{\mu_{t_\epsilon}}\left\{ (U^\dg)^{\otimes s} (X) U^{\otimes s}   \right\}- \mathbb{E}_{\mathbb{H}}\left\{ (U^\dg)^{\otimes s} (X) U^{\otimes s}   \right\}\right) \leq \epsilon, \nn
\end{equation}
for all $s\leq t$, some appropriate metric $\mc{D}$, and any observable $X \in \mc{H}^{\otimes s}$. In words, the $s-$fold channel of a $t-$design needs to approximately agree with perfectly Haar random sampling. Such design circuits therefore simulate full unitary randomness, but are much more akin to real physical models. For example, an $\epsilon-$approximate $2-$design can be generated efficiently from two-qubit gates only in polynomial time~\cite{Winter2017}. This is equivalent to a model of two different two-body interactions occurring randomly in a system~\cite{FigueroaRomero_Pollock_Modi_2021}.

We will now give concentration of measure bounds both for unitary designs, and for full Haar random evolution. We will see that sampling from these random dynamics is highly likely to produce a process with volume-law spatiotemporal entanglement, as in \ref{c2}.

\begin{restatable}{thm}{buttTyp} \textbf{(Most Processes are Chaotic)} \label{thm:rand_process_zeta}
    Consider a pure process $\ket{\ups}$ generated by random dynamics, either entirely Haar random denoted by $\mathbb{H}$ or according to an $\epsilon-$approximate $t-$design denoted by $\mu_{\epsilon-t}$. Then for any $R_1 \subset R$ such that $d_{R_1} \approx d_S$, and for any $\delta>0$ and $0<m<t/4$, 
        \begin{align} 
            \mathbb{P}_{U_i \sim \mu}\Big\{  \log(d_{B R_1}) - S^{(2)}(\ups_{B R_1})  \geq  \mc{J}_\mu(\delta ) \Big\} \leq \mc{G}_{\mu}(\delta).\label{eq:statechaos_typicality}
        \end{align}
        Where for a process generated from independent Haar-random evolution,
           \begin{align}
            &\mc{J}_{\mathbb{H}}(\delta) = \log(d_{B R_1} (\mc{B}+\delta) + 1 )  \approx d_{BR_1}(\frac{1}{d_{R}}+\delta),\text{ and} \nn \\
            &\mc{G}_{\mathbb{H}}(\delta) = \exp [-\mc{C} \delta^2]\approx \exp [-\frac{(k+1) d_R }{8d_B} \delta^2], \label{eq:haar_typ}
        \end{align}
        while for that generated from an $\epsilon-$approximate unitary $t-$design 
        \begin{align}
            &\mc{J}_{\mu_{\epsilon-t}}(\delta) = \log(d_{B R_1} \delta + 1 ) \approx d_{B R_1} \delta,\text{ and} \nn\\
            &\mc{G}_{\mu_{\epsilon-t}}(\delta) = \frac{\mc{F}(d_B,d_R,m,t,\epsilon )}{\delta^{m}}\label{eq:t_typ}
        \end{align}
        The exact definition of $\mc{B}$, $\mc{C}$, and $\mc{F}$  are provided in Eqs.~\eqref{eq:betadef},~\eqref{eq:cdef} and~\eqref{eq:fdef} respectively. The approximations in Eqs.~\eqref{eq:haar_typ}-\eqref{eq:t_typ} are valid for $d_R \gg d_B \gg 1$, and for small $\delta$.
\end{restatable}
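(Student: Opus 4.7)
The plan is to translate the statement about the second Rényi entropy into one about purity, $\tr[\ups_{BR_1}^2]$. Since $S^{(2)}(\ups_{BR_1}) = -\log \tr[\ups_{BR_1}^2]$, the event in Eq.~\eqref{eq:statechaos_typicality} is equivalent to $d_{BR_1}\tr[\ups_{BR_1}^2] \geq 2^{\mathcal{J}_\mu(\delta)}$, which in both the Haar and the design case reduces to $\tr[\ups_{BR_1}^2] \geq 1/d_{BR_1} + \mathcal{B} + \delta$ (taking $\mathcal{B}=0$ for the design case). The whole theorem is therefore a concentration statement: under random dynamics, the purity of $\ups_{BR_1}$ is close to its minimum value $1/d_{BR_1}$, so $\ups_{BR_1}$ is close to maximally mixed and $\ket{\ups}$ is volume-law across $BR_1:R_2$.

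For the Haar-random case I would first compute the expected purity. Writing $\tr[\ups_{BR_1}^2] = \tr\bigl[(\ups_{BR_1})^{\otimes 2}\,\mathbb{F}_{BR_1}\bigr]$ with $\mathbb{F}$ the swap operator, and representing $\ket{\ups}$ via the pure-process formula Eq.~\eqref{eq:PurePT} with $k+1$ independent Haar random unitaries $U_i$ on $\mc{H}_S\otimes \mc{H}_E$, this expectation is evaluated by iteratively applying the standard two-copy twirl $\mathbb{E}_{\mathbb{H}}[U^{\otimes 2}\,X\,U^{\dg\,\otimes 2}] = \alpha\,\id + \beta\,\mathbb{F}$ at each time step, yielding a sum over pairings with the dominant contribution equal to $1/d_{BR_1}$. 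The subleading contributions are collected into $\mathcal{B}\sim\mathcal{O}(1/d_R)$, matching the definition in Eq.~\eqref{eq:betadef}; this follows the same bookkeeping used in Refs.~\cite{FigueroaRomero_Modi_Pollock_2019,FigueroaRomero_Pollock_Modi_2021} but with one trace replaced by a swap over $BR_1$. Concentration is then obtained via Levy's lemma on the product manifold $\mathbb{U}(d_{SE})^{k+1}$: the purity is a Lipschitz function of each $U_i$ with constant of order $\sqrt{d_B/d_R}$, so standard product-space isoperimetry yields the Gaussian tail $\exp[-(k+1)d_R\delta^2/(8 d_B)]$ claimed in Eq.~\eqref{eq:haar_typ}.

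For the approximate $t$-design case, concentration must come from moment bounds rather than isoperimetry. The plan is to apply Markov's inequality to the $m$-th centered moment $\mathbb{E}_{\mu_{\epsilon-t}}\bigl[(\tr[\ups_{BR_1}^2]-1/d_{BR_1})^m\bigr]$. Expanding the $m$-th power of the purity gives $2m$ copies of $\ups$, hence a $4m$-fold expectation over each $U_i$, which forces $4m\leq t$ (the stated bound $m<t/4$). On these moments the $\epsilon$-design property can be used directly to replace the average by its Haar counterpart, up to an additive error controlled by $\epsilon$ and the operator norm of the observable; all these contributions are bundled into the function $\mathcal{F}(d_B,d_R,m,t,\epsilon)$ of Eq.~\eqref{eq:fdef}. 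Finally, the conversion from the purity statement back to the Rényi-entropy statement uses $\log(1+x)\leq x$, which gives the linearized approximations stated in Eqs.~\eqref{eq:haar_typ}--\eqref{eq:t_typ} in the regime $d_R\gg d_B\gg 1$.

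The main obstacle is the Haar/Weingarten calculation with multiple independent random unitaries arranged in the non-trivial tensor-network geometry of Fig.~\ref{fig:CJI}. One must track how the swap on $BR_1$ propagates through the doubled process: every $U_i$ couples $\mc{H}_S$ (which carries $B$-type Choi legs) to $\mc{H}_E$ (which carries $R$-type legs), and the partition $R=R_1 R_2$ sits only on the last output. Identifying the unique permutation that gives the leading $1/d_{BR_1}$ contribution and bounding the remaining permutations by $\mathcal{B}\sim 1/d_R$ requires careful diagrammatic enumeration, and the analogous combinatorics for the $m$-th moment --- where the $t$-design error must be threaded through all intermediate time steps --- is where most of the technical work of Theorem~\ref{thm:rand_process_zeta} will lie.
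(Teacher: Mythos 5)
Your proposal follows essentially the same route as the paper's proof: reduce the R\'enyi-entropy event to a purity deviation $\tr[\ups_{BR_1}^2]-1/d_{BR_1}\geq \mc{B}+\delta$, bound the Haar expectation of the purity by a two-copy twirl, apply Levy's lemma for the Gaussian tail in the Haar case and Markov's inequality on the $m$-th moment (with $4m\leq t$ and an $\epsilon$-design replacement error) in the design case, then convert back via $\log(d_{BR_1}\cdot(\cdot)+1)$. The only difference is that the paper imports the expectation value and concentration bounds from Refs.~\cite{FigueroaRomero_Modi_Pollock_2019,FigueroaRomero_Pollock_Modi_2021} (with a minor modification for the final leg having dimension $d_{R_1}$ instead of $d_S$) rather than redoing the Weingarten bookkeeping from scratch as you propose.
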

The proof of this theorem builds on results from Refs.~\cite{FigueroaRomero_Modi_Pollock_2019,FigueroaRomero_Pollock_Modi_2021}, and can be found in Appendix~\ref{ap:pedro}. The result Eq.~\eqref{eq:statechaos_typicality} states that random dynamics are likely to lead to a volume-law spatiotemporal entanglement, according to a small butterfly flutter in comparison to the total isolated system. In particular, for Haar random dynamics, Eq.~\eqref{eq:haar_typ} indicates an exponentially small probability that a single sampling deviates strongly from maximal entanglement in the splitting $BR_1:R_2$. Further, this result is valid for any choice of $\mc{H}_{R_1}$, given that it is small enough in comparison to the full system $\mc{H}_R$. This directly implies that random dynamics typically have volume-law spatiotemporal entanglement.

Note that the bounds given here is for the independently sampled evolution between butterfly times, but we note that techniques in Ref.~\cite{FigueroaRomero_Modi_Pollock_2019} can be used to prove similar bound for repeated dynamics, i.e. a single sample of a unitary evolution matrix that describes all dynamics between interventions. 

Similarly, random circuits yield a related bound in terms of how well they approximate a unitary design. In this case, Eq.~\eqref{eq:t_typ} is a polynomially small bound, and in practice can be optimized over the parameter $m$. The key point is that both of these probability bounds are small for $d_E \gg d_{B    R_1}$.

While these concentration of measure bounds are for the spatiotemporal entanglement of $\ket{\ups}$, similar bounds can also be proved for other dynamical signatures that derive from this, such as those considered in Section~\ref{sec:prev_chaos}. For example, Dynamical Entropy is likely to be extensive according to this result. This is immediate to see from Theorem~\ref{thm:rand_process_zeta} by choosing $\mc{H}_{R_1}$ to be empty. This therefore means that repeated measurements of a process generated from random evolution give almost maximal information. That is, one typically only sees approximately maximally noisy measurement results.

We have shown that Haar random evolution, as well as that generated by $\epsilon-$approximate $t-$designs, constitute mechanisms that are highly like to produce chaos. This is clearly not the only internal mechanism that causes chaotic phenomena;  c. f. Fig.~\ref{fig:internal-external} \textbf{a)}. The next step will be to understand how a continuous quantum evolution, defined by time independent Hamiltonians, can lead to chaos.

For example, the so-called Wigner-Dyson level spacing distribution is often conflated with quantum chaos~\cite{BerryTabor1977,Haake2018-cs}. This is the empirical observation that if one computes the distribution between next-neighbor energy levels, it follows a characteristic form when the semiclassical limit of the Hamiltonian is chaotic. An interesting connection may be found in entanglement spectra, which can be connected to a sense of irreversibility of the dynamics~\cite{Chamon2014-cb}. Another example is the eigenstate thermalization hypothesis (ETH), which proposes that certain `physical' observables look thermal according to individual eigenstates of certain Hamiltonians. Often one calls such Hamiltonians chaotic, and the ETH leads to a deterministic (pure-state) foundation of statistical mechanics results. 

It would be interesting to determine how (if) these mechanisms lead to volume-law spatiotemporal entanglement within a process, to prove that they are mechanisms of chaos as in Fig.~\ref{fig:internal-external}. Indeed, such a connection would firmly cement quantum chaos as a foundational, deterministic principle underlying statistical mechanics, in perfect analogy with the classical case. Volume-law entanglement of eigenstates is already a key feature of the strong ETH. In addition, for a wide range of specific Hamiltonian classes, Ref.~\cite{Bianchi_Hackl_Kieburg_Rigol_Vidmar_2022} determine that volume-law entanglement is highly typical. In this context, a key question will be how (many-body) quantum scars play into this, i.e. when some eigenstates of an apparently chaotic Hamiltonian do not satisfy the ETH. Such eigenstates can have different entanglement scaling~\cite{Serbyn_2021,Moudgalya_2022}. 

Finally, the typicality bounds presented here have foundational implications regarding the prevalence of Markovianity in nature, which we now discuss in our concluding remarks.

\section{Conclusions}
Starting from a theory-independent notion of chaos as the butterfly effect, in this work we have identified a series of conditions on quantum chaos (Section~\ref{sec:mainresult}), with the strongest being measured by the Butterfly Flutter Fidelity, shown that these proposed conditions generalize and hence unify a range of previous diagnostics (Section~\ref{sec:prev_chaos}), and shown how a number of mechanisms that lead to quantum chaos (Section~\ref{sec:typ_process}). This framework is summarized in Fig.~\ref{fig:internal-external}. 

The results of Refs.~\cite{FigueroaRomero_Modi_Pollock_2019,FigueroaRomero_Pollock_Modi_2021} state that processes generated from random dynamics are highly likely to be almost Markovian, for large enough systems. Paradoxically, Thm.~\ref{thm:rand_process_zeta} states that perturbations in such processes have a strong impact in the environment. That is, most random processes are chaotic. To make sense of this, note that Markovianity is with respect to a restricted measurement space, often taken to be small. Then, when a process is highly chaotic, a butterfly impacts the future pure state in such a strong and non-local way, that for any small subsystem it looks entirely noisy and hence Markovian on this future measurement space. 
Given that in nature chaos is the rule, not the exception, this helps address the fundamental question of why Markovian phenomena are so prevalent in nature~\cite{Dowling2021,finitetime,FigueroaRomero_Modi_Pollock_2019,FigueroaRomero_Pollock_Modi_2021,Strasberg2022}: chaotic processes on large systems look Markovian with respect to interventions on a much smaller subsystem. We anticipate that this may be a key factor in understanding the emergence of thermalization from underlying quantum theory; in particular the necessary loss of memory in the process of thermalization. It would interesting to investigate this further in a future work. 

This is related to Refs.~\cite{Bremner_Mora_Winter_2009,Gross_Flammia_Eisert_2009}, where it is shown that states which are too entangled - which is most states in the full Hilbert space - are not useful for measurement-based quantum computation. For such states which are too entangled, one can replace the local statistics with `coin flipping' -- purely classical stochasticity. It is, however, very difficult to produce large, highly entangled states. Usefulness is not necessarily proportional to the resources required to create a state. Our results in Section~\ref{sec:typ_process} are a spatiotemporal version of this. Most processes are so chaotic, that future measurements statistics constitute purely classical noise. What is needed, then, to have complex, quantum non-Markovian phenomena? We propose that it is `between order and chaos' where these interesting processes lie~\cite{Crutchfield2012}. This would correspond to processes with sub-volume-law (logarithmic) spatiotemporal entanglement scaling. This is intrinsically tied to criticality in the spatial setting, and the Multiscale Entanglement Renormalization Ansatz (MERA) tensor network~\cite{Vidal2007}. Current research explores a process tensor ansatz, inspired by MERA, structurally exhibiting long-range (polynomially decaying) temporal correlations~\cite{TeMERA2023}.

A relevant problem which we have not tackled in this work is the question of how (if) classical chaos emerges from quantum chaos in some limit. While historically this was the main motivation for understanding quantum chaos~\cite{BerryTabor1977,Haake2018-cs,reichl2021transition}, here we have developed a genuinely quantum notion of chaos, of interest for the wide range of phenomena and modern experiments in many-body physics with no classical analogue. It is therefore an open question how exactly to connect this to the classical picture. Modern notions of the transition to classicality may be integral to understanding this, such as quantum Darwinism~\cite{Zurek2003} or classical stochasticity arising from quantum theory~\cite{strasberg2019,Milz2020prx,Strasberg2022}. Related to this is Ref.~\cite{Leone2021quantumchaosis}, where it is shown that circuits generated solely by Clifford gates, or doped with only a few non-Clifford gates, are not chaotic according to a signature based off a generalized OTOC. It would be interesting to check what kind of entanglement structure a (doped) Clifford circuit has, that is whether this statement is consistent with the structure of chaos we have revealed in this work. This would have implications regarding whether any chaotic quantum process, satisfying the strongest condition \ref{c3}, can be simulated classically.


It is difficult to directly convert from classical to quantum chaos, due to the linearity of isolated quantum mechanics. The novelty of our approach is that it treats chaos itself as a primitive concept, independent of whether we adopt a classical or quantum formalism. Classically, this reduces to a non-linearity of the dynamics in phase space. On the quantum side of things, we have shown that spatiotemporal entanglement structure directly satisfies this principle: perturb a small part of a system in the past, and see a complex, non-local effect in the future. From this realization, we have shown that previous diagnostics fit perfectly within this framework. Further, one can systematically compare our framework with any other quantum chaos diagnostic, and use the new metrics to tackle a wide range of relevant problems in the field of many-body physics.

\begin{acknowledgments}
We thank S. Singh for several technical discussions. KM thanks A. Gorecka for conversations that distilled some key ideas. ND is supported by an Australian Government Research Training Program Scholarship and the Monash Graduate Excellence Scholarship. KM acknowledges support from the Australian Research Council Future Fellowship FT160100073, Discovery Projects DP210100597 and DP220101793, and the International Quantum U Tech Accelerator award by the US Air Force Research Laboratory.
\end{acknowledgments}


%

\appendix
\renewcommand{\thesubsection}{\Roman{subsection}}

\section{The Process Tensor} \label{ap:processes}
Here we supplement the details of the background Section~\ref{sec:processes}, in order to describe how the process tensor, a familiar object in open quantum systems, can be derived from the pure process tensor $\ket{\ups}$. In summary, the process tensor $\ups_B$ corresponds to the reduced, generally mixed state description of the pure process $\ket{\ups}$, when the final state on the space $\mc{H}_R$ is traced over at the end.

Measurement is necessarily invasive in quantum mechanics. Therefore, to construct such a multitime description we need to represent quantum measurements in a way which includes the resultant state. Arbitrary interventions are defined by the action on $\mc{H}_S$ by external \emph{instruments}, which mathematically are trace-non-increasing, completely positive (CP) and time independent maps, $ \mc{A}$. If a instrument is also trace-preserving (TP), then it is deterministic; e.g. a unitary map or a complete measurement (POVM). If an instrument is trace-decreasing, then it is non-deterministic; e.g. a particular measurement result. The trace of the outgoing state corresponds to the probability of this outcome occurring out of a complete measurement described by the set $\mc{J}=\{\mc{A}_{x_i}\}$,
\begin{equation} \label{eq:single_time_instrument}
    \mathbb{P}(x_j|\mc{J})=\tr[\rho^\prime]:=\tr[\mc{A}_{x_j}(\rho)],
\end{equation}
Note that in this work, calligraphic font Latin letters will generally be used for such superoperators - that is a map of a (density) operator - while standard font uppercase Latin or lowercase Greek letters will be used for operators (matrices).

Similarly, for multiple, consecutive interactions of a single quantum system at different times, the (in general subnormalized) outgoing state is 
\begin{equation}
    \rho^\prime= \tr_E[\mc{U}_k \mc{A}_{x_k} \mc{U}_{k-1} \dots \mc{U}_1 \mc{A}_{x_1} (\rho(t_0))  ] \label{eq:PT}
\end{equation}
 where $\mc{U}_j(\sigma) := \ex^{-i H (t_j-t_{j-1}) } \sigma \ex^{i H (t_j-t_{j-1}) }$ is the unitary superoperator describing the dilated system-environment ($\mc{H}_S \otimes \mc{H}_E$) evolution, and the trace is the partial trace over the environment ($\mc{H}_E$). For non-deterministic instruments, the trace of this final state gives the probability of measuring a sequence of outcomes $x_1,x_2,\dots,x_k$.

For rank-one instruments as considered throughout this work, $\mc{A}_{x_i}(\cdot) = A_{x_i} (\cdot) A_{x_i}^\dg $, where $A_{x_i}$ are defined as in Eq.~\eqref{eq:pure_process}. That is, by rank-one we mean that there is only a single Kraus operator for the CP map.

From $\ket{\ups}$ in Eq.~\eqref{eq:PurePT}, one can define the reduced state only on the `butterfly space' of interventions $B$,
\begin{equation} \label{eq:upsB}
    \ups_B := \tr_{R}[\ket{\ups}\bra{\ups}].
\end{equation}
 $\ups_B$ is called a process tensor~\cite{processtensor,processtensor2,milz2020quantum} (also called a `quantum comb'~\cite{Chiribella2008,Chiribella2009} or `process matrix'~\cite{Costa2016,Oreshkov2016}), and is used for example in determining exact, unambiguous multitime properties of open quantum systems. For one, it admits a multitime Born rule~\cite{Chiribella2009,CostaBornRule}.,
 \begin{equation}\label{eq:born}
    \mathbb{P}(x_k,\dots,x_1|\mc{J}_k,\dots,\mc{J}_1 ) = \tr[\rho^\prime] =: \tr[\ups_B \mathbf{A}_{\vec{x}}^\mathrm{T} ].
\end{equation}
where we have recalled the definition of the outgoing reduced state $\rho^\prime$ in Eq.~\eqref{eq:PT}, and defined the instrument tensor $\mathbf{A}_{\vec{x}}^\mathrm{T}$, equal to $ \ket{\vec{x}}\bra{\vec{x}}$ for time-local rank-one instruments. It is not relevant here to define  $\mathbf{A}_{\vec{x}}^\mathrm{T}$ in full generality; see e.g. Ref.~\cite{milz2020quantum} for more details.

 Regarding normalization of the process tensor, and of instruments, note that to get well defined probabilities in the generalized Born rule \eqref{eq:born}, we take the instruments to be supernormalized and the process to be normalized. For example, a sequence of unitary maps should give unit probability for any process tensor $\ups_B$. This is immediate to check for example for the maximally noisy process which has a uniformly mixed reduced Choi state on $\mc{H}_B$, 
\begin{equation}
    \tr[\mathbf{U}_k^\mathrm{T} \ups^{(\mathbb{H})}_B  ] = \frac{1}{d_S^{2k}}\tr[\mathbf{U}_k ] \overset{!}{=} 1.
\end{equation} 
This locally noisy process is relevant to this work, see the discussion around Eq.~\eqref{eq:typ_process}, and Section~\ref{sec:typ_process}.

\section{Proof of Results from Section~\ref{sec:mainresult}} \label{ap:proofs1}
Here we restate all formal results from Section~\ref{sec:mainresult}, together with a proof for each. 
\BRent*
\begin{proof}
    Assume $\ket{\vec{x}}$ and $\ket{y}$ are the Choi states of two multitime butterflies, with $\braket{\vec{x}|\vec{y}} = 0 $. First assume that $|\braket{\ups_{R|\vec{x}} | \ups_{R|\vec{y}}}|^2 =\epsilon \approx 0$. Then
    \begin{align}
        |\braket{\ups_{R|\vec{x}} | \ups_{R|\vec{y}}}|^2 &= |\tr_R[ \braket{\vec{y}| \ups_{BR} } \braket{\ups_{BR}| \vec{x}}]|^2\\
        &= |\braket{\vec{y} | \ups_B | \vec{x}}|^2=\epsilon \label{eq:orthog}
    \end{align}
    given that $\ket{\vec{x}}$ and $\ket{\vec{y}}$ are projections on the butterfly space alone. Given that $\ket{\ups}$ is a pure state, here we have used its Schmidt decomposition as in Eq.~\eqref{eq:schmidt},
    \begin{equation}
        \begin{split} \label{eq:schmidt1}
            \langle \ups | \vec{x}  \rangle \langle \vec{y} | \ups \rangle=& \sum_i \lambda_i \bra{\ups_B^{(\alpha_i)}}\bra{\ups_R^{(\beta_i)}} \left(| \vec{x}  \rangle_B \langle \vec{y} |_B \right)  \\
            & \quad \times \sum_j \lambda_j^* \ket{\ups_B^{(\alpha_j)}} \ket{\ups_R^{(\beta_j)}}\\
            =&\sum_i |\lambda_i|^2   \langle \vec{y}  \ket{\ups_B^{(\alpha_i)}} \bra{\ups_B^{(\alpha_i)}} \vec{x}  \rangle \\
            =& \bra{\vec{y}} \ups_B \ket{\vec{x}}.
        \end{split}
    \end{equation}
    Now, if this is true for any orthogonal butterflies $\ket{x}$ and $\ket{y}$, the only solution to Eq.~\eqref{eq:orthog} is if $\ups_B \propto\id + \epsilon \Omega$, where $\Omega$ is traceless with bounded operator norm, such that $\| \epsilon \Omega\|\leq \epsilon$. In the other direction, if $\ket{\ups_{BR}}$ is approximately maximally entangled, then $\ups_B = \id + \epsilon \Omega$ and so 
    \begin{equation}
        |\braket{\vec{y} | \ups_B | \vec{x}}|^2 = |\braket{\vec{y} | \id + \epsilon \Omega| \vec{x}}|^2 = \mc{O}(\epsilon^2) \approx 0.
    \end{equation}
\end{proof}

\randButt*

\begin{proof}
From the Schmidt decomposition in the splitting $BR_1:R_2$, the analogue of Eq.~\eqref{eq:schmidt1}, one can equivalently write 
\begin{equation}
    |\langle \vec{y} | \ups_{B   R_1} | \vec{x} \rangle |^2 =|\braket{\ups_{R_2 | \vec{x}} |\ups_{R_2 | \vec{y}}}|^2. 
\end{equation}
We will then use the following result, which we prove below using an application of Weingarten Calculus. For the Haar random sampling of two orthogonal projections $\{\ket{\vec{x}}, \ket{\vec{y}} \}$, the expectation value is
    \begin{align} 
        \mathbb{E}_{\mc{X} \sim \mathbb{H}}\left\{  {|\langle \vec{y} | \ups_{B   R_1} | \vec{x} \rangle |^2} \right\} &\underset{\vec{x} \perp \vec{y}}{=}\frac{d_{B    {R_1}}^2(\pur  -{1}/{d_{B    {R_1}}})}{d_{B    {R_1}}^2-1} \nn \\
        & \approx \pur -{1}/{ d_{B    {R_1}}}, \label{eq:2folda}
    \end{align}
    where we take $d_{B    {R_1}}^2-1 \approx d_{B    {R_1}}^2 $ in the second line.

    Two Haar random, orthogonal states $\{ \ket{\vec{x}},\ket{\vec{y}} \}_\mathbb{H}$ can be generated from any other, e.g. computational, orthogonal states $\{ \ket{0},\ket{1} \}$, given a random unitary matrix $U \in \mc{H}$, by identifying $\ket{\vec{x}} = U \ket{0}$ and $\ket{\vec{y}} = U \ket{1}$. Define $\Phi^{(2)}_\mathbb{H}(A)$ to be the $2-$fold average of the tensor $A \in \mc{H} \otimes \mc{H}$. We may use Weingarten Calculus to compute it explicitly~\cite{Roberts2017-en}
     \begin{align}
            &\Phi^{(2)}_\mathbb{H}(A):= \int dU U \otimes U (A) U^\dg \otimes U^\dg  \\
            &\,= \frac{1}{d^2-1} \Big(\id \tr[A] + \mathds{S} \tr[\mathds{S} A] -\frac{1}{d}\mathds{S} \tr[A] -\frac{1}{d} \id \tr[\mathds{S} A] \Big),\nn
    \end{align}
    where $\mathds{S} $ is the swap operation. By choosing $A \equiv \ups_{B    R_1} \otimes \ups_{B   R_1}$, we can rewrite the left hand side of Eq.~\eqref{eq:2folda} as 
    \begin{align} \nonumber
        LHS =& \int dU_{\mathbb{H}}\braket{\vec{x}| U \ups_{B    R_1} U^\dg |\vec{y}} \braket{\vec{y}| U \ups_{B    R_1} U^\dg |\vec{x}}|\nonumber\\
        =& \bra{\vec{x}} \bra{\vec{y}} \left(\int dU_{\mathbb{H}} U^{\otimes 2} (\ups_{B    R_1} \otimes \ups_{B    R_1}) U^{\dg \otimes 2} \right) \ket{\vec{y}} \ket{\vec{x}} \nonumber\\
        =&\bra{\vec{x}} \bra{\vec{y}} \Phi^{(2)}_\mathbb{H}(\ups_{B    R_1} \otimes \ups_{B    R_1})\ket{\vec{y}} \ket{\vec{x}} \label{eq:WeinCalc}\\
        =&\bra{\vec{x}} \bra{\vec{y}}\Big(\frac{1}{d_{B    {R_1}}^2-1} \Big(\id \tr[\ups_{B    R_1} \otimes \ups_{B    R_1}] \nn \\
        +& \mathds{S} \tr[\mathds{S} (\ups_{B    R_1} \otimes \ups_{B    R_1})]  -\frac{1}{d_{B    {R_1}}}\mathds{S} \tr[\ups_{B    R_1} \otimes \ups_{B    R_1}]  \nonumber\\
        -&\frac{1}{d_{B    {R_1}}} \id \tr[\mathds{S} (\ups_{B    R_1} \otimes \ups_{B    R_1})] \Big)  \ket{\vec{y}} \ket{\vec{x}}.\nonumber
    \end{align}
    For the first trace in this equation, we can directly evaluate
    \begin{equation}
            \tr[\ups_{B    R_1} \otimes \ups_{B    R_1} ]=\tr[\ups_{B    R_1} ]^2=1
    \end{equation}
    For the second trace, 
    \begin{equation}
            \tr[\mathds{S} (\ups_{B    R_1} \otimes \ups_{B    R_1})] = \tr(\ups_{B    R_1} ^2 ).
    \end{equation}
    Then, by the orthogonality of the butterflies, $\bra{\vec{x}} \bra{\vec{y}} \id \ket{\vec{y}} \ket{\vec{x}} = 0$ while $\bra{\vec{x}} \bra{\vec{y}} \mathds{S} \ket{\vec{y}} \ket{\vec{x}} = d_{B    {R_1}}^2 $, and so only the second and third terms in the final line of \eqref{eq:WeinCalc} survive. Using this, we arrive at Eq.~\eqref{eq:2folda}.

    We will now use this to prove Theorem~\ref{thm:ran_butterfly}. We can directly apply Eq.~\eqref{eq:2folda} to Markov's inequality,\textsuperscript{\footnote{Markov's inequality states that for any non-negative random variable $X$ with mean $\mathbb{E}(X)$, and for any $\delta > 0 $, $\mathbb{P}\{X\geq \delta \} \leq  \mathbb{E}(X) / \delta $.}}
    \begin{align}
        \mathbb{P}_{\mc{X} \sim \mathbb{H}}\Big\{ |\braket{\ups_{R_2 | \vec{x}} |\ups_{R_2 | \vec{y}}}&|^2 \geq \delta \Big\} =\mathbb{P}\left\{ {|\langle \vec{y} | \ups_{B   R_1} | \vec{x} \rangle |^2}  \geq \delta \right\}\nonumber\\
        &\leq \frac{ \mathbb{E}_{\mc{X} \sim \mathbb{H}}\left\{ {|\langle \vec{y} | \ups_{B   R_1} | \vec{x} \rangle |^2}  \right\}}{\delta} \\
        &\approx \frac{\pur -1/(d_{B    {R_1}}) }{\delta  }. \nn
    \end{align}
    \end{proof}

    \zetaST*

    \begin{proof}
    We will prove this via the contrapositive statement. Assume $\ket{\ups}$ has area-law spatiotemporal entanglement. Then the conditional states $\ket{\ups_{R|\vec{x}}}$ and $\ket{\ups_{R|\vec{y}}}$ can be represented efficiently by a MPS. Then they can be prepared from an auxiliary product state $\ket{\psi_0}$ using efficient unitaries $V_{\vec{x}}$ and $V_{\vec{y}}$, 
    \begin{equation}
        \ket{\ups_{R|\vec{x}}} = V_{\vec{x}} \ket{\psi_0} \quad \text{and }\ket{\ups_{R|\vec{y}}} = V_{\vec{y}} \ket{\psi_0}
    \end{equation}
    i.e. both $V_{\vec{x}}$ and $V_{\vec{y}}$ have an MPO representation with a constant bond dimension. It directly follows that for $V=V_{\vec{x}}^\dg V_{\vec{y}} \in \mc{R}_{\text{MPO}}$,
    \begin{equation}
        \zeta(\ups)=1.
    \end{equation}
\end{proof}

\stateChaos*
\begin{proof}
    Consider that the Butterfly Flutter Fidelity is $\zeta \approx 1$, for any two butterflies with Choi states $\ket{\vec{x}_i}$ and $\ket{\vec{x}_j}$ from some basis of butterflies $\{ \ket{\vec{x}_i}\}_{i=1}^{d_B^2}$. This means a simple (low depth) unitary $V_{ij}$ in Eq.~\eqref{eq:state_chaos} approximately `corrects' the final states, 
    \begin{equation}
         |\bra{\ups_{R|\vec{x}_i}} V_{ij}  \ket{\ups_{R|\vec{x}_j}}|^2 \approx 1,
    \end{equation}
    where as usual we define a simple unitary as one with an efficient MPO representation, such that it cannot create volume-law entanglement from an area-law state. Now assume that $\ket{\ups}$ is volume-law spatiotemporally entangled. In particular, this means that the final states $\ket{\ups_{R|\vec{x}_i}} $ and $\ket{\ups_{R|\vec{x}_j}} $ are both volume-law entangled quantum states. As $\ups_{R|\vec{x}_i}$ and $\ups_{R|\vec{x}_j}$ are connected via a simple circuit, we can write each of them in terms of some intermediate state
    \begin{equation}
        \ket{\ups_{R|\vec{x}_i}} = V_i \ket{R_0}
    \end{equation}
    where $V_i$ is a simple unitary, but $\ket{R_0}$ is volume-law entangled. As this is true for any $\ket{\vec{x}} \in \{ \ket{\vec{x}_i}\}_{i=1}^{d_B^2}$, this means that the full purified process can be written as 
    \begin{align}
         \ket{\ups}&= \sum_m \lambda_m (\id_B \otimes V_m) \ket{B_m R_0}\\
         &=\sum_m \lambda_m (\ket{B_m}\bra{B_m} \otimes V_m) \ket{B_0 R_0}
    \end{align}
    where by gauge freedom, $\ket{B_0 R_0}$ is the initial state of the process.\textsuperscript{\footnote{By this we mean that the process $\ket{\ups}$ has a freedom in choosing what the ``actual'' initial state is. One has a gauge freedom in choosing any initial state that would tomographically lead to the same state $\ket{\ups}$.}} However, $\sum_m \lambda_m (\ket{B_m}\bra{B_m} \otimes V_m)$ is a simple dynamics, in that it can be simulated efficiently with an MPO.   
\end{proof}

\section{Proofs from Section~\ref{sec:prev_chaos}} \label{ap:proofs2}

\Sdy*
\begin{proof}
    For a given bond dimension $\chi$ of a process $\ket{\ups}$, in the bipartition $B:R$, the entropy of a reduced state is upper bounded by that of a uniform eigenvalue distribution,
    \begin{equation}
        S(\ups_B) \leq -\sum_{i=1}^\chi \frac{1}{\chi} \log (\frac{1}{\chi})=\log(\chi).
    \end{equation}
     Then, recalling the characteristic scaling of bond dimension for different entanglement classes (see Section~\ref{sec:processes}), for large $k$, $S(\ups)$ is bounded by
     \begin{equation}
         S_{\mathrm{Dy}}(\Upsilon_{B}^\mathrm{(vol)}) \leq \underset{k\to \infty}{\lim} \frac{\log(d_S^{2k})}{k} = 2\log(d_S)
     \end{equation}
     for volume-law entanglement scaling.
    Otherwise, for area-law and sub-volume-law respectively
    \begin{align}
        &S(\Upsilon_{B}^{\mathrm{(area)}}) \leq \underset{k\to \infty}{\lim} \frac{\log(D)}{k} = 0, \text{ and} \\
        &S(\Upsilon_{B}^{\mathrm{(sub-vol)}}) \leq \underset{k\to \infty}{\lim} \frac{\log(d_S^{\log(k)})}{k} = 0,
    \end{align}
    where the second limit is computed via L'H\^{o}pital's rule. Hence $S(\Upsilon_{B})$ can only be non-zero for a volume-aw entangled process.
\end{proof}

\pesin*
\begin{proof}
    From the definition of the quantum $2-$R\'enyi entropy, 
    \begin{align}
        S^{(2)}(\ups_B)&= - \log \big( \tr[\ups_B^2] \big) \nn \\
        &=-\log \Big( \frac{1}{d_B^2}\sum_{\vec{x},\vec{y}}^{d_B^{2}}  |\bra{\vec{x}} \ups_B \ket{\vec{y}} |^2 \Big) 
    \end{align}
    where we have taken into account the supernormalization of instruments as in Eq.~\eqref{eq:normalization}. Assuming that the butterfly flutters are unitary, $ |\bra{\vec{x}} \ups_B \ket{\vec{y}} |^2 = 1$, and so
    \begin{align}
         S^{(2)}&(\ups_B) \nn\\
         &=-\log \Big( \frac{1}{d_B^2}\sum_{\vec{x}\neq \vec{y}}^{d_B^{2}-d_B}  |\bra{\vec{x}} \ups_B \ket{\vec{y}} |^2  - \sum_{\vec{x}}^{d_B}  |\bra{\vec{x}} \ups_B \ket{\vec{x}} |^2 \Big)  \nn \\
         &=-\log \Big( \frac{1}{d_B^2} \big(\sum_{\vec{x}\neq \vec{y}}^{d_B^{2}-d_B}  |\bra{\vec{x}} \ups_B \ket{\vec{y}} |^2  - d_B \big) \Big)) 
    \end{align}
\end{proof}

\TMI*
\begin{proof}
    Assuming that $\ket{\ups}$ is volume-law entangled, then $S(\ups_B) \approx \log(d_B)$, $ S(\ups_{R_1}) \approx d_{R_1}$, and in particular $I(B : R_1) = I(B : R_2) \approx 0$. Then from the definition of quantum mutual information, we have that
    \begin{align}
         I_3(B:R_1:R_2) =& I(B : R_1) + I(B : R_2) - I(B : R)\nn \\
         \approx&- I(B : R) \\
         =& - S(\ups_B) - S(\ups_{R}) + S(\ups_{B R}) \nn\\
         \approx& - 2 \log(d_B) \nn 
    \end{align}
    where we have used that $S(\ups_{B R}) = 0$ as it is an isolated system, and that $S(\ups_{B}) = S(\ups_{R}) = \log(d_B)$, given that $d_B < d_R$ and that for unitary dynamics, information is preserved for a single step process.
\end{proof}

\LOEbff*
\begin{proof}
    For the two butterfly flutters $X$ and $\id$, the Butterfly Flutter Fidelity \eqref{eq:state_chaos} is equal to
    \begin{equation}
        \zeta(\ups) := \tr[V U \ket{\psi}\bra{\psi} X^{\dg} U^\dg].
    \end{equation}
    If we enforce that $\zeta(\ups ) \overset{!}{=} 1$ for any initial state, then this means that
     \begin{equation}
        X^\dg U^\dg V U \overset{!}{=} \id  \label{eq:align}        
    \end{equation}
    which directly implies that the correction unitary is equal to
    \begin{equation}
        V = U X U^\dg= X_{-t}. \nn
    \end{equation}
    Now, as we have assumed that $V \in \mc{R}_{\mathrm{MPO}}$, this means that as an efficient MPO, $X_{-t}$ has bond dimensions that scale at most logarithmically, for any $t$. This means also that the MPO representation of $X_{t}$ has restricted bonds. Finally, this equivalently implies that the Choi state of the operator has a restricted bond dimension, for any $t$,
    \begin{equation}
        S(\ket{X_t}) \sim \mc{O}(\log(t)).
    \end{equation}
\end{proof}

\section{Construction of a Local Basis of Multitime Unitary Instruments} \label{ap:basis}
To construct a basis of unitary butterflies, one can do the following procedure:
\begin{enumerate}
    \item Choose any orthonormal basis of unitary matrices $\{\sigma_{i}^{(\ell)} \}_{i=1}^{d_S^2}$ for each time $t_\ell$, 
    \begin{equation} \label{eq:orthonormal}
        \tr[\sigma_i^{(\ell)} \sigma_j^{(\ell)} ] =d_S \delta_{ij}
    \end{equation}
    For example, this could be the generalized Pauli matrices.\textsuperscript{\footnote{Generalised Pauli operators on a $d-$dimensional space are defined by the generators~\cite{Roberts2017-en}
    \begin{equation}
        X \ket{n} = \ket{n+1}, \quad \text{and} \quad Z \ket{n}= \exp{[2\pi i n/d]}\ket{n}.
    \end{equation}}} These are taken to act on the system Hilbert space $\mc{H}_{S(t_\ell)}$.
    \item Using single-time CJI (see Section~\ref{sec:processes} and Fig.~\ref{fig:CJI}), map these operators to states by having them act on half of a maximally entangled state on the doubled space $\mc{H}_{S(t_\ell)} \otimes \mc{H}_{S(t_\ell)^\prime}$,
    \begin{equation}
        \ket{{x}_i^{(\ell)}}:=(\sigma_{i}^{(\ell) } \otimes \id) \ket{\phi^+}_{S(t_\ell) S^\prime(t_\ell)}.
    \end{equation}
    the orthonormality condition \eqref{eq:orthonormal} carries over to this representation,
    \begin{equation}
        \braket{x_i^{(\ell)}|x_j^{(\ell)}}=d_S\delta_{ij}.
    \end{equation}
    \item Do this for every time to arrive at a full basis for the butterfly space $\mc{H}_B \equiv  \mc{H}^\mathrm{o}_{S(t_k)} \otimes \mc{H}^\mathrm{io}_{S(t_{k-1})} \dots \otimes \mc{H}^\mathrm{io}_{S(t_2)} \otimes \mc{H}^\mathrm{io}_{S(t_1)}$, together with some portion of the final state, $\mc{H}_{  R_1}$,
    \begin{equation}
        \{ \ket{\vec{x}_i^\prime}\}_i^{d_{B    R_1}^2}:= \left\{\ket{{x}_{i_0}^{(0)}} \otimes \ket{{x}_{i_1}^{(1)}}  \otimes \dots \otimes \ket{{x}_{i_k}^{(k)}} \right\}_{i_0,i_1,\dots,i_k =1}^{d_{   R_1},d_S,\dots,d_S}.\nn
    \end{equation}
    This basis is local in time. 
\end{enumerate}

\section{The Lindblad-Bernoulli Shift} \label{sec:LB}
We here consider a somewhat pathological example which is not chaotic, yet looks so for many of the usual diagnostics. First proposed by Lindblad~\cite{Lindblad1986chaos} as a quantum counterpart to the Bernoulli shift classical stochastic process, the \emph{Lindblad-Bernoulli shift} describes a discrete dynamics which cyclically permutes an $n-$body $\mc{H}_S \otimes \mc{H}_E$ state, together with some local unitary $L$ on the $\mc{H}_S$ state,
\begin{equation}
    \mc{U}(\phi_1 \otimes \phi_2  \otimes \dots \otimes \phi_n) = (L\phi_2 L^\dg) \otimes \dots \otimes \phi_n \otimes \phi_1.
\end{equation}
We take the total system size $n$ to be large, compared to the number of time steps $k$ which we will consider, such that states fed into the process are essentially `lost' to the environment. Additionally, we take the total initial $\mc{H}_S \otimes \mc{H}_E$ state to be in a product state, $\ket{\psi}_{SE} = \phi_1 \otimes \dots \otimes \phi_n $.

Intuitively this system is highly regular. It simply cycles through different states in a straightforward manner, without any scrambling of local information. Any information put in to the process on the system level will never return from the large environment $\mc{H}_E$. Indeed, while this process has a maximal $B:R$ entanglement \ref{c1}, it is not volume-law spatiotemporally entangled \ref{c2} as the Choi state is a product state, which we will now show. 

For a single time step, the reduced map $\Lambda$ is uncorrelated with any other time step, where 
\begin{align}
    \Lambda( \rho) := \tr_{E} (\mc{U}_i(\ket{\psi_{SE}} \bra{\psi_{SE}}) ).
\end{align}
Then by the CJI, this channel acting on one-half of a maximally entangled state gives its Choi representation 
\begin{align}
    \Lambda_i \otimes \id \left( \ket{\phi^+} \bra{\phi^+} \right) = L \phi_i L^\dg \otimes \frac{\id}{d_S} ,
\end{align}
Then the total process Choi state, for $k<n$ time steps, is
\begin{equation} \label{eq:LB_Choi}
    \ups^{(\mathrm{LB})} = \bigotimes_{i=1}^k \left(L \phi_i L^\dg \otimes \frac{\id}{d_S} \right).
\end{equation}
Looking at the Peres-Loschmidt Echo for this example,
\begin{align}
     \zeta^\prime(\ups^{(\mathrm{LB})},\mc{X}_{\mathrm{LE}})|&= |\langle W_{\epsilon}^{\otimes k} | \ups^{(\mathrm{LB})} | \id^{\otimes k} \rangle|^2 \nn \\
     &= \prod_i^k |\langle W_{\epsilon} | L \phi_i L^\dg \otimes \frac{\id}{d_S} | \id \rangle|^2\nn \\
     &=\prod_i^k |\bra{\phi^+} (W_{\epsilon}^\dg \otimes \id) ( L \phi_i L^\dg  \otimes \frac{\id}{d_S}) \ket{\phi^+} |^2  \nn \\
     &=\prod_i^k |\tr[L^\dg W_{\epsilon}^\dg L \phi_i   ]|^2  \label{eq:LB_LE}\\
     &\leq \prod_i^k |\tr[L^\dg W_{\epsilon}^\dg L]  \tr[\phi_i]   ]|^2 \nn \\
     &= |\tr[W_{\epsilon}]  |^{2k} =(1-\epsilon )^{2k} d_B \approx 0. \nn
\end{align}
where $\ket{\phi^+}$ is the unnormalized, $d_S^{2}$ dimensional maximally entangled state on the space $\mc{H}^\mathrm{o}_{S} \otimes \mc{H}^\mathrm{i}_{S}$, which is equal to the vectorised identity matrix. In the final line we have used that for positive operators, $\tr(XY) \leq \tr(X) \tr(Y) $, that $\phi_i$ is a density matrix, and the weakness of the unitary perturbation $ |\langle W_{\epsilon} | \id \rangle| = \tr[W_{\epsilon}] = (1-\epsilon )d_S $. Recall that the prime on $\zeta$ means we neglect any correction unitary $\mc{H}_R$, setting it to $V = \id$. From the smallness of Eq.~\eqref{eq:LB_LE}, we see that $\zeta^\prime$ for the Peres-Loschmidt Echo misclassifies this example as chaotic. It detects that the butterfly orthogonalizes the entire final state on $\mc{H}_R$, but not that this effect scrambles throughout $\mc{H}_R$. 

Adding a correction unitary $V$ of consisting of a single layer of $k$ entirely local gates, the full Butterfly Flutter Fidelity given by $\zeta$ in Eq.~\eqref{eq:state_chaos} will correctly detect the Lindblad-Bernoulli shift as a regular dynamics, with $\zeta = 1$. This is because $V$ will act to align the part of the environment where the perturbation effect resides. For simplicity, we consider $\zeta$ where the butterflies $\ket{\vec{x}}$ and $\ket{\vec{y}}$ correspond to complete measurements and independent preparations at each time step, inputting pure orthogonal states $\psi_i^{x}$ or $\psi_i^{y}$ at time $t_i$. An analogous computation to Eq.~\eqref{eq:LB_LE}, similarly reveals the apparent chaoticity of the Lindblad-Bernoulli shift according to $\zeta$ if the correction unitary is chosen to trivially be the identity matrix. However, if we allow $V$ to be a unit depth circuit, from Eq.~\eqref{eq:state_chaos} we arrive at
\begin{align}
     \zeta(&\ups^{(\mathrm{LB})})= \underset{{V\in \mc{R}_{\mathrm{MPO}} }}{\mathrm{sup}}  \left(| \langle \ups_{R|\vec{x}}^{(\mathrm{LB})} | V | \ups_{R|\vec{y}}^{(\mathrm{LB})} \rangle|^2 \right) \nn \\
     &= \underset{{V\in \mc{R}_{\mathrm{MPO}}}}{\mathrm{sup}}  \left(| \langle \psi^\prime_{SE|\vec{x}} | V | \psi^\prime_{SE | \vec{y}} \rangle|^2 \right) \\ 
     &= \underset{{V\in \mc{R}_{\mathrm{MPO}}}}{\mathrm{sup}} |\bra{\psi_1^{x} ... \psi_k^{x} \phi_{k+1} ... \phi_n } V \ket{\psi_1^{y} ... \psi_k^{y} \phi_{k+1} ... \phi_n }|^2 \nn \\
     &= 1,\nn
\end{align}
where in the final line $V$ is chosen in the supremum to align the $k$ orthogonal sites of the final state with local unitaries, rendering $\ket{\psi_i^x}=\ket{\psi_i^y}$. This value being equal to one is indicative of the regularity of the dynamical system according to \ref{c3}. This example illustrates the advantage of the complete quantum chaos diagnostic defined in Eq.~\eqref{eq:state_chaos}. If the butterfly flutter's effect spreads in a complex way in the final state, then no simple $V$ would be able to completely align the final states. This is what the scrambling criterion \ref{c2} distinguishes. 

Much like the Peres-Loschmidt Echo, Dynamical Entropy cannot tell that the Lindblad-Bernoulli shift is regular. From the Choi representation, Eq.~\eqref{eq:LB_Choi}, we arrive immediately at a non-zero value,
\begin{equation}
    S_{\mathrm{Dy}}^{(k)} (\ups^{(\mathrm{LB})}) = \frac{k S(V \phi_i \otimes \frac{\id}{d_S} )}{k} = \log(d_S).
\end{equation}
In fact, Lindblad originally introduced this as a simple example of a chaotic process, given that it is unpredictable under repeated measurements~\cite{Lindblad1986chaos}. The stronger conditions \ref{c2}-\ref{c3} show directly that he was mistaken, that this dynamics is highly regular.

We note that in Ref.~\cite{Burgarth2021} a similar example can be found to the Lindblad-Bernoulli shift, with dynamics generated by Chebotarev-Gregoratti Hamiltonians. These models would exhibit similar properties of apparent chaoticity according to the Peres-Loschmidt Echo, but for an infinite environment and valid for continuous time evolution. We have examined the simpler Lindblad-Bernoulli Shift here as it is comparably instructive, and to remain within the paradigm of finite-dimensional isolated quantum systems.

\section{Typical Quantum Processes and Their Entanglement Structure} \label{ap:pedro}
Here we explain and utilize the main results from Refs.~\cite{FigueroaRomero_Modi_Pollock_2019,FigueroaRomero_Pollock_Modi_2021}, to prove concentration of measure bounds for typical processes. These results state that processes generated from random evolution -- be it fully Haar random or from an $\epsilon-$approximate $t-$design -- are likely to look Markovian when one only has access to repeated measurements on a small subsystem compared to the full unitarily evolving isolated system. In order to prove this result, Refs.~\cite{FigueroaRomero_Modi_Pollock_2019,FigueroaRomero_Pollock_Modi_2021} argue that in processes generated from random evolution, the Choi state is typically close to being maximally mixed, and therefore has little memory. For us, this means directly that the purified process (as described in Section~\ref{sec:processes}) must be volume-law entangled. The following result will therefore follow rather directly from the proofs of Refs.~\cite{FigueroaRomero_Modi_Pollock_2019,FigueroaRomero_Pollock_Modi_2021}. 

\buttTyp*

\begin{proof}
    We will utilize Levy's Lemma, which states that for some probability measure $\sigma$, and function $f(x)$ with $\delta>0$
    \begin{equation}
        \mathbb{P}_{x \sim \sigma}\left\{f(x) \geq \mathbb{E}_{\sigma}(f) + \delta  \right\} \leq \alpha_\sigma(\delta/L),
    \end{equation}
    where $L>0$ is the Lipschitz constant of $f$, which dictates how slowly varying $f$ is in the measure space $\sigma$. The function $\alpha_\sigma$ is the concentration rate, which we require to be vanishing in increasing $\delta$ to describe a concentration of measure~\cite{Popescu2006}.

    For our purposes, $\sigma$ will either be the full Haar measure $\mathbb{H}$, or an $\epsilon-$approximate $t-$design $\mu_{\epsilon-t}$, and $f(x)$ will be the deviation from maximum spatiotemporal entanglement, $\log(d_{B R_1}) - S^{(2) }(\ups_{B R_1})$. 

    We first review and modify the results of Ref.~\cite{FigueroaRomero_Modi_Pollock_2019} and Ref.~\cite{FigueroaRomero_Pollock_Modi_2021}, from which we define a concentration of measure result for $U_i \sim \mathbb{H}$ and $U_i \sim \mu_{\epsilon-t}$ respectively.

    In both cases, we will arrive at the general form of the concentration of measure,  
    \begin{equation} \label{eq:pedro}
         \mathbb{P}_{U_i \sim \mu}\Big\{ \|\ups_{B    R_1} -\frac{\id}{d_{B    R_1}}\|_{2}^2 \geq {\mc{J}_{\mu}^\prime(\delta)}\Big\} \leq \mc{G}_{\mu}(\delta).
    \end{equation}
    Theorem~\ref{thm:rand_process_zeta} will then follow from Eq.~\eqref{eq:pedro}, which we show at the end. 

    \subsection{Process From Haar Random Evolution} \label{sec:perfect_Haar}
    For dynamics generated by independent Haar random unitaries, given that $d_{B    R_1} = d_S^{2k}d_{   R_1} \approx d_S^{2k+1} < d_E \approx d_R$, the Haar average of the left hand side of the inequality within the brackets of Eq.~\eqref{eq:pedro} is
    \begin{align}
        &\mathbb{E}_{\mathbb{H}}(\| \ups_{B    R_1} -\frac{ \id_{B    R_1}}{d_{B     R_1} } \|_2^2) = \mathbb{E}_{\mathbb{H}}(\tr(\ups_{B    R_1}^2)) - \frac{1}{d_{B    R_1}} \label{eq:betadef}\\
        &\quad =\frac{d_E^2-1}{d_E(d_{SE}+1)} \left( \frac{d_E^2-1}{d_{SE}^2-1} \right)^k + \frac{1}{d_E}-\frac{1}{d_{S}^{2k+1}}\nn \\
        &\quad =:\mc{B} \nn
    \end{align}
    This is proved in the Appendix F in Ref.~\cite{FigueroaRomero_Modi_Pollock_2019}. Our setup is, however, a slightly modified version of this, as our final intervention is size $d_{   R_1}$ rather than $d_S$. In detail, as $\mc{H}_{R_1}$ is the final intervention space, one can modify the result Eq.~\eqref{eq:betadef} by changing the factors $\mc{A}$ and $\mc{B}$ in Eq.~(74) in Ref.~\cite{FigueroaRomero_Modi_Pollock_2019} to,
    \begin{align}
        &\mc{A}\to d_{SE}d_{   R_2}(d_{   R_2}^2+d_{   R_1 }^2 -2 ),\text{ and}\nn\\
        &\mc{B}\to d_{SE} d_{   R_2}(d_{SE}^2-1)
    \end{align}
    where we recall that $   \mc{H}_{R_2}$ is the complement to $\mc{H}_{R_1}$, such that $   \mc{H}_{R_1} \otimes      \mc{H}_{R_2} = \mc{H}_{R}$. Using this to simplify Eq.~(74) in \cite{FigueroaRomero_Modi_Pollock_2019}, we arrive a long, messy expression. However, it has same asymptotic behavior as Eq.~\eqref{eq:betadef}. More precisely, we make the approximation that $ d_{   R_1} \approx d_S \ll d_E$  -- which is valid for the assumption that $d_{   R_1} \ll d_R$. 
    
    In particular, if $d_E \gg 1$, such that $d_E-1 \approx d_E$, then in both cases 
    \begin{equation}
        \mathbb{E}_{\mathbb{H}}(\| \ups_{B    R_1} -\frac{ \id_{B    R_1}}{d_{BR} } \|_2^2) \approx \frac{1}{d_E}.
    \end{equation}

    Now, the concentration rate is the exponential function
    \begin{equation}
        \exp(\frac{-\delta^2(k+1) d}{4 L^2} )
    \end{equation}
    which is proved in the Appendices of Ref.~\cite{FigueroaRomero_Modi_Pollock_2019}. Now, the Lipschitz constants can also be bounded by almost the same quantity as in \cite{FigueroaRomero_Modi_Pollock_2019}, despite here having $f:=\|\rho-\id/d \|_2^2$ compared to $({1}/{2})\|\rho-\id/d \|_1$. This is because we can use that $\| X\|_2^2 \leq \| X\|_2 \leq \| X\|_1$, where we also have an additional factor of $2$ in $L$, given the $(1/2)$ factor in the definition of non-Markovianity $\mc{N}$ in \cite{FigueroaRomero_Modi_Pollock_2019}. We therefore arrive at 
    \begin{equation} \label{eq:concentration}
        \mathbb{P}\left\{\|\ups_{B    R_1} -\frac{\id}{d_{B    R_1}} \|_2^2  >  \mc{B}+\delta \right\} \leq  \exp [-\mc{C} \delta^2]
    \end{equation}
    for $\mc{B}$ defined in Eq.~\eqref{eq:betadef}, and 
    \begin{align}
        \mc{C} &:= \frac{(k+1) d_E d_S (d_S-1)^2}{8 (d_S^{k+1} -1)^2} \nn \\
        &=\frac{(k+1) d_E d_S}{8 (d_S^{k}+d_S^{k-1}+\dots +1)^2} \label{eq:cdef} 
    \end{align}
    defined from considerations above. Again, we have taken the approximation that $d_{   R_1}\approx d_S$.

    \subsection{Process From approximate unitary $t-$designs}
     If evolution is instead sampled from an $\epsilon-$approximate $t-$design, we can adapt the results from Ref.~\cite{FigueroaRomero_Pollock_Modi_2021}, which in turn builds on the deviation bounds for $k-$design results of Ref.~\cite{Low_2009}. The concentration of measure bound takes the form
    \begin{align}
        \mc{F}:= \Bigg\{& \left[\frac{16m}{(k+1)d_{SE} }\left( \frac{d_S^{k+1} -1}{d_S-1}\right)^2\right]^m +(\mc{B})^{m}   \\
        &+\frac{\epsilon }{16^m d_{SE}^t}\left(d_E^4 d_S^{2(k+2)}+ \frac{1}{d_S^{2k+1}} \right)\Bigg\}, \label{eq:fdef}
    \end{align}
    For any $0<m\leq t/4$, and $\delta>0$. $m$ can be chosen to optimize this bound, and overall $\mc{F}$ is small for $d_E \gg d_S^{2k+1}=d_{B    R_1}$, for a high $t$. See Ref.~\cite{FigueroaRomero_Pollock_Modi_2021} for details; a proof for this can be found in the methods section. Here we have slightly modified that result, as our object of interest is $\|\ups_{B    R_1} -\frac{\id}{d_{B    R_1}} \|_2^2 $ rather than a non-Markovianity measure. This means we replace $\mc{N}_\smallblackdiamond \to \|\ups_{B    R_1} -\frac{\id}{d_{B    R_1}} \|_2^2 $ and $\delta \to \sqrt{\delta}/2$, and do not have the $d_S^{3(2k+1)}$ factor on the right hand side of Theorem 1 of Ref.~\cite{FigueroaRomero_Pollock_Modi_2021}. In addition, we have made the same approximation as that considered above in Section~\ref{sec:perfect_Haar}, in that we take $d_{   R_1} \approx d_S$, valid for the asymptotic case where $d_E \gg d_{   R_1}$.

    Now, to complete the proof we note that 
    \begin{align}
        &\|\ups_{B R_1} -\frac{\id}{d_{BR_1}} \|_2^2 \geq \mc{J}_\mu^\prime (\delta )\nn \\
        \iff& \tr(\ups_{B R_1}^2) - \frac{1}{d_{BR_1}} \geq \mc{J}_\mu^\prime (\delta )\nn \\
        \iff& S^{(2)}(\ups_{B R_1})  \leq -\log(\frac{d_{B R_1} \mc{J}_\mu^\prime(\delta )}{d_{B R_1}} + \frac{1}{d_{B R_1 }})\\
         \iff& \log(d_{B R_1}) - S^{(2)}(\ups_{B R_1})  \geq \log({d_{B R_1} \mc{J}_\mu^\prime(\delta )} + 1 )\nn
    \end{align}
    where $S^{(2)}$ is the 2-R\'enyi entropy. Subbing this into the probability brackets of Eq.~\eqref{eq:pedro}, we arrive at Eq.~\eqref{eq:statechaos_typicality}, with 
    \begin{equation}
       \mc{J}_\mu(\delta ):=\log({d_{B R_1} \mc{J}_\mu^\prime(\delta )} + 1 )
    \end{equation}
\end{proof}

\end{document}